\setlist{nolistsep,leftmargin=\parindent}
\theoremstyle{plain}
\newtheorem{theorem}{Theorem}[section]
\newtheorem{lemma}[theorem]{Lemma}
\newtheorem*{claim*}{Claim}
\newtheorem{fact}[theorem]{Fact}
\newtheorem*{fact*}{Fact}
\newtheorem{proposition}[theorem]{Proposition}
\theoremstyle{definition}
\theoremstyle{remark}
\newtheorem{remark}[theorem]{Remark}
\newcommand\Nat{\mathbb{N}}
\newcommand\Int{\mathbb{Z}}
\newcommand\col{\operatorname{col}}
\newcommand\can{\operatorname{can}}
\newcommand\inv{\operatorname{inv}}
\newcommand\tw{\operatorname{tw}}
\newcommand\impr{\operatorname{impr}}
\newcommand\dtm{\textsc{dtm}}
\renewcommand{\phi}{\varphi}
\newcommand{\seq}{\textnormal{seq}}
\newcommand{\dec}{\textnormal{dec}}
\newcommand{\wo}{\prec}
\newcommand{\incomp}{\equiv}
\newcommand{\seqwo}{\wo_{\seq}}
\newcommand{\seqincomp}{\incomp_{\seq}}
\newcommand{\decwo}{\wo_{\dec}}
\newcommand{\decincomp}{\incomp_{\dec}}
\newcommand{\seqdecwo}{\wo_{(\dec,\seq)}}
\newcommand{\seqdecincomp}{\incomp_{(\dec,\seq)}}
\newcommand{\proccomp}{\textsc{comparison}}
\newcommand{\stack}{\textnormal{stack}}
\newcommand\Class[1]{\mathchoice{\text{\normalfont\small$\mathrm{#1}$}}{\text{\normalfont\small$\mathrm{#1}$}}{\text{\normalfont$\mathrm{#1}$}}{\text{\normalfont$\mathrm{#1}$}}} 
\newcommand{\Lang}[1]{\ifmmode{\text{\textsc{#1}}}\else\textsc{#1}\fi}
\definecolor{darkred}{rgb}{0.5,0,0}
\definecolor{darkblue}{rgb}{0,0,0.5}
\begin{document}

\title{Canonizing~Graphs~of Bounded~Tree~Width in~Logspace}

\author{Michael Elberfeld \and Pascal Schweitzer}

\date{{\small RWTH} Aachen University\\
  Aachen, Germany\\
  {\small \texttt{\{elberfeld|schweitzer\}@informatik.rwth-aachen.de}}\\[2ex]
  \today}

\maketitle

\begin{abstract} 
  Graph canonization is the problem of computing a unique representative, a
  canon, from the isomorphism class of a given graph. This implies that two graphs
  are isomorphic exactly if their canons are equal. We show that graphs of bounded
  tree width can be canonized by logarithmic-space (logspace) algorithms. This
  implies that the isomorphism problem for graphs of bounded tree width can be
  decided in logspace. In the light of isomorphism for trees being hard for
  the complexity class logspace, this makes the ubiquitous class of graphs of
  bounded tree width one of the few classes of graphs for which the complexity of
  the isomorphism problem has been exactly determined.

  \medskip\noindent \emph{Keywords:} 
  algorithmic graph theory, 
  computational complexity,
  graph canonization, 
  graph isomorphism, 
  logspace algorithms, 
  tree width 
\end{abstract}

\section{Introduction}
\label{sec:introduction}

The \emph{graph isomorphism problem} (\Lang{isomorphism}) -- deciding whether
two given graphs are the same up to renaming vertices -- is one of the few
fundamental problems in~$\Class{NP}$ for which we neither know that it is
polynomial-time solvable nor that it is $\Class{NP}$-complete. Since
$\Class{NP}$-hardness would imply a collapse of the polynomial hierarchy to its
second level~\cite{BoppanaHZ1987,Schoening1988}, significant effort has been put
into better understanding the graph-theoretic requirements on input graphs that
make \Lang{isomorphism} polynomial-time solvable. A classical result of
Bodlaender~\cite{Bodlaender1990} shows that \Lang{isomorphism} is solvable in
polynomial time for graphs of \emph{bounded tree
width}~\cite{Bodlaender1990}. Polynomial-time algorithms are also known for
other graph classes like planar graphs~\cite{HopcroftW1974,Tarjan1971} and more
general graphs with a crossing-free embedding into a fixed
surface~\cite{FilottiM1980,Grohe2000,Miller1980}. A deeper complexity-theoretic
insight behind the polynomial-time algorithms for embeddable graphs is given by
the fact that \Lang{isomorphism} for graphs embeddable into the
plane~\cite{Dattaetal2009} or a fixed surface~\cite{ElberfeldK2014} can be
decided by logarithmic-space (\emph{logspace}) algorithms. These algorithms,
which are polynomial-time algorithms using at most a logarithmic amount of
memory, define the complexity class $\Class{L}$.

So far, it has been an open problem whether for graphs of bounded tree width the
isomorphism problem can be solved in logspace. Guided by the goal to determine
the exact complexity of the isomorphism problem for these graphs, there has been
a sequence of ever stronger partial results. Bodlaender's
algorithm~\cite{Bodlaender1990} placing \Lang{isomorphism} for graphs of bounded
tree width in $\Class{P}$ was first refined to an upper bound in terms of
logarithmic-depth circuits with threshold gates (i.e., circuits defining the
complexity class $\Class{TC}^1$) \cite{GroheV2006} and later improved to use
semi-unbounded fan-in Boolean gates (i.e., circuits defining the complexity
class $\Class{SAC}^1$) \cite{DasTW2012}. Since the chain $\Class{L} \subseteq
\Class{SAC}^1 \subseteq \Class{TC}^1 \subseteq \Class{P}$ is all we know about
the inclusion relations of these classes, these works leave the question for a
logspace approach that applies to every class of graphs of bounded tree width
open. Logspace algorithms are known for small constant bounds on the tree
width. Indeed, Lindell's~\cite{Lindell1992} classical approach to testing
isomorphism of trees provides us with a logspace algorithm for graphs of
tree-width at most~1. This was generalized to graphs of tree width at most 2
\cite{Arvindetal2008} and results of \cite{DattaNTW2009} for graphs without
$K_5$ as a minor apply to graphs of tree width at most~3. Moreover, $k$-trees,
the maximal tree-width-$k$ graphs, admit logspace isomorphism
tests~\cite{Arvindetal2012} as well as graphs with a bounded tree
depth~\cite{DasER2015}. While providing us with ever larger classes of graphs
with logspace algorithms for \Lang{isomorphism}, the general question for
bounded tree width graphs remained open.

\paragraph{Results.} Our first main result answers the above question in
its most general way by showing that the isomorphism problem for graphs of
bounded tree width can be solved by logspace algorithms. Together with a result
of Jenner et al.~\cite{Jenneretal2003}, showing that the isomorphism problem for
trees is $\Class{L}$-hard, this pinpoints the complexity of \Lang{isomorphism}
for graphs of bounded tree width.

\begin{theorem}
  \label{th:isomorphism-tw} 
  For every positive $k \in \Nat$, there is a logspace algorithm that decides
  whether two given graphs of tree width at most $k$ are isomorphic. Moreover,
  this problem, is complete for~$\Class{L}$ with respect to first-order
  reductions.
\end{theorem}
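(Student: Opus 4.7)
The plan is to derive Theorem~\ref{th:isomorphism-tw} from a logspace \emph{canonization} procedure, which is arguably the more informative statement and matches the title of the paper. The hardness half is immediate: by Jenner et al.~\cite{Jenneretal2003}, \Lang{isomorphism} for trees is already hard for $\Class{L}$ under first-order reductions, and trees are exactly the connected graphs of tree width at most $1$, so the lower bound automatically transfers to every positive $k$. For the upper bound, the goal is to compute in logspace a string $\can(G)$ for every input graph $G$ of tree width at most $k$ such that $\can(G) = \can(H)$ if and only if $G \cong H$; then isomorphism testing reduces to bit-wise comparison of the two canons, which is trivially in $\Class{L}$.

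The pipeline I would follow consists of three layers. First, compute a tree decomposition of $G$ of width bounded by some function of $k$ in logspace, for instance using the logspace version of Bodlaender's algorithm that arises from the Elberfeld--Jakoby--Tantau machinery for monadic second-order logic on bounded-tree-width graphs. Second, recast this decomposition as a rooted object in which each node carries a bounded-size bag together with the interface it shares with its parent. Third, canonize this rooted decomposition by a Lindell-style recursion: for each subtree $T_v$, define a canonical description $\can(v)$ of the subgraph induced by $T_v$, equipped with an ordering of the vertices on the parent interface, by enumerating the constantly many orderings of the interface, recursively canonizing the children, sorting the resulting child canons lexicographically, and returning the smallest consistent combination. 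The outer canon $\can(G)$ is obtained by minimizing over all admissible roots and interface orderings.

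The main obstacle I expect lies in the interplay between the Lindell-style recomputation strategy and the richer local data at each node. In classical tree canonization a recursive call only has to carry a subtree index, whereas here each call must also transport a constant-size description of the interface ordering and of its consistency constraints with the parent canon, while the comparisons between sibling canons must still be performable in logspace, presumably via a dedicated \proccomp{} subroutine that traverses both subtrees in parallel with only $O(\log n)$ extra bits. Equally delicate is independence from the particular decomposition produced in the first step: the output must be insensitive to bag splits, reorderings of children, and alternative valid tree decompositions of the same graph, which I would enforce by defining $\can$ on the graph itself and using the decomposition only as a tool to evaluate it. Once these ingredients are assembled, the algorithmic upper bound and the first-order $\Class{L}$-completeness of the isomorphism problem follow routinely.
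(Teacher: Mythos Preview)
Your hardness argument matches the paper's, but the upper-bound plan has a genuine gap at precisely the step you flag as ``equally delicate.'' The Elberfeld--Jakoby--Tantau machinery yields \emph{some} width-$k'$ tree decomposition of $G$, but not an isomorphism-invariant one, and in general no isomorphism-invariant bounded-width tree decomposition exists: already a cycle (tree width~2) has no single invariant width-$2$ decomposition, and the paper observes that even invariant \emph{families} of approximate decompositions can have exponential size (take $n$ disjoint $n$-cycles plus a universal vertex). So ``defining $\can$ on the graph itself and using the decomposition only as a tool to evaluate it'' is not a technicality---it is the entire problem, and a Lindell-style recursion along a non-invariant decomposition will produce different canons for isomorphic inputs unless you already know how to reconcile all alternative decompositions, which is exactly what the paper's machinery is for.

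The paper's route is therefore substantially different from yours. It first computes an isomorphism-invariant tree decomposition whose bags are the maximal atoms (clique-separator-free induced subgraphs); these bags have \emph{unbounded} width but clique adhesion sets (Section~\ref{sec:decomposition-with-cliques}). For each atom it then computes, via the improved-graph separators of Lokshtanov et al., an invariant \emph{family} of bounded-width decompositions indexed by distinguished non-edges (Section~\ref{sec:decomposition-without-cliques}). The two layers are packaged into a \emph{nested} tree decomposition, and the paper defines a weak ordering $\decwo$ on graphs with nested decompositions that is invariant and only \emph{quasi}-complete: incomparable pairs are isomorphic as graphs, but not necessarily via an isomorphism respecting the nested decomposition (Remark~\ref{rem:no:exact:correspondence} shows that demanding the latter is \Lang{isomorphism}-hard in general). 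Finally, a careful space analysis using an additional ``$p$-boundedness'' property of the constructed decompositions shows that $\decwo$ is logspace-decidable (Lemma~\ref{lem:log:space:computable:if:nice}). Your proposal does not contain any of these ingredients, and without an invariant decomposition (or a controlled invariant family thereof) the Lindell recursion cannot be made canonical.
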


For testing whether two graphs are isomorphic, it is in practice often helpful
to perform a two-step approach that first computes a canonical representative
for each isomorphism class, called the \emph{canon}, and then declares the two
graphs to be isomorphic exactly if their canons are equal (rather than
isomorphic).  To also be able to construct an \emph{isomorphism} between the
input graphs (that means, a bijective function between the vertex sets of given
graphs that preserves their edge relations), it is helpful to have additionally
access to an isomorphism from the input graphs to their canons. Such an
isomorphism to the canon is called a \emph{canonical labeling} of a graph. An
isomorphism between the input graphs can be constructed by composing canonical
labelings.

For most isomorphism algorithms that have been developed so far, it was
possible, with varying amounts of extra effort, to turn them into an algorithm
that computes canons and canonical labelings. Hence, deciding \Lang{isomorphism}
and computing canons are often known to have the same complexity. However, the
current situation for graphs of bounded tree width is different: While the
approach from~\cite{DasTW2012} puts the isomorphism problem for graphs of
bounded tree width into $\Class{SAC}^1$, this is not done by providing a
canonization procedure. In fact, the best known upper bound for canonizing
graphs of bounded tree width uses logarithmic-depth circuits with unbounded
fan-in Boolean gates (i.e., circuits defining the complexity class
$\Class{AC}^1$) \cite{Wagner2011}. Between these classes only the relation
$\Class{SAC}^1 \subseteq \Class{AC}^1$ is known. Our second main result
clarifies this situation by providing a logspace algorithm for canonizing graphs
of bounded tree width.

\begin{theorem}
  \label{th:canonization-tw} 
  For every $k \in \Nat$, there is a logspace algorithm that, on input of a
  graph $G$ of tree width at most~$k$, outputs an isomorphism-invariant encoding
  of $G$ (a canon) and an isomorphism to it (a canonical labeling).
\end{theorem}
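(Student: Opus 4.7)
The plan is to combine three ingredients: a logspace routine that produces \emph{some} tree decomposition of~$G$ of width bounded by a function of~$k$, a canonical way to refine arbitrary decompositions into an isomorphism-invariant family, and a Lindell-style recursive comparison of rooted sub-decompositions that fits in logarithmic space.

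First, I would invoke the known logspace algorithms that compute a rooted tree decomposition of $G$ of width $g(k)$ (à la Elberfeld--Jakoby--Tantau). This gives a concrete combinatorial object on which to recurse, but it is not isomorphism-invariant, so different isomorphic copies of $G$ may yield different trees. To remedy this I would introduce a notion of \emph{improved} decomposition $\impr(G)$ in which bags and local structure are canonically adjusted (for instance, by splitting at canonical separators of the subgraph hanging off each node, or by re-rooting along the block/cut structure), chosen so that the entire family of improved decompositions is closed under the action of $\operatorname{Aut}(G)$. The canon of $G$ is then the lexicographically smallest encoding of an improved decomposition under a total order $\decwo$ on encodings, together with the graph content stored in the bags in the induced order.

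The technical core is a recursive procedure $\proccomp$ that, given two rooted sub-decompositions and a partial bijection on their root bags, decides which one is smaller under $\decwo$. A recursive call at a node tries every matching between the two root bags (of which there are only finitely many, since bags have size at most $k+1$), and for each such matching it pairs up the children, respecting the bag bijection, and recurses on the resulting smaller instances. Using Lindell's stack-plus-comparison-table technique, together with logarithmic balancing of the decomposition tree, keeps the whole procedure within $O(\log n)$ space. The minimum improved decomposition is then obtained by iterating over candidate root choices and calling $\proccomp$ as an oracle; writing it to the output produces the canon, and recording which input vertex is placed at each position produces the canonical labeling.

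The main obstacle is the comparison step itself. Unlike in Lindell's original tree comparison, two sibling sub-decompositions are glued to their parent bag along a shared boundary, so every recursive comparison must carry along a partial isomorphism on that boundary, and the choice of improvement interacts with these choices; moreover, the enumeration over all improved decompositions has to be interleaved with the comparison, since we cannot afford to store any single decomposition in full. Designing the comparison so that it (i) respects boundary bijections globally, (ii) never keeps more than $O(\log n)$ bits on the stack at once, and (iii) still produces an isomorphism-invariant output, is where the bulk of the technical work will go. Once this is done the isomorphism statement of Theorem~\ref{th:isomorphism-tw} follows immediately by computing and comparing canons of the two inputs.
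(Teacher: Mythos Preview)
Your plan has a genuine structural gap at the ``improvement'' step. You propose to start from an arbitrary width-$g(k)$ decomposition (say from Elberfeld--Jakoby--Tantau) and then canonically adjust it into an isomorphism-invariant family of bounded-width decompositions, over which you take the minimum. But the paper explains in the introduction why no such family exists in general: an $n$-cycle has no isomorphism-invariant width-$k'$ decomposition for any constant $k'$, and if one passes to invariant \emph{families} of width-$k'$ decompositions, the disjoint union of $n$ cycles of length $n$ plus a universal vertex forces every such family to have exponential size. So ``enumerating all improved decompositions'' and comparing them, even interleaved with the recursion, cannot be done in logspace with your setup --- there is simply no polynomial-size object to enumerate.

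The paper's route around this obstacle is what your outline is missing. Instead of insisting on bounded width throughout, it first computes a single isomorphism-invariant tree decomposition whose bags may be \emph{arbitrarily large} but are maximal atoms (clique-separator-free subgraphs) with clique adhesion sets; this is Lemma~\ref{lem:main-decomposition-cliques}. Only then, for each atom separately, does it attach a polynomial-size invariant family of bounded-width decompositions (Lemma~\ref{lem:decomp-without-clique}, via the Lokshtanov et al.\ separators), yielding a \emph{nested} decomposition. The Lindell-style recursion then has to interleave a ``refinement'' step (cycling through the family at the current bag) with the usual child comparison, and to keep this in logspace the families are engineered to satisfy the \emph{$p$-boundedness} trade-off of Section~\ref{sec:canonization-ref}: the number of candidate refinements at a bag is polynomially bounded in the ratio of the whole subtree size to its largest non-special child. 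Your sketch has no analogue of either the two-level decomposition or this trade-off, and without them the space analysis of your $\proccomp$ will not close. A secondary point: the paper also shows (Remark~\ref{rem:no:exact:correspondence}) that demanding the comparison respect the nested decompositions exactly is $\Lang{isomorphism}$-hard, so the ordering is deliberately only ``quasi-complete''; canonicity then relies on the outer decomposition being invariant, which is another reason the non-invariant EJT decomposition cannot serve as the starting point.
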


\paragraph{Techniques.} 

The known logspace approaches for canonizing certain classes of bounded tree
width graphs are based on first computing an isomorphism-invariant tree
decomposition for the given input graph and then adjusting Lindell's tree
canonization approach to canonize the graph with respect to the
decomposition. For example, for $k$-trees~\cite{Arvindetal2012} an
isomorphism-invariant tree decomposition arises by taking a graph's maximal
cliques and their intersections as the bags of the decomposition and connecting
two bags based on inclusion. The resulting tree decomposition is both
isomorphism-invariant, which is required for a canonization procedure to be
correct, and has width $k$, which enables the application of an extension of
Lindell's approach by taking (the constant number of) orderings of the vertices
of the bags into account.

\emph{Technique 1: Isomorphism-invariant tree decomposition into bags without
clique separators.}  In general, for graphs of tree width at most~$k$, there is
no isomorphism-invariant tree decomposition of width~$k$. A simple example of
graphs demonstrating this are cycles, which have tree width 2, but no isomorphism-invariant
tree decomposition of width 2. We could hope to find an isomorphism-invariant
tree decomposition by allowing approximate tree decompositions (that means,
allowing an increase of the width to some constant~$k'$). Again, cycles show
that such tree decompositions do not always exist. To address this issue, we
could consider not just one tree decomposition, but an isomorphism-invariant and
polynomial-size collection of tree decompositions. However, for all~$k' \in
\Nat$, there are graphs of tree width at most~$3$ for which the smallest
isomorphism-invariant collection of tree decompositions of width~$k'$ has
exponential size. Simple graphs demonstrating this fact are given by forming the
disjoint union of~$n$ cycles of length~$n$, and adding a vertex that is adjacent
to every other vertex.

We work around this problem by considering isomorphism-invariant tree
decompositions that may have bags of unbounded size, but with bags that are
easier from a graph-theoretic and algorithmic perspective than the original
graph. An algorithm developed recently~\cite{Lokshtanovetal2014a} (which refined
the time complexity for \Lang{isomorphism} on graphs of tree width~$k$ from
Bodlaender's $n^{O(k)}$ bound to $g(k) \cdot n^{O(1)}$ for a function $g$)
applies a technique from Leimer~\cite{Leimer1993} that turns the input graph
into its isomorphism-invariant collection of maximal induced subgraphs without
clique separators called \emph{maximal atoms}. (In the example above, the
maximal atoms are exactly the enriched cycles.) We adapt this idea as a first
step for the proofs of our main results, but need to adjust it both with respect
to the graph-theoretic concepts as well as the algorithmic ideas involved: While
the collection of subgraphs produced by Leimer's approach is
isomorphism-invariant, the tree underlying the resulting decomposition highly
depends on the order in which subgraphs are considered and, thus, is not
isomorphism-invariant. While it is always sufficient to have an
isomorphism-invariant set of potential bags capturing a tree decomposition in
order to perform polynomial-time isomorphism tests (see~\cite{OtachiS2014}), in
order to apply or work towards logspace techniques it is necessary to have an
isomorphism-invariant tree decomposition. Our first main technical contribution
develops a tree decomposition of a graph whose bags are maximal atoms that is
isomorphism-invariant, and logspace-computable for graphs of bounded tree width.

\emph{Technique 2: Nested tree decomposition and a quasi-complete
isomorphism-based ordering.} Lindell's approach~\cite{Lindell1992} for
canonizing trees is based on using a weak order on the class of all trees whose
incomparable elements are exactly the isomorphic ones, and showing that the
order can be computed in logspace. Das, Tor\'{a}n, and Wagner~\cite{DasTW2012}
extended this to also work for graphs with respect to given tree decompositions
of bounded width. This is done by adding the idea that, for bounded width, it is
possible in logspace to guess partial isomorphisms between bags and recursively
check whether they can be extended to isomorphisms between the whole graphs and
the tree decompositions. When working with the tree decompositions into maximal
atoms described above, it is not possible to just guess and check partial
isomorphisms between bags since they have an unbounded width. 

In order to handle the width-unbounded bags of the above decomposition, we use
the fact that (as shown in~\cite{Lokshtanovetal2014a}), after appropriate
preprocessing, the maximal atoms have polynomial-size isomorphism-invariant
families of approximate tree decompositions. To compute these families, we
combine an approach for constructing separator-based tree decompositions
from~\cite{ElberfeldJT2010} to work with the isomorphism-invariant separators
from~\cite{Lokshtanovetal2014a}. If we choose a bounded width tree decomposition
for each atom, and replace each atom by the chosen tree decomposition, we can
turn the width-unbounded decomposition into a width-bounded decomposition for
the whole graph. However, since each maximal atom may be associated with several
decompositions, we need to consider for each atom a family of decompositions. We
call the structure that is obtained a \emph{nested tree decomposition}. In order
to extend the approach that canonizes with respect to width-bounded
decompositions of~\cite{DasTW2012} to nested tree decompositions, we
incorporate a bag refinement step into the weak ordering. It turns root bags of
unbounded width into width-bounded tree decompositions. For each candidate tree
decomposition of the root bag this triggers a modification of the original tree
decomposition. However, it turns out that determining whether there is an
isomorphism between two graphs that respects two given nested tree
decompositions is as hard as the general graph isomorphism problem (see
Remark~\ref{rem:no:exact:correspondence}). Having a polynomial-time algorithm
for this, let alone a logspace algorithm, would thus put the general graph
isomorphism problem into~$\Class{P}$. Consequently, we do not generalize the
idea of using isomorphism-based orderings with respect to decompositions in a
direct way to nested tree decompositions. Instead, we define an approximation of
the isomorphism-based ordering. This approximation has the property that it is
isomorphism-invariant (i.e., graphs that are isomorphic with respect to given
nested decompositions are incomparable) but is only quasi-complete, by which we
mean that graphs that are incomparable must be isomorphic but not necessarily
via an isomorphism that respects the nested decompositions. Developing the
notion of nested tree decompositions along with just the right notion of a
quasi-complete isomorphism-based ordering is our second main technical
contribution.

\emph{Technique 3: Recursive logspace algorithm implementing the quasi-complete
ordering.} Trying all choices of a decomposition on all of the atoms yields
exponentially many refined decompositions in total. Avoiding this exponential
blowup, our third main technical contribution is a dynamic-programming approach
along the tree decomposition that shows how to cycle through candidate
decompositions of the maximal atoms while, still, canonizing the graph along the
coarser tree decomposition in logspace.

Since recursively cycling through tree decompositions of a bag needs space, we
cannot just use the polynomial-size family of tree decompositions that we get
from applying the results of \cite{ElberfeldJT2010} to those of
\cite{Lokshtanovetal2014a} as described above. In order to implement the
recursion in logspace, we compute nested tree decompositions that satisfy a
certain additional (quite technical) property, which we
call~\emph{$p$-boundedness}. It allows us to maintain a trade-off between the
number of candidate tree decompositions chosen for each bag and the size of the
subdecomposition sitting below the bag. This makes a recursive algorithm that
uses only logarithmic space possible.

\paragraph{Organization of the paper.}

Section~\ref{sec:background} provides background on standard graph-theoretic
notions and logspace. The remaining part of the paper is structured along the
proofs of the main theorems: In Section~\ref{sec:decomposition-with-cliques} we
show how to compute isomorphism-invariant tree decompositions into
clique-separator-free graphs in logspace, while
Section~\ref{sec:decomposition-without-cliques} contains the decomposition
approach for graphs without clique
separators. Section~\ref{sec:nested} defines the notion of nested
tree decompositions and a weak ordering that is recursively defined along these
decompositions, while Section~\ref{sec:canonization-ref} proves that the
ordering is logspace-computable for width-bounded and $p$-bounded
decompositions. Section~\ref{sec:main-theorems} finally proves the paper's main
theorems. Section~\ref{sec:conclusion} concludes with a summary and an outlook.

\section{Background}  
\label{sec:background}

We denote the set of natural numbers, which start at 0, by $\Nat$, and use
shorthands $[n,m] := \{n,\dots,m\}$ and $[m] := [1,m]$ for every $n \in \Nat$
and~$m\in \Nat \setminus\{0\}$.

\paragraph{Graphs and Connectivity.} 

For a graph $G = (V,E)$ with \emph{vertices} $V$ and \emph{edges} $E \subseteq V
\times V$, we define $V(G) := V$ and $E(G) := E$. All graphs considered in the
present paper are finite, undirected and simple (neither parallel edges nor
loops are present). We denote the class of all finite graphs by
$\mathcal{G}$. To simplify later definitions, we define the \emph{coloring
function} $\col_G \colon V(G) \times V(G) \to \Int$ of a graph $G$ as
follows. $\col_G(u,v)$ equals $-1$ if $v = w$, $1$ if $v \neq w$ and $\{u,v\}
\in E(G)$, and $0$ if $v \neq w$ and $\{u,v\} \notin E(G)$. If~$G$'s vertices or
edges are \emph{colored}, we extend the coloring function to return natural
number encodings of colors.

\emph{Subgraphs} and \emph{induced subgraphs} are defined as usual. We write
$G[V']$ to denote the subgraph \emph{induced} by a vertex set $V' \subseteq
V(G)$ in a graph $G$. A \emph{path} is an alternating sequence $v_0e_0\dots
v_{m-1}e_{m-1}v_{m}$ of distinct vertices and edges from~$G$ with $e_i =
\{v_i,v_{i+1}\}$ for every $i \in \{0,\dots,m-1\}$.

A \emph{separation} of a graph $G$ is a pair~$(A,B)$ of subsets of $V(G)$ with
(1) $A \cup B = V(G)$, and (2) $(E(G) \cap ((A \setminus B) \times (B \setminus
A)) = \emptyset$. The intersection $A \cap B$ is the \emph{separator} of $(A,B)$
with \emph{size} $|A \cap B|$. The definition of how a separation~$(A,B)$
separates parts of a graph commonly distinguishes between whether the separated
parts are vertices or sets of vertices: A separator $(A,B)$ \emph{separates}
vertex sets $X,Y \subseteq V(G)$ with $X \subseteq A$ and $Y \subseteq B$. A
separator $(A,B)$ \emph{separates} vertices $x,y \in V(G)$ with $x \in A
\setminus B$ and $y \in B \setminus A$. The \emph{connectedness} of sets $X,Y
\subseteq V(G)$ in a graph is the size of a smallest separator that separates
them, it is denoted by $\kappa_G(X,Y)$. The \emph{connectedness} of vertices
$x,y \in V(G)$, denoted by $\kappa_G(x,y)$, is defined in the same way, except
that we set $\kappa_G(x,y) := \infty$ if $x$ and $y$ are adjacent.

\paragraph{Graph isomorphism.}

An \emph{isomorphism} from a (colored) graph $G$ to a (colored) graph $H$ is a
bijective mapping $\phi \colon V(G) \to V(H)$, such that $\col_G(u,v) =
\col_H(\phi(u),\phi(v))$ holds for every $u,v \in V(G)$. Graphs $G$ and $H$ that
admit an isomorphism between them are \emph{isomorphic}. This gives rise to an
equivalence relation that partitions $\mathcal{G}$ into \emph{isomorphism
  classes}. The \emph{graph isomorphism problem} is the language
$\Lang{isomorphism} := \{ (G,H) \in \mathcal{G} \times \mathcal{G} \mid G \text{
  and } H \text{ are isomorphic} \}$. Here an encoding of the graphs as strings is
assumed. (We can assume that the graphs are given as adjacency matrices, which
is however irrelevant since the reasonable encodings are logspace equivalent.)  

A \emph{canonization} is a mapping $\can \colon \mathcal{G} \to
\mathcal{G}$ where $G$ is isomorphic to $\can(G)$, such that for every
two graphs $G$ and $H$ we have $\can(G) = \can(H)$ exactly if $G$ and
$H$ are isomorphic. The graph $\can(G)$ is the \emph{canon of $G$
  (under~$\can$)}, and an isomorphism $\phi$ between~$G$ and $\can(G)$
is a \emph{canonical labeling of $G$ (under~$\can$)}. Comparing the
canons of two graphs $G$ and $H$ suffices to test whether they are
isomorphic, and canonical labelings of two graphs can be used to construct
an isomorphism.

A mapping that associates an object~$\inv(G)$ with every graph~$G \in
\mathcal{G}$, for example a tree decomposition or a family of tree
decompositions, is \emph{isomorphism-invariant} if for every isomorphism $\phi$
between two graphs the result of applying $\phi$ and $\inv$ is independent of
the order in which they are applied. That means, for every isomorphism~$\phi$
from a graph $G$ to a graph $H$, replacing all occurrences of vertices~$v \in
V(G)$ in~$\inv(G)$ by their image~$\phi(v)$ yields~$\inv(H)$. 

\paragraph{Tree decompositions.}

A \emph{(tree) decomposition} $D = (T,\mathcal{B})$ of a graph $G$ is a tree $T$
together with a family of \emph{bags} $\mathcal{B} = (B_n)_{n \in V(T)}$ with
$B_n \subseteq V(G)$ for each $n \in V(T)$, such that
\begin{itemize}
\item (\emph{connectedness property}) for each vertex $v \in V(G)$ the induced
  subtree $T \bigl[\{n \in V(T) \mid v \in B_n\}\bigr]$ is nonempty
  and connected, and
\item (\emph{covering property}) for each edge $\{u,v\} \in E(G)$ there is a
  node $n \in V(T)$ with $\{u,v\} \subseteq B_n$.
\end{itemize} 
For every edge $\{n,m\} \in E(T)$, $B_{\{n,m\}} := B_n \cap B_m$ is the
\emph{adhesion set} between nodes $n$ and $m$. The \emph{torso} of a node $n \in
V(T)$ is the graph obtained from the induced graph $G[B_n]$ by adding for every
neighboring node $n'$ of $n$ the clique on the adhesion set $B_n \cap
B_{n'}$. Given a tree decomposition $D = (T,\mathcal{B})$, its \emph{size} is
$|D| := |V(T)|$, and its \emph{(tree) width} is the maximum over all $|B_n| - 1$
for $n \in V(T)$. The \emph{tree width} of $G$, denoted by $\tw(G)$, is the
minimum width of a tree decomposition for it.

When working with trees underlying \emph{rooted} tree decompositions $D =
(T,\mathcal{B})$, which have a distinguished root node $r \in V(T)$, we talk
about a \emph{parent}, \emph{ancestor}, \emph{child}, and \emph{descendant} of a
node $n \in V(T)$ with respect to the root $r \in V(T)$ in the usual way. Given
a rooted tree decomposition $D = (T,\mathcal{B})$, a
\emph{subdecomposition} $D' = (T',\mathcal{B}')$ is a decomposition that arises
by using a node $n \in V(T)$ and all its ancestor nodes to form a tree
decomposition. A \emph{child decomposition} is a subdecomposition that contains
a child of the root node, but not the root.

Tree decompositions are commonly studied in both their unrooted and rooted
variants. In the context of logspace and the isomorphism problem, we do not need
to restrict ourselves to a particular definition (as formalized by
the following fact). However, in order to facilitate a clear presentation,
we use rooted tree decompositions.

\begin{fact}
  \label{fa:root-decomp} 
  There is a logspace-computable and isomorphism-invariant mapping that turns an
  unrooted tree decomposition $D$ for a graph $G$ into a rooted tree decomposition
  $D'$ for $G$ with the same adhesion sets.
\end{fact}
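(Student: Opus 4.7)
The plan is to root $T$ at a canonically chosen ``central'' node, and if the natural choice is not unique, to insert a single auxiliary bag that preserves the family of adhesion sets. Concretely, I compute the center of~$T$: the set $C \subseteq V(T)$ of nodes minimizing the eccentricity $\epsilon(n) := \max_{m \in V(T)} d_T(n,m)$. A classical property of trees guarantees $|C| \in \{1,2\}$, and when $|C| = 2$ the two central nodes are adjacent. If $C = \{r\}$, I output $D' := (T, \mathcal{B})$ rooted at~$r$. If $C = \{n,m\}$, I delete the edge $\{n,m\}$ from $T$, insert a fresh node~$p$ with bag $B_p := B_n \cap B_m$, add the edges $\{n,p\}$ and $\{p,m\}$, and root the resulting tree at~$p$.

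Four things then need to be checked. First, $D'$ is still a tree decomposition of~$G$: in the subdivision case, a vertex $v \in V(G)$ lies in $B_p$ iff it already lay in both $B_n$ and $B_m$, so for every vertex the set of bags containing it remains connected; no edge of~$G$ is lost; and $|B_p| \leq \min(|B_n|, |B_m|)$, so the width cannot grow. Second, the adhesion sets are preserved: the two new adhesion sets $B_n \cap B_p$ and $B_p \cap B_m$ both evaluate to $B_n \cap B_m$, i.e.\ the adhesion set of the removed edge. Third, the mapping is isomorphism-invariant: any isomorphism $\phi$ from $G$ to another graph $H$ carries $D$ to an isomorphic decomposition, preserves the underlying tree's eccentricity function and hence its center, and the potentially inserted bag is defined purely as an intersection of two existing bags, so $\phi$ automatically carries it to the analogously inserted bag of the image.

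Fourth, the construction lies in logspace: distances in a tree are in $\Class{L}$ via tree reachability, taking a minimum of a logspace-computable quantity is in $\Class{L}$, bag intersections are trivially logspace, and orienting the edges of~$T$ away from the chosen root is again tree reachability. All of these are standard logspace primitives, so piping them together yields a logspace transducer computing~$D'$ from~$D$.

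The only genuine subtlety is the two-center case: simply picking one of the two central nodes would break isomorphism-invariance whenever an automorphism of~$D$ swaps them, and picking one based on the bag contents would require an isomorphism-invariant comparison of arbitrary bags, which is as hard as the general problem the paper sets out to solve. Introducing the intermediary bag $B_n \cap B_m$ dissolves the symmetry canonically without altering the family of adhesion sets, and this is the only place where any care is required.
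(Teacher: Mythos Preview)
Your proposal is correct and matches the paper's own approach essentially verbatim: the paper also roots at the center of~$T$ and, when the center is an edge, subdivides it with a new node whose bag is the intersection of the two incident bags. Your write-up is in fact more detailed than the paper's one-sentence sketch, explicitly verifying the tree-decomposition axioms, the preservation of adhesion sets, isomorphism-invariance, and logspace-computability.
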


Fact~\ref{fa:root-decomp} seems to be a folklore and special cases have, for
example, been used in~\cite{Arvindetal2012,Dattaetal2009}. Its proof takes an
unrooted tree decomposition and turns it into a rooted decomposition by
declaring the \emph{center of the tree}, the unique node or edge with the
maximum distance to the tree's leafs, to be the root. If the center is an edge,
we subdivide it by inserting a new node whose bag is the intersection of the
edge's incident bags.

Two graphs $G$ and $G'$ are \emph{isomorphic with respect to tree
decompositions}~$D = (T,\mathcal{B})$ and~$D' = (T',\mathcal{B'})$,
respectively, if there exists an isomorphism $\phi$ from~$G$ to~$G'$ and an
isomorphism~$\psi$ from~$T$ to~$T'$ satisfying $B'_{\psi(n)} = \{ \phi(v) \mid v
\in B_n \}$ for every node~$n \in V(T)$. Under these conditions we say that
$\phi$ \emph{respects} $D$ and $D'$. Based on this definition and the way of how
it refines the isomorphism equivalence relation among graphs, we also consider
\emph{canons of graphs with respect to tree decompositions}.

\paragraph{Logspace.}

A deterministic Turing machine whose working space is logarithmically bounded by
the input length is called a \emph{logspace \dtm{}}. The functions $f \colon
\{0,1\}^* \to \{0,1\}^*$ computed by such machines are
\emph{logspace-computable} (or \emph{in logspace}). The complexity class
$\Class{L}$, called \emph{(deterministic) logspace}, contains all
\emph{languages} $P \subseteq \{0,1\}^*$ whose characteristic functions are in
logspace. Functions in logspace are closed under
composition~\cite{StockmeyerM1973,Jones1975} and also under queries to oracles
for languages from $\Class{L}$~\cite{LadnerL1976}. Reingold~\cite{Reingold2008}
studied the problem $\Lang{undirected-reachability} := \{ (G,s,t) \mid
\textnormal{ there is a path from } s\in V(G) \textnormal { to } t\in V(G)
\textnormal{ in the undirected graph } G\}$, and showed that it is in
$\Class{L}$. Furthermore, we can test whether a graph's tree width is bounded by
a constant since $\Lang{tree-width-$k$} := \{ G \mid \tw(G) \leq k \} \in
\Class{L}$ for every $k \in \Nat$~\cite{ElberfeldJT2010}. Details about the
circuit complexity classes that are mentioned in the introduction are given in
Vollmer's book~\cite{Vollmer1999}, but we do not require them in the following.

\section{Decomposing graphs into parts without clique separators}
\label{sec:decomposition-with-cliques}

A \emph{clique} is a graph with an edge between every two vertices, including
the empty graph by definition. A separation $(A,B)$ is a \emph{clique
separation} with \emph{clique separator} $A \cap B$ in a graph $G$ if it (1)
separates two vertices $x,y \in V(G)$, and (2) $G[A \cap B]$ is a clique.

We construct isomorphism-invariant tree decompositions for graphs of bounded
tree width whose bags induce subgraphs without clique separators and whose
adhesion sets are cliques (that means, the torsos are exactly the subgraphs
induced by the bags). These tree decompositions serve as an intermediate
decomposition step in the proofs of our main theorems.

\begin{lemma}
  \label{lem:main-decomposition-cliques} 
  For every $k \in \Nat$, there is a logspace-computable and
  isomorphism-invariant mapping that turns a graph $G$ with tree width
  at most $k$ into a tree decomposition~$D$ for $G$ in which 
  \begin{enumerate}
  \item subgraphs induced by the bags do not contain clique
    separators, and 
  \item adhesion sets are cliques.
  \end{enumerate}
\end{lemma}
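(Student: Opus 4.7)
The plan is to take as bags the \emph{maximal atoms} of $G$---the inclusion-maximal vertex sets $W \subseteq V(G)$ such that $G[W]$ contains no clique separator---and as adhesion sets the \emph{minimal clique separators} of $G$. By a classical theorem of Leimer, both the family of maximal atoms and the family of minimal clique separators are isomorphism-invariants of $G$; moreover any two distinct atoms that share a vertex intersect exactly in a minimal clique separator. This immediately yields item~2 and produces an invariant collection of bag contents.

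The core difficulty is assembling the atoms into a \emph{tree} that is itself invariant, rather than just into an invariant collection of bags. I would build the bipartite incidence graph $T$ whose nodes are the disjoint union of the maximal atoms and the minimal clique separators of $G$: each atom $A$ gives a node with bag $A$, each minimal clique separator $S$ gives a node with bag $S$, and the edges are exactly the containments $S \subseteq A$. The construction is purely relational, so every isomorphism of $G$ lifts to an automorphism of $T$ that preserves bags. I would then verify that $T$ is a tree by induction on the number of minimal clique separators, using the fact that two distinct atoms are always separated in $G$ by a clique separator and that atoms sharing a common separator $S$ lie in different components of $G\setminus S$. The connectedness and covering properties of a tree decomposition then follow from the defining property of atoms and the fact that edges of $G$ never cross a clique separator.

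For the logspace implementation I would exploit that $\tw(G) \le k$ forces every minimal clique separator to have at most $k+1$ vertices. Hence candidate separators can be enumerated by cycling through all vertex subsets of size at most $k+1$; clique-ness is directly testable, while separator-ness and minimality are testable via Reingold's logspace reachability algorithm. To decide to which atom a given vertex $v$ belongs, I would compute the set of vertices reachable from $v$ by paths that avoid every minimal clique separator not containing $v$, implementing this as a logarithmic-space loop over the polynomially many candidate separators. With these subroutines, one can stream the nodes, bags, and edges of $T$ one at a time using only logarithmic space; the composition closure of logspace functions then assembles the output decomposition.

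The main obstacle is the tree-structure step. Leimer's recursive atom decomposition produces a tree whose topology depends on the order in which separators are peeled off and is therefore not invariant. The bipartite incidence construction above replaces that ordering by a purely combinatorial definition of the tree, but proving that the resulting incidence graph is actually acyclic---and not only connected---is the nontrivial structural claim that the proof must establish, and is exactly where the ``first main technical contribution'' advertised in the introduction is applied.
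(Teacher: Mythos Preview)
Your bipartite incidence graph between maximal atoms and minimal clique separators is \emph{not} a tree in general, and this is precisely the obstacle the paper has to work around. Take the graph on vertices $u,v,w,p_1,p_2,p_3$ with edges $uv$, $uw$, and $up_i,vp_i$ for $i=1,2,3$ (tree width~2). The maximal atoms are $\{u,w\}$ and the three triangles $\{u,v,p_i\}$; the minimal clique separators are $\{u\}$ (separating $w$ from the rest) and $\{u,v\}$ (separating $p_i$ from $p_j$). In your incidence graph, $\{u\}$ is contained in all four atoms and $\{u,v\}$ is contained in all three triangles, giving six nodes and seven edges --- in particular the 4-cycle $\{u,v,p_1\}\!-\!\{u\}\!-\!\{u,v,p_2\}\!-\!\{u,v\}\!-\!\{u,v,p_1\}$. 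The acyclicity argument you sketch (``induction on the number of minimal clique separators'') cannot go through, because a small separator may sit inside many atoms that are already pairwise glued by a larger separator.

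The paper's fix is exactly to avoid mixing separator sizes: it defines, for each $c$, the graph $T_c(G)$ using only the maximal $c$-atoms and the minimum clique separators of size at most~$c$, and proves (Lemma~3.7) that $T_c(G)$ is a tree \emph{provided $G$ is already a $(c-1)$-atom}, so that every relevant separator has size exactly~$c$. This uniform-size assumption is what makes the acyclicity proof work. Lemma~3.1 is then obtained by iterating: starting from a connected graph (a $0$-atom), decompose into $1$-atoms via $T_1$, decompose each $1$-atom into $2$-atoms via $T_2$, and so on up to $c=k+1$; Lemma~3.8 shows how to splice the layers together invariantly. This layered construction is also what makes the logspace bound clean: at each level one enumerates only constant-size separator candidates, whereas your proposed routine for ``the atom containing $v$'' (avoiding all minimal separators not containing $v$ simultaneously) is both ill-defined --- a vertex lies in several atoms --- and not obviously implementable in logspace.
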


The tree decomposition we construct to prove the lemma is a refined version of a
decomposition of Leimer~\cite{Leimer1993} of graphs into their collections of
maximal induced subgraphs without clique separators. The crucial point is that
we need to adjust his method to not only output the collection of maximal
induced subgraphs without clique separators, which suffices for its application
in~\cite{Lokshtanovetal2014a}, but also an isomorphism-invariant tree
decomposition that is based on it. In order to do that, we replace the approach
of~\cite{Leimer1993}, which is based on finding clique-separator-free parts in a
single phase via computing elimination orderings, by several steps. In these
steps we compute graphs that are clique-separator-free with respect to cliques
up to a certain size. The size bound grows when going from one step to the
next. While Leimer's method runs in polynomial-time and applies to general
graphs, our approach needs logarithmic space and applies to graphs of bounded
tree width, which suffices for our applications.

Section~\ref{sec:catoms} presents the definition and logspace-computability of
maximal subgraphs without size-bounded clique
separators. Section~\ref{sec:chordal} presents a transformation of graphs that
is used in Section~\ref{sec:catoms-decomposition} to prove
Lemma~\ref{lem:main-decomposition-cliques}

\subsection{Atoms with respect to constant-size clique separators}
\label{sec:catoms}

Let $c \in \Nat$, which we use as an upper bound on the size of
clique separators we consider. A \emph{$c$-atom} is a graph that does
not contain clique separators of size at most~$c$. \emph{Atoms}, which
are the graphs Leimer~\cite{Leimer1993} deals with, are $|V(G)|$-atoms
(or, alternatively, graphs without clique separators). A \emph{maximal
$c$-atom} of a graph~$G$ is a maximal induced subgraph $G[A]$ for some
$A \subseteq V(G)$ that is a $c$-atom (that means, every extension of
it contains a clique separator of size at most~$c$). A \emph{maximal
atom} in a graph $G$ is a maximal $|V(G)|$-atom. For every $c \in
\Nat$, $c$-atoms are nonempty and connected.

Two vertices~$a_1,a_2 \in V(G)$ are \emph{$c$-inseparable (with
respect to clique separations in $G$)} if there is no clique separator
of size at most~$c$ in $G$ separating $a_1$ and $a_2$ and
\emph{$c$-separable}, otherwise. A set~$A \subseteq V(G)$
is~\emph{$c$-inseparable} if all distinct $a_1,a_2 \in A$ are
$c$-inseparable. The set $A$ is \emph{maximal $c$-inseparable} if no extension
of $A$ is $c$-inseparable.

Note that the definition of induced subgraphs $G[A]$ that are $c$-atoms only
considers separations of~$G[A]$ while $c$-inseparability of a vertex
set $A$ is based on separations in the (ambient) graph~$G$. If we look
at maximal $c$-atoms and maximal $c$-inseparable sets, then these
notions coincide.

\begin{lemma}
  \label{le:atom-inseparable-set} 
  For every~$c\in \Nat$, graph $G$, and~$A\subseteq V(G)$, $G[A]$ is a
  maximal~$c$-atom of~$G$ if and only if~$A$ is maximal $c$-inseparable in
  $G$. 
\end{lemma}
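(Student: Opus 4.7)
The plan is to prove both directions of the equivalence via a restriction argument combined with a shortest-path argument. For the restriction step, every clique separation $(X,Y)$ of $G$ whose separator $S := X \cap Y$ has size at most $c$ and separates two vertices $a_1,a_2 \in A$ restricts to a clique separation $(X \cap A, Y \cap A)$ of $G[A]$ whose separator $S \cap A$ is still a clique of size at most $c$ and still separates $a_1$ from $a_2$. Consequently, $G[A]$ being a $c$-atom forces $A$ to be $c$-inseparable; and symmetrically, if $G[A \cup \{v\}]$ is a $c$-atom for some $v \notin A$, then $A \cup \{v\}$ is $c$-inseparable. Thus the restriction argument alone yields one half of each direction of the equivalence.

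The substantive step is to show that if $A$ is maximal $c$-inseparable, then $G[A]$ is a $c$-atom. Assume toward contradiction that $G[A]$ admits a clique separation $(X,Y)$ whose separator $S := X \cap Y$ of size at most $c$ separates some $a_1 \in X \setminus Y$ from $a_2 \in Y \setminus X$. Since $A$ is $c$-inseparable, $a_1$ and $a_2$ lie in the same connected component of $G - S$, so there exists a path in $G - S$ from $X \setminus Y$ to $Y \setminus X$; I fix a shortest such path $P = p_0 p_1 \cdots p_n$ with $p_0 \in X \setminus Y$ and $p_n \in Y \setminus X$. Minimality forces all internal vertices $p_1, \ldots, p_{n-1}$ to lie in $V(G) \setminus A$, and it forces $n \ge 2$ (since $(X,Y)$ being a separation of $G[A]$ leaves no edge in $G$ between $X \setminus Y$ and $Y \setminus X$). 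I then claim $A \cup \{p_1\}$ is again $c$-inseparable, contradicting the maximality of $A$.

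For the claim, suppose a clique $T \subseteq V(G)$ with $|T| \le c$ separates $p_1$ from some $a \in A$. The vertex $p_0$, being adjacent to $p_1$, is distinct from $a$ and cannot lie on $p_1$'s side of $G - T$ (otherwise $T$ would separate $p_0$ from $a$ within $A$, contradicting $c$-inseparability of $A$); hence $p_0 \in T$. If additionally $p_n \in T$, then $T$ contains the pair $\{p_0, p_n\}$, which is non-adjacent in $G$ because $(X,Y)$ is a separation of $G[A]$ and both endpoints lie in $A$, contradicting $T$ being a clique. Otherwise $p_n \notin T$, and the same side-argument applied to $p_n$ shows $p_n$ lies on $a$'s side in $G - T$; the path $p_1 p_2 \cdots p_n$ in $G$ therefore joins vertices in different components of $G - T$, so some $p_i$ with $2 \le i \le n-1$ must lie in $T$, which in particular requires $n \ge 3$ (in the case $n = 2$, the vertices $p_1$ and $p_2 = p_n$ are adjacent, so $T$ could not separate them). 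Since $T$ is a clique, $\{p_0, p_i\}$ is an edge of $G$, yielding a strictly shorter $X\setminus Y$-to-$Y\setminus X$ path $p_0 p_i p_{i+1} \cdots p_n$ in $G - S$ and contradicting the minimality of $P$.

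Putting everything together, the restriction and shortest-path arguments prove both directions of the equivalence; any remaining maximality question is handled by extending a hypothetical strict $c$-inseparable superset of $A$ to a maximal one and applying the just-proved implication (to inherit the maximal-$c$-atom property). I expect the third paragraph --- the clique/minimality contradiction driven by a shortest $X\setminus Y$-to-$Y\setminus X$ path and the neighbor-in-separator argument --- to be the technical heart of the proof; everything else reduces to restriction or bookkeeping.
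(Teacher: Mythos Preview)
Your proof is correct and follows essentially the same approach as the paper: a restriction argument for one direction and, for the substantive direction, a shortest-path argument in which membership of two path vertices in a clique separator yields a shortcut contradicting minimality. The organization differs only cosmetically---you take a shortest $(X\setminus Y)$--$(Y\setminus X)$ path and argue that $A\cup\{p_1\}$ is $c$-inseparable, whereas the paper takes a shortest $a_1$--$a_2$ path, picks any internal vertex $x\notin A$, and invokes maximality to obtain the second clique separator directly---but the clique-shortcut mechanism is identical.
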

\begin{proof}
(From $c$-atoms to $c$-inseparable sets.) Two vertices $a_1$ and $a_2$ from a
vertex set $A$ that are $c$-separable in $G$ are also $c$-separable in
$G[A]$. This implies that, if $G[A]$ is a $c$-atom (meaning that $A$ is
$c$-inseparable in~$G[A]$), then $A$ is $c$-inseparable in $G$.

(From $c$-inseparable sets to $c$-atoms.) Let~$A \subseteq V(G)$ be
maximal~$c$-inseparable. Assume, for the sake of contradiction, that~$a_1,a_2
\in A$ are separated by a clique separator~$C$ of size at most~$c$
in~$G[A]$. Since~$C$ does not separate~$a_1$ from~$a_2$ in~$G$, there is a path
from~$a_1$ to~$a_2$ in~$G$ that avoids~$C$. Let~$P$ be a shortest path of this
kind. Since~$C$ separates~$a_1$ from~$a_2$ in~$G[A]$, there is an~$x \in V(G)
\setminus A$ on the path~$P$. Since~$x \notin A$, but~$A$ is chosen to be
maximal $c$-inseparable, there is a vertex~$a' \in A$ and a clique
separator~$C'$ of size at most $c$ that separates~$a'$ from~$x$. Since~$A$ is
$c$-inseparable in $G$,~$C'$ cannot separate elements of~$A$ in~$G$. Thus, we
know that either~$C'$ separates~$x$ from~$a_1$ or~$a_1 \in C'$, and either~$C'$
separates~$x$ from~$a_2$ or~$a_2 \in C'$. That means, the path~$P$, which starts
in~$a_1$, passes through~$x$, and ends in~$a_2$, must intersect~$C'$ in some
vertex~$p_1 \neq x$ before reaching~$x$, and intersect~$C'$ again in some
vertex~$p_2 \neq x$ after leaving~$x$. Since~$C'$ is a clique, we can take a
shortcut by directly taking the edge~$\{p_1,p_2\} \in E(G)$ without
visiting~$x$. This contradicts the fact that~$P$ is a shortest path.
\end{proof}

\begin{lemma}
  \label{le:inseparable-unique}
  Let $G$ be a graph with $c$-inseparable set $I \subseteq V(G)$ with
  $|I| \geq c+1$ for some $c \in \Nat$. Then $A := \{a \in V(G) \mid I
  \cup \{a\} \text{ is $c$-inseparable} \}\,$ is the unique maximal
  $c$-inseparable set in $G$ with $I \subseteq A$.
\end{lemma}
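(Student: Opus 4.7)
The plan is to establish the lemma by proving three things in sequence: that $A$ is $c$-inseparable, that $A$ is maximal with this property, and that it is the unique maximal $c$-inseparable set containing $I$. Note first that $I \subseteq A$ follows immediately from the definition (for $a \in I$, $I \cup \{a\} = I$ is $c$-inseparable), so this containment need not be argued.

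For the first step ($A$ is $c$-inseparable) I would argue by contradiction. Suppose $a_1, a_2 \in A$ were $c$-separable by a clique separator~$C$ with $|C| \leq c$. The key observation is that since $|I| \geq c+1 > |C|$, at least one vertex of $I$ lies outside $C$; moreover, because $I$ is $c$-inseparable in~$G$, no two vertices of~$I$ can lie in distinct connected components of $G \setminus C$, so the entire set $I \setminus C$ sits in a single connected component~$K$ of~$G \setminus C$. Now exploit the defining property of $A$: because $I \cup \{a_1\}$ is $c$-inseparable, $C$ cannot separate $a_1$ from any vertex of~$I \setminus C$, and since $a_1 \notin C$ (as $C$ separates $a_1$ from $a_2$), this forces $a_1 \in K$. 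The same argument gives $a_2 \in K$, contradicting that $C$ separates $a_1$ from $a_2$. This is the only step that really uses the cardinality hypothesis $|I| \geq c+1$, and I expect it to be the main obstacle in writing up the proof cleanly.

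For the second step (maximality), suppose $A' \supsetneq A$ were $c$-inseparable and pick $a \in A' \setminus A$. Since $I \subseteq A \subseteq A'$ and every subset of a $c$-inseparable set is $c$-inseparable, $I \cup \{a\}$ is $c$-inseparable, so by definition $a \in A$, a contradiction.

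For uniqueness, let $M$ be any maximal $c$-inseparable set with $I \subseteq M$. For each $m \in M$, the set $I \cup \{m\} \subseteq M$ is $c$-inseparable, so $m \in A$; hence $M \subseteq A$. Combined with the first step, which shows that $A$ is itself $c$-inseparable, and the maximality of~$M$, this yields $M = A$. Together these three steps give the lemma.
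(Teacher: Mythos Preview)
Your proposal is correct and follows essentially the same approach as the paper's proof: both argue $c$-inseparability of $A$ by picking a vertex of $I\setminus C$ (using $|I|>c\ge|C|$) and deriving a contradiction from the $c$-inseparability of $I\cup\{a_1\}$ and $I\cup\{a_2\}$, and both derive maximality and uniqueness from the fact that any $a\notin A$ is $c$-separable from some vertex of $I$. Your write-up is slightly more explicit (tracking the component $K$ and splitting maximality from uniqueness), but the underlying argument is the same.
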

\begin{proof} 
  We first argue that~$A$ is~$c$-inseparable: Assume, for
  sake of contradiction, that vertices $a_1,a_2 \in A$ are~$c$-separable
  in $G$ via a clique separator $C$ of size at most $c$. Since~$|C| \leq
  c < c + 1 \leq |I|$, there is a vertex~$v \in I \setminus
  C$. Moreover, since~$C$ separates~$a_1$ from~$a_2$, $C$
  separates~$a_1$ from~$v$ or it separates~$a_2$ from~$v$. This
  contradicts~$a_1,a_2 \in A$ since both $I \cup \{a_1\}$ and $I \cup
  \{a_2\}$ are $c$-inseparable in $G$ by construction. To see that~$A$
  is unique and maximal among the~$c$-inseparable sets containing~$I$,
  note that every candidate~$a \in V(G) \setminus A$ is~$c$-separable
  from at least one vertex in~$I$.
\end{proof}

Combining Lemma~\ref{le:atom-inseparable-set} with
Lemma~\ref{le:inseparable-unique}, we are able to compute the family
of $c$-atoms of a given graph for every constant $c \in \Nat$ in
logspace.

\begin{lemma}
  \label{lem:atoms:logspace:computable}
  For every~$c\in \Nat$, the mapping that turns a graph $G$ into
  its family of maximal $c$-atoms is logspace-computable and
  isomorphism-invariant. 
\end{lemma}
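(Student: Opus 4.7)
The plan is to verify the two parts separately. Isomorphism-invariance is immediate: the property ``$G[A]$ is a maximal $c$-atom'' depends only on the adjacency structure of $G$, so an isomorphism $\phi$ from $G$ to $H$ maps clique separators to clique separators of the same size and hence maximal $c$-atoms to maximal $c$-atoms; applying $\phi$ vertex-wise to the family produced for $G$ therefore yields the family produced for $H$.

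For logspace-computability I would first build the basic primitive: a logspace test for $c$-separability of a pair $a_1,a_2\in V(G)$. Since $c$ is a fixed constant, there are at most $|V(G)|^c$ candidate separators $C\subseteq V(G)\setminus\{a_1,a_2\}$ with $|C|\leq c$; the plan is to iterate through them, test clique-ness by scanning edges, and, using Reingold's logspace algorithm for \Lang{undirected-reachability}, decide whether $a_1$ reaches $a_2$ in $G-C$. The pair is $c$-separable iff some $C$ passes both tests, and a set $I\subseteq V(G)$ is $c$-inseparable iff every one of its pairs fails this test. All these checks compose using the closure of logspace under composition and oracle queries.

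With this primitive in hand, I would enumerate the maximal $c$-inseparable sets -- which by Lemma~\ref{le:atom-inseparable-set} are exactly the vertex sets of the maximal $c$-atoms -- in two regimes. For sets of size at most $c$, cycle through all subsets $S\subseteq V(G)$ with $|S|\leq c$ using an $O(\log|V(G)|)$-bit counter, and output $G[S]$ whenever $S$ is $c$-inseparable and no $v\in V(G)\setminus S$ keeps $S\cup\{v\}$ $c$-inseparable. For sets of size at least $c+1$, Lemma~\ref{le:inseparable-unique} recovers a maximal $c$-inseparable set $A$ from any seed $I\subseteq A$ with $|I|=c+1$ via $A=\{a\in V(G)\mid I\cup\{a\}\text{ is $c$-inseparable}\}$. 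Hence I would iterate over all $(c+1)$-subsets $I$ of $V(G)$, discard those that are not $c$-inseparable, compute the associated $A_I$, and output $G[A_I]$ only when $I$ coincides with the lexicographically smallest $(c+1)$-subset of $A_I$; this canonical-seed trick eliminates duplicates while inspecting only $A_I$ and a few of its small subsets.

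The hard part will be keeping all these nested enumerations within a logarithmic-size worktape. This is resolved by the observation that every object held in memory -- a vertex, a constant-size subset, or an index into such a subset -- occupies $O(\log|V(G)|)$ bits, so the outer loops and the separability and reachability subroutines may be layered using logspace composition. Correctness of the enumeration then follows directly from Lemmas~\ref{le:atom-inseparable-set} and~\ref{le:inseparable-unique}.
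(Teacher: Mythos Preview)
Your proposal is correct and follows essentially the same route as the paper: split into the two regimes $|A|\le c$ and $|A|\ge c+1$, use Lemma~\ref{le:atom-inseparable-set} to pass to maximal $c$-inseparable sets, invoke Lemma~\ref{le:inseparable-unique} to recover the large ones from $(c+1)$-element seeds, and implement the $c$-separability primitive via constant-size enumeration plus Reingold's reachability test. Your explicit deduplication via the lexicographically smallest seed is a small refinement the paper omits (it simply outputs $A$ for every admissible seed $I$), but otherwise the arguments coincide.
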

\begin{proof} 
  We show how to compute the family of all maximal sets
  $A$ that are $c$-inseparable in~$G$. By
  Lemma~\ref{le:atom-inseparable-set}, these are exactly the vertex sets
  of the maximal $c$-atoms of $G$. We first check for all sets~$A$ of
  size at most~$c$ whether they are maximal $c$-inseparable in~$G$, and
  output the sets that pass the test. To find maximal~$c$-inseparable
  sets with more than~$c$ vertices, we consider every $c$-inseparable
  set~$I \subseteq V(G)$ with $|I| = c+1$ and output $A := \{a \in V(G)
  \mid I \cup \{a\} \text{ is inseparable} \}$. The correctness of the
  algorithm follows from Lemma~\ref{le:inseparable-unique}. It can be
  implemented by a logspace \dtm{} since we only cycle through vertex
  sets of (constant) size at most $c$ and use oracle calls to
  $\Lang{undirected-reachability}$, which is in
  $\Class{L}$~\cite{Reingold2008}.

  The mapping that turns a graph $G$ for a $c \in \Nat$ into
  its family of $c$-inseparable sets is isomorphism-invariant by
  definition.   
\end{proof}

\subsection{Chordal completions with respect to constant-size clique separators} 
\label{sec:chordal}

Instead of working with a given graph~$G$ directly, some of Leimer's
arguments~\cite{Leimer1993} are based on working with its \emph{chordal
completion}~$G^*$, which is the graph that arises from~$G$ by replacing every
maximal atom in the graph with a clique on the atom's vertices. Interestingly,
the vertex sets of the maximal atoms are the same for $G$ and $G^*$. The same
property holds when moving from a graph~$G$ to the \emph{$c$-chordal completion}
$G^c$, for every $c \in \Nat$, that arises from $G$ by replacing every maximal
$c$-atom with a clique on its vertex set. For a formal proof of this fact we
first show that the intersection of two~$c$-atoms is a clique.

\begin{lemma}
  \label{le:inseparable-intersection}
  Let $G$ be a graph and $A_1$ and $A_2$ two distinct maximal
  $c$-inseparable sets in $G$. Then $A_1 \cap A_2$ is a clique of 
  size at most $c$.
\end{lemma}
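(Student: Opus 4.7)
The plan is to locate a single clique separator $C$ of size at most $c$ that contains $A_1 \cap A_2$ entirely; this would immediately give both claims, since $G[C]$ is by definition a clique.

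To produce such a $C$, I would exploit the maximality of $A_1$. Because $A_1$ and $A_2$ are distinct and both are maximal $c$-inseparable, neither is contained in the other, so in particular $A_2 \setminus A_1$ is nonempty; pick any $a_2$ in it. By maximality of $A_1$, the set $A_1 \cup \{a_2\}$ is not $c$-inseparable, and because $A_1$ itself is, the offending pair must involve $a_2$. Hence there exist some $b \in A_1$ and a separation $(A,B)$ of $G$ with $A \cap B = C$ where $|C| \leq c$, $G[C]$ is a clique, $a_2 \in A \setminus B$, and $b \in B \setminus A$.

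The heart of the argument is then a short ``squeezing'' step. Fix any $w \in A_1 \cap A_2$. Since $w,b \in A_1$ and $A_1$ is $c$-inseparable, $C$ cannot separate $w$ from $b$, which together with $b \in B \setminus A$ forces $w \notin A \setminus B$. Symmetrically, $w,a_2 \in A_2$ and $c$-inseparability of $A_2$ force $w \notin B \setminus A$. Therefore $w \in A \cap B = C$, so $A_1 \cap A_2 \subseteq C$, and the lemma follows.

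I do not expect any real obstacle. The only subtlety to keep in mind is the convention that a separator does not separate vertices lying in the separator itself: the statement ``$C$ does not separate $w$ from $b$'' only rules out the strict side $A \setminus B$ (given that $b \in B \setminus A$), and similarly for $a_2$, so that $c$-inseparability of $A_1$ and of $A_2$ together leave exactly the middle set $C$ as the option for $w$.
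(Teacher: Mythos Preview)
Your proof is correct and follows essentially the same approach as the paper: exhibit a single clique separator $C$ of size at most $c$ (coming from maximality) and show $A_1 \cap A_2 \subseteq C$ using the $c$-inseparability of both $A_1$ and $A_2$. The only difference is cosmetic: the paper first obtains the size bound separately via Lemma~\ref{le:inseparable-unique} and phrases the inclusion argument through path concatenation in $G - C$, whereas you get both conclusions at once from $A_1 \cap A_2 \subseteq C$, making your version slightly more self-contained.
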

\begin{proof}
  Since $I = A_1 \cap A_2$ is $c$-inseparable, it has size at most~$c$
  in order to be contained in two distinct maximal $c$-inseparable
  sets by Lemma~\ref{le:inseparable-unique}. Let $a_1 \in A_1
  \setminus A_2$ and $a_2 \in A_2 \setminus A_1$ be $c$-separable
  vertices and $C$ be a clique separator of size at most $c$ that
  separates them. Assume, for the sake of contradiction, that there
  exists a vertex $a' \in (A_1 \cap A_2) \setminus C$. Since both $a'
  \in A_1$ and $a' \in A_2$ and both $A_1$ and $A_2$ are
  $c$-inseparable, we can find a path $P_{a_1,a'}$ between $a_1$ and
  $a'$ and a path $P_{a',a_2}$ between $a'$ and $a_2$ in $G -
  C$. Thus, there is also a path $P$ between $a_1$ and $a_2$. This
  contradicts the existence of $a' \in (A_1 \cap A_2) \setminus
  C$. Thus, the clique $C$ contains all of $A_1 \cap A_2$. In
  particular, $A_1 \cap A_2$ is a clique. 
\end{proof}

\begin{lemma}
  \label{le:maximal-atom-G-Gc}
  Let $c \in \Nat$, $G$ be a graph, $A \subseteq V(G)$, and $C
  \subseteq V(G)$ with $|C| \leq c$. Then
  \begin{enumerate}
  \item $G[A]$ is a maximal $c$-atom in $G$ if and only if $G^c[A]$ is
    a maximal $c$-atom in $G^c$, and  
  \item $C$ is a clique separator in $G$ if and only if $C$
    is a clique separator in $G^c$
\end{enumerate}  
\end{lemma}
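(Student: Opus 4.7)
The plan is to derive part~1 as a corollary of part~2 combined with Lemma~\ref{le:atom-inseparable-set}, so the core work is establishing the equivalence of clique separators of size at most~$c$ in~$G$ and in~$G^c$.

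For the forward direction of part~2, suppose $C$ is a clique separator of $G$ witnessed by a separation $(A,B)$ with $A\cap B=C$. Since $G^c$ only adds edges, $G[C]$ being a clique implies $G^c[C]$ is a clique, so I only have to verify that $(A,B)$ is still a separation of $G^c$. Every added edge of $G^c$ lies inside some maximal $c$-atom $M$. By Lemma~\ref{le:atom-inseparable-set}, $M$ is $c$-inseparable in $G$, and since $|C|\leq c$ and $C$ is a clique separator of $G$, it cannot separate two vertices of $M$. Hence $M\setminus C$ is contained in a single side of $(A,B)$, and no added edge crosses between $A\setminus B$ and $B\setminus A$.

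The backward direction is where the real work lies. Given a clique separator $C$ of $G^c$ with $|C|\leq c$, the separation property transfers to $G$ immediately because $G\subseteq G^c$. The delicate point is that $C$ is a clique in $G$. My plan is to argue by contradiction: assume some pair $\{u,v\}\subseteq C$ is not an edge of $G$. Then it is an added edge of $G^c$, so $u$ and $v$ sit in a common maximal $c$-atom~$M$. Using Lemma~\ref{le:inseparable-intersection} together with the $c$-inseparability of $M$, I first want to show that all of $C$ lies in $M$: if some $w\in C$ belonged to another maximal $c$-atom $M'\neq M$, then the intersection $M\cap M'$ is a clique of size at most $c$ in $G$ and forces $C$ together with the atom structure to separate two vertices of $M$, contradicting $M$'s $c$-inseparability. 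Once $C\subseteq M$ is established, I exploit that $G^c[M]$ is a clique to locate the vertex (or vertices) outside $M$ that $C$ isolates in $G^c$; inspecting which part of $C$ actually participates in such a cut shows that the added edges inside $C$ cannot be genuinely needed, forcing $C$ to already be a clique in~$G$.

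Part~1 then follows quickly. By Lemma~\ref{le:atom-inseparable-set}, maximal $c$-atoms are exactly maximal $c$-inseparable vertex sets, both in $G$ and in $G^c$. By part~2, the clique separators of size at most~$c$ in $G$ coincide with those in $G^c$, so the $c$-inseparability relation on pairs of vertices agrees on both graphs. Maximal $c$-inseparable sets, and hence maximal $c$-atoms, therefore coincide between $G$ and $G^c$. The main obstacle throughout is the backward direction of part~2: transferring the clique property from $G^c$ back to~$G$ requires a careful use of how the added edges distribute across maximal $c$-atoms, controlled by Lemma~\ref{le:inseparable-intersection}, to exclude cliques of~$G^c$ that do not already exist in~$G$ from acting as genuine separators.
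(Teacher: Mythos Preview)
Your overall strategy---establish part~2 first and deduce part~1---is sound in principle, and your forward direction of part~2 is correct (it is essentially the argument the paper embeds in its proof of part~1). Your derivation of part~1 from part~2 is also clean. The paper proceeds in the opposite order: it proves part~1 directly via a path-rerouting argument and then uses part~1 together with Lemma~\ref{le:inseparable-intersection} to finish the backward direction of part~2.

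The genuine gap is your backward direction of part~2. Neither of the two steps you outline goes through as written:
\begin{itemize}
\item The claim that $C\subseteq M$ is not justified. You argue that if some $w\in C$ lies in a maximal $c$-atom $M'\neq M$, then $M\cap M'$ being a clique of size at most~$c$ ``forces $C$ together with the atom structure to separate two vertices of~$M$''. But a vertex may lie in several maximal $c$-atoms, so $w\in M'$ does not exclude $w\in M$; and even if $w\notin M$, nothing in your sketch produces a clique separator of size at most~$c$ between two vertices of~$M$. Lemma~\ref{le:inseparable-intersection} only controls the intersection of two fixed atoms; it says nothing about where an arbitrary $G^c$-clique must sit.
\item Even granting $C\subseteq M$, the step ``the added edges inside $C$ cannot be genuinely needed'' is not an argument. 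Knowing that $G^c[M]$ is a clique tells you $C$ cannot separate two vertices of~$M$ in~$G^c$, but it gives no mechanism for concluding that every edge of $G^c[C]$ was already present in~$G$---which is precisely the statement you must prove.
\end{itemize}

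The paper's way around this is to have part~1 available first: once the maximal $c$-atoms of $G$ and $G^c$ coincide, a clique separator $C$ of size at most~$c$ in $G^c$ lies in the intersection of two distinct maximal $c$-atoms of $G^c$, hence of~$G$, and Lemma~\ref{le:inseparable-intersection} then forces $G[C]$ to be a clique. If you wish to keep your order, observe that your forward direction of part~2 alone (together with the trivial fact that two vertices in a common maximal $c$-atom of $G$ become adjacent in $G^c$) already shows that the $c$-inseparability relation on pairs coincides in $G$ and $G^c$; this yields part~1, from which you can then recover the backward direction of part~2 exactly as the paper does. What does not work is a self-contained direct proof of part~2 backward along the lines you sketched.
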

\begin{proof} 
  We start to prove the first property. The arguments of the proof are
  based on properties of vertex sets of maximal $c$-atoms, and maximal
  $c$-insep\-arable vertex sets. We can freely switch between both
  points of view since they are equivalent by
  Lemma~\ref{le:inseparable-unique}.

  Let $A \subseteq V(G)$ be maximal $c$-inseparable in $G$. Since
  $G^c[A]$ is a clique, $A$ is also $c$-inseparable in $G^c$. To prove
  that $A$ is maximal with this property in $G^c$, assume, for the sake of
  contradiction, that $A$ is not maximal with this property in
  $G^c$. Then there is a vertex $x \in V(G) \setminus A$, such that $A
  \cup \{x\}$ is $c$-inseparable in $G^c$. Since $A$ is maximal
  $c$-inseparable in $G$, there is a clique separator $C$ of size at
  most $c$ that separates a vertex $a \in A$ from $x$ in $G$. By
  assumption, $C$ does not separate $a$ from $x$ in $G^c$. Thus, we can
  find a path $P$ between $a$ and $x$ in $G^c - C$. We show that $P$ can
  be modified to a path $P'$ between $a$ and $x$ in $G$ by observing
  each of its edges and, if necessary, redirecting it. Let $\{u,v\}$ be
  an edge from $P$. If $\{u,v\} \in E(G)$, we are done. If $\{u,v\} \in
  E(G^c) \setminus E(G)$, we know by the construction of $G^c$ that $u$
  and $v$ are $c$-inseparable in $G$ (in particular, they are part of
  a common maximal $c$-inseparable set). Thus, we can find a path
  $P_{\{u,v\}}$ between $u$ and $v$ in $G[A'] - C$ and modify~$P$ to
  use $P_{\{u,v\}}$ instead of $\{u,v\}$, which is only present
  in~$G^c$. Overall, this leads to constructing a path $P'$ in~$G - C$
  between $a$ and $x$. This contradicts the fact that $C$ separates
  $a$ and $x$ in $G$.  

  We are left to prove the converse direction. Let $a_1$ and $a_2$ be two
  vertices that are $c$-inseparable in~$G^c$. For the sake of contradiction,
  assume they are $c$-separable in $G$. Let $(B_1,B_2)$ be a clique separation in
  $G$ with clique separator $C = B_1 \cap B_2$ of size at most $c$ and $a_1 \in
  B_1 \setminus B_2$ and $a_2 \in B_2 \setminus B_1$. In particular, this means
  that every vertex $a'_1 \in B_1 \setminus B_2$ is $c$-separable from every
  vertex $a'_2 \in B_2 \setminus B_1$. Hence, no edges are constructed between the
  sets $B_1 \setminus B_2$ and $B_2 \setminus B_1$, and $C$ is also a clique
  separator of size at most $c$ in $G^c$. This contradicts the initial choice of
  $a_1$ and $a_2$ as being $c$-inseparable vertices in $G^c$. Thus, $a_1$ and
  $a_2$ are $c$-inseparable in $G$, too. The arguments above imply that every
  $c$-inseparable set of $G^c$ is a $c$-inseparable set in $G$. In particular,
  this holds for the maximal $c$-inseparable sets and, thus, for the maximal
  $c$-atoms.

  For the second property, let $C$ be a clique separator of size at
  most $c$ in $G$, which does not need to be a minimum clique
  separator. Then $C$ is also a clique separator in $G^c$ by the
  arguments from the last paragraph. For the other direction, let $C$
  be a clique separator of size at most $c$ in $G^c$. It also
  separates two distinct vertices in $G$ and we are left to prove that
  it is a clique. Since $C$ is a clique separator of size at most $c$,
  it lies in the intersection of two distinct maximal $c$-inseparable 
  sets $A_1$ and $A_1$ in $G^c$. The sets $A_1$ and $A_1$ are also
  maximal $c$-inseparable sets of $G$ by the first property of the
  lemma proved above, and $A_1 \cap A_2$ is a clique by
  Lemma~\ref{le:inseparable-intersection}. Thus, $C$ is a clique
  separator in $G$, too.
\end{proof}

\subsection{Isomorphism-invariant tree decompositions into atoms}  
\label{sec:catoms-decomposition}

Our goal is to compute an isomorphism-invariant tree decomposition of a graph
into its $c$-atoms.

A \emph{minimum clique separator (with respect to $x$ and $y$ in a graph $G$)}
is an inclusion-wise minimal clique that separates $x$ and $y$ in $G$. For every
$c \in \Nat$ and graph~$G$, we define the graph $T_c = T_c(G)$ whose node set consists of all $c$-atoms of $G$ and all minimum clique separators of size at
most~$c$. An edge is inserted between every $c$-atom~$G[A]$ and minimum clique
separator $C$ with $C \subseteq A$. We define the class of bags
$\mathcal{B}_c(G) = (B_n)_{n \in V(T_c)}$ as follows. If $n \in V(T_c)$ is
identified with a $c$-atom $G[A]$, then $B_n := A$, and if $n \in V(T_c)$ is
identified with a minimum clique separator $C$, then $B_n := C$.

\begin{proposition}
  For every $c \in \Nat$, the mapping $G \mapsto (T_{c}(G),\mathcal{B}_c(G))$ is
  isomorphism-invariant for every $c \in \Nat$.
\end{proposition}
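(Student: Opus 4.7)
The plan is to verify that every ingredient of $(T_c(G),\mathcal{B}_c(G))$ is preserved under graph isomorphism, and hence that applying an isomorphism $\phi\colon V(G)\to V(H)$ vertex-wise to the bags of $T_c(G)$ yields exactly $T_c(H)$ with $\mathcal{B}_c(H)$. Extend $\phi$ to subsets of $V(G)$ by $\phi(X):=\{\phi(v)\mid v\in X\}$ in the usual way; the heart of the argument is then three preservation claims, each a direct unfolding of the definitions.

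First I would show that $A$ is (the vertex set of) a maximal $c$-atom of $G$ if and only if $\phi(A)$ is a maximal $c$-atom of $H$. Because $\phi$ sends cliques to cliques, separations to separations, and preserves the set of vertex pairs that a given separator separates, the notion of a clique separator of size at most $c$ is isomorphism-invariant; hence $c$-(in)separability of vertex pairs is preserved, which in turn means that maximal $c$-inseparable sets are mapped to maximal $c$-inseparable sets. By Lemma~\ref{le:atom-inseparable-set}, this is exactly the statement about maximal $c$-atoms.

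Second I would show that $C$ is a minimum (i.e.\ inclusion-minimal) clique separator of size at most $c$ in $G$ if and only if $\phi(C)$ is one in $H$. Being a clique, separating two vertices, having size at most $c$, and being inclusion-minimal among such separators are all properties that transfer along $\phi$ because $\phi$ preserves adjacency, non-adjacency, paths, and the subset relation between vertex sets. Third, since $\phi$ preserves inclusions, $C\subseteq A$ holds if and only if $\phi(C)\subseteq \phi(A)$, so the edge relation of $T_c(G)$ (an edge between a $c$-atom and a contained minimum clique separator) is preserved as well.

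Combining these three observations, the map induced by $\phi$ on vertex subsets is a bijection between the nodes of $T_c(G)$ and those of $T_c(H)$, respects the edge relation, and sends each bag $B_n$ to $\phi(B_n)=B_{\phi(n)}$. This is exactly what the definition of isomorphism-invariance from the background section requires, so $G\mapsto(T_c(G),\mathcal{B}_c(G))$ is isomorphism-invariant. There is no real obstacle here—the proposition is essentially a definitional check—so the write-up only needs to record, for each of the three preservation claims, that the concept involved is phrased entirely in the first-order vocabulary of vertices, edges, cliques, paths, and set inclusion, all of which are transported by $\phi$.
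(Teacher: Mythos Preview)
Your proposal is correct. The paper states this proposition without proof, treating it as immediate from the fact that the construction of $(T_c(G),\mathcal{B}_c(G))$ is phrased entirely in terms of isomorphism-invariant notions (maximal $c$-atoms, inclusion-minimal clique separators of size at most $c$, and set inclusion); your write-up simply spells out this observation in detail, which is exactly the intended justification.
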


The graph~$T_c(G)$ is typically not a tree. However, as stated by
Lemma~\ref{lem:c-atom-tree}, $T_c(G)$ is a tree and, moreover,
$(T_c(G),\mathcal{B}_c(G))$ is a tree decomposition if $G$ is a
$(c-1)$-atom.

\begin{lemma}
  \label{lem:c-atom-tree}
  For every positive $c \in \Nat$ and $(c-1)$-atom $G$,
  $(T_c(G),\mathcal{B}_c(G))$ is a tree decomposition for
  $G$. Moreover, $T_c(G)$ has a unique center.
\end{lemma}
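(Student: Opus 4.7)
The plan is to verify the tree-decomposition axioms and then derive the unique-center claim from a bipartition parity argument. Throughout, I would use that since $G$ is a $(c-1)$-atom, every clique separator has size at least $c$; hence the minimum clique separators of size at most $c$ appearing as nodes of $T_c(G)$ are precisely the clique separators of size exactly $c$.

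The covering property is immediate: no separator can separate two adjacent vertices by the very definition of a separation, so for every edge $\{u,v\} \in E(G)$ the pair is $c$-inseparable and, by Lemma~\ref{le:atom-inseparable-set}, lies in the bag of some maximal $c$-atom. The connectedness property and the tree-shape of $T_c(G)$ I would establish simultaneously by induction on the number of minimum clique separators of size at most $c$ in $G$. The base case is when $G$ is itself a $c$-atom: $T_c(G)$ has a single atom-node whose bag covers $V(G)$, trivially a tree decomposition. In the inductive step I would pick a minimum clique separator $C$ of $G$, let $K_1,\dots,K_\ell$ be the connected components of $G-C$, and consider the subgraphs $G_i := G[V(K_i)\cup C]$. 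The key technical step is to show that each $G_i$ is a $(c-1)$-atom with strictly fewer minimum clique separators of size $\leq c$ than $G$, so that the inductive hypothesis yields a tree decomposition $T_c(G_i)$ in which $C$ sits in a unique atom-node (namely the maximal $c$-atom of $G_i$ extending $C$). I would then show that $T_c(G)$ coincides with the graph obtained by taking the disjoint union of the $T_c(G_i)$ and identifying a fresh separator-node $C$ with each of these $C$-containing atom-nodes, using Lemma~\ref{le:inseparable-intersection} to rule out additional $T_c$-edges that would cross $C$ (any two distinct maximal atoms of $G$ intersect in a clique of size at most $c$, and that intersection is the only candidate separator linking them).

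Tree-ness then follows from the inductive hypothesis, since gluing trees at a single common node yields a tree, and the connectedness property transfers: if $v \in C$ its bags across the pieces combine via the new $C$-node, and otherwise $v$ lies inside exactly one $G_i$ and its subtree is inherited directly. For the unique center I would use that $T_c(G)$ is bipartite with atoms and separators on opposite sides, and that every separator-node $C$ lies in at least two distinct atom-bags: indeed $C$ separates some pair $x,y$, and the atoms extending $\{x\} \cup C$ and $\{y\} \cup C$ must be different, as otherwise $\{x,y\}$ would lie in a common $c$-inseparable set and contradict the fact that $C$ separates them. Hence every leaf of $T_c(G)$ is an atom-node, the diameter of $T_c(G)$ is realized between two atom-nodes, and in a bipartite tree this distance is even, forcing the center to consist of a single node. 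I expect the main obstacle to be the technical lemma underlying the inductive step, namely that each piece $G_i$ remains a $(c-1)$-atom with strictly fewer minimum clique separators of size $\leq c$ than $G$, since this requires translating would-be small clique separators of $G_i$ into clique separators of $G$ of the same or smaller size to reach a contradiction.
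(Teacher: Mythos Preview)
Your approach is sound but genuinely different from the paper's. The paper never splits $G$ along a chosen separator and recurses; instead it passes to the $c$-chordal completion $G^c$ (via Lemma~\ref{le:maximal-atom-G-Gc}) and proves the two global properties of $T_c$ directly. Connectedness of $T_c$ is shown by an induction on a Menger-type distance $\Delta(A,C) = \min \sum |P_i|$ over $c$-tuples of disjoint paths from an atom $A$ to a clique $C$; acyclicity (and the connectedness property of the bags) comes from a path-exclusion claim stating that along any $T_c$-path $A_1,C_1,A_2,\dots,A_t$ no $A_i$ with $i\ge 2$ meets $A_1\setminus A_2$. Your inductive gluing is closer in spirit to Leimer's original construction and is arguably more transparent about \emph{why} $T_c$ is a tree, whereas the paper's argument is more self-contained (no recursion into pieces, no correspondence lemmas to verify) once one accepts the $\Delta$-trick.

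Two remarks on your sketch. First, the ``main obstacle'' you flag is real but slightly mislocated: showing that each $G_i$ is a $(c-1)$-atom and has strictly fewer size-$c$ minimum clique separators is not hard (shortcut excursions outside $G_i$ through the clique $C$; note $N_G(K_j)=C$ for every component $K_j$ since $G$ is a $(c-1)$-atom). The actual work in the gluing step is the two-way correspondence: maximal $c$-atoms of $G$ are exactly the maximal $c$-atoms of the $G_i$, and the minimum clique separators of $G$ are $\{C\}$ together with those of the $G_i$. Both directions go through, but you should plan to prove them rather than only the ``fewer separators'' inequality. Second, in the unique-center argument your claim that $\{x\}\cup C$ is $c$-inseparable is not obvious (nothing forces $x$ to be $c$-inseparable from every vertex of $C$). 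A cleaner route, already available from your inductive proof, is: in each $G_i$ the clique $C$ lies in a unique maximal $c$-atom $A_i$, and $A_i\neq A_j$ for $i\neq j$ because $A_i\subseteq V(G_i)$ and $V(G_i)\cap V(G_j)=C$ forces $A_i=A_j$ only if both equal $C$, contradicting maximality once you know each $A_i$ properly extends $C$. The paper's proof simply asserts ``a separator cannot be a leaf of~$T$'' at this point.
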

\begin{proof}
  Instead of working with $G$, we use the graph $G^c$, whose maximal
  $c$-inseparable vertex sets and clique separators of size at most $c$
  are exactly the respective ones of $G$ by
  Lemma~\ref{le:maximal-atom-G-Gc}. Thus, we set $G := G^c$ throughout
  the proof, which does not alter the construction of $T_c$. To simplify
  the notations of the proof, we also set $T := T_c(G)$ and $\mathcal{B}
  := \mathcal{B}_c(G)$. 

  \begin{claim*} 
    $T$ is connected. 
  \end{claim*}
  \begin{proof}[Proof of the claim]
    \renewcommand{\qedsymbol}{\hfill$\lrcorner$}
    If~$T$ is a single node, the claim holds. If $V(T) \geq
    2$, we argue as follows.

    Since every clique separator is contained in some
    maximal~$c$-atom, it suffices to show that distinct maximal~$c$-atoms
    are connected in~$T$.  Let~$A_1$ and~$A_2$ be
    maximal~$c$-atoms. Since~$A_1$ and~$A_2$ are distinct, there is a
    clique separator~$C$ of size at most~$c$ separating~$A_1$ from~$A_2$
    in $G$. For an atom~$A$ and a clique separator~$C$ separating~$A$ from
    a vertex~$x\in V(G)\setminus A$, define~$\Delta(A,C)$ to be the
    minimum~$\sum_{i=1}^c |P_i|$ among all~$c$-tuples of vertex-disjoint
    paths~$P_1,\ldots,P_c$ that start in~$A$ and end in~$C$. Such paths
    exist by Menger's theorem~\cite{Diestel2005} after the preprocessing
    mentioned above. Note that~$\Delta(A,C) = 0$ if and only
    if~$C\subseteq A$.  We show that distinct atoms~$A_1$ and~$A_2$ are
    connected in~$T$ by induction
    on~$\Delta(A_1,C)+\Delta(A_2,C)$. If~$\Delta(A_1,C) = \Delta(A_2,C) =
    0$, then~$C\subseteq A_1$ and~$C\subseteq A_2$ so~$A_1$ and~$A_2$ are
    connected in~$T$ by definition. Thus, we assume that for all
    cliques~$C$ of size~$c$ separating~$A_1$ and~$A_2$ we
    have~$\Delta(A_1,C) + \Delta(A_2,C)>0$. To continue the proof, we
    distinguish two cases.

    For the first case, assume there exists a
    clique~$C$ of size~$c$ separating~$A_1$ and~$A_2$ with~$\Delta(A_1,C)>0$
    and~$\Delta(A_2,C)>0$. This implies both~$C\nsubseteq A_1$ and~$C\nsubseteq
    A_2$. Since~$C$ is~$c$-inseparable, there is a~$c$-atom~$A'$ that
    contains~$C$.  Since every path from~$A_1$ to~$A_2$ must
    intersect~$C$, we have $\Delta(A',C)+\Delta(A_i,C) < \Delta(A_1,C)+
    \Delta(A_2,C)$ for~$i\in\{1,2\}$. Thus, $A'$ is connected to $A_i$
    in~$T$ for every $i \in \{1,2\}$ and, thus, $A_1$ is connected to~$A_2$. 
    
    For the second case, suppose that for all clique separators~$C$ of
    size~$c$ separating~$A_1$ and~$A_2$, we have~$\Delta(A_1,C) =0$,
    or~$\Delta(A_2,C)=0$. Let~$C$ be such a clique separator. Without loss of
    generality, we may assume~$\Delta(A_1,C) = 0$. Since~$ \Delta(A_2,C) >
    0$, there is an element~$v\in C\setminus A_2$. Let~$C'$ be a clique
    that separates~$v$ from~$A_2$. Since~$v\in C \setminus A_2$, we
    conclude that~$\Delta(A_2,C')=0$ and, thus,~$C' \subseteq A_2$. If
    there is an atom~$A'$ containing~$C$ and~$C'$, then~$A'$ is adjacent
    to~$A_1$ and~$A_2$ and, thus,~$A_1$ and~$A_2$ are
    connected. Otherwise, there must be a clique separator~$C''$
    separating a vertex in~$C$ from a vertex in~$C'$. However, this
    implies~$C''\nsubseteq A_1$ and~$C''\nsubseteq A_2$. This brings us to
    the previous case with~$\Delta(A_1,C'')> 0$ and~$\Delta(A_2,C'')>0$.
  \end{proof}

  \begin{claim*}
    If~$A_1,C_1,A_2\ldots,C_{t-1},A_t$ is a path in~$T$ between two
    $c$-atoms, then for every~$i\geq 2$, the $c$-atom~$A_i$ does not contain a
    vertex from~$A_1\setminus A_2$. 
  \end{claim*}
  \begin{proof}[Proof of the claim]
    \renewcommand{\qedsymbol}{\hfill$\lrcorner$} Let~$v_1$ be a vertex
    in~$A_1\setminus A_2$. Such a vertex must exists since~$A_1$ and~$A_2$
    are distinct maximal~$c$-atoms.  We show for~$2 \leq i \leq t-1$ that
    the separator~$C_i$ contains a vertex~$v_i$ that is separated
    from~$v_1$ by~$C_1$. For $i \geq 2$, we choose~$v_i$ as a vertex
    in~$C_{i}\setminus C_1$. Such a vertex exists since the~$C_i$ are
    distinct subsets of~$V(G)$ of the same size~$c$. To see that~$C_1$
    separates~$v_1$ from~$v_i$, observe that by induction~$C_1$
    separates~$v_1$ from~$v_{i-1}$, but~$v_{i-1}$ and~$v_i$ cannot be
    separated by~$C_1$ since they lie in the same
    $c$-atom~$A_i$. Since~$v_{t-1} \in A_t$,~$C_1$ separates all vertices
    in~$A_1\setminus A_2$ from all vertices in~$A_t \setminus A_2$. This
    proves the claim.
  \end{proof}

  To see that~$T$ is a tree, assume that $A_1,C_1,A_2\ldots,C_{t-1},A_t=
  A_1$ is a cycle. By the second claim, $A_t = A_1$ does not contain a vertex
  from~$A_1\setminus A_2$, but this contradicts~$A_1$ and~$A_2$ being maximal
  $c$-atoms that are distinct. To see that the center of~$T$ is a unique
  node, it suffices to observe that a separator cannot be a leaf of~$T$.

  To prove the connectedness property of decompositions, let
  $A_1,C_1,A_2,\ldots\allowbreak,C_{t-1},A_t$ be a path, such that~$A_1$
  and~$A_t$ contain a common vertex~$v$ that is not contained in~$A_i$
  for~$i\in \{2,\ldots,t-1\}$. Since separators are always contained in
  some adjacent atom, this is the only case that needs to be
  considered. However, the existence of such a path directly contradicts
  the second claim above. The covering property of tree decompositions
  follows from the fact that every edge of $G$ is part of some $c$-atom. 
  Hence, $T$ is a tree decomposition of~$G$.
\end{proof}

\begin{lemma}
  \label{lem:c-to-c++-atom-decomposition} 
  For every $d,c \in \Nat$, with~$d < c$, there is a logspace \dtm{} that, given a
  $d$-atom~$G$, outputs a tree decomposition $D = (T,\mathcal{B})$ 
  \begin{enumerate}
  \item whose bags are $c$-atoms,
  \item whose adhesion sets are cliques, and
  \item where the mapping is isomorphism-invariant. 
  \end{enumerate}
\end{lemma}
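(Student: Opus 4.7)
The plan is to induct on $c - d \geq 1$. For the base case $c - d = 1$, the input $G$ is itself a $(c-1)$-atom, so Lemma~\ref{lem:c-atom-tree} directly produces $(T_c(G), \mathcal{B}_c(G))$ as a tree decomposition of $G$. I would then verify that its bags---maximal $c$-atoms and minimum clique separators of size at most $c$---are indeed $c$-atoms (cliques trivially have no clique separators), and that every adhesion coincides with a separator-bag and is therefore a clique. Logspace computability follows from Lemma~\ref{lem:atoms:logspace:computable} together with Reingold's result that $\Lang{undirected-reachability} \in \Class{L}$, and isomorphism-invariance is immediate because every ingredient is defined canonically from $G$.

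For the inductive step $c - d \geq 2$, I would first invoke Lemma~\ref{lem:c-atom-tree} with parameter $d+1$ on the $d$-atom $G$, obtaining an outer tree decomposition $(T_0,\mathcal{B}_0)$ whose bags are either $(d+1)$-atoms or clique separators of size at most $d+1$. Each atom-bag $B$ induces a $(d+1)$-atom $G[B]$ by Lemma~\ref{le:atom-inseparable-set}, to which the inductive hypothesis applied at parameters $(d+1,c)$ yields an inner tree decomposition $D_B = (T_B,\mathcal{B}_B)$ of $G[B]$ into $c$-atoms with clique adhesions. After rooting each $T_B$ at its canonical center (which exists by the uniqueness-of-center clause of Lemma~\ref{lem:c-atom-tree} and, inductively, in the recursive step), I would splice each $D_B$ into $T_0$: replace the atom-node with bag $B$ by the tree $T_B$, and attach every adjacent separator-node (with bag $C \subseteq B$) as a leaf off the unique root-most node of $T_B$ whose bag contains $C$. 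Such a node exists because $C$ is a clique in $G[B]$ and cliques must sit in some bag of any tree decomposition. A routine verification would then check that the resulting $(T^*,\mathcal{B}^*)$ is a tree decomposition of $G$ (tree property by an edge count; covering because every edge of $G$ lies inside some atom-bag $B$ and is thus covered by a bag of $D_B$; connectedness because, for each fixed vertex $v$, the inner $v$-subtrees and separator $v$-nodes chain along the $v$-subtree of $T_0$ through the attachment node, which was chosen to contain $C$), that all bags are $c$-atoms, and that every adhesion is either inherited from some $D_B$ or equals a separator-bag $C$, hence is a clique.

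The main obstacle will be to keep the composite construction simultaneously isomorphism-invariant and computable in logspace. For invariance, each ingredient is invariant (the outer decomposition by Lemma~\ref{lem:c-atom-tree}, the inner decompositions by induction), and the attachment point is canonically the root-most node of $T_B$ containing $C$ under the canonical rooting at the unique center of $T_B$. For logspace, because $c - d$ is a constant the recursion has constant depth, so a node of $T^*$ can be addressed by a constant-length tuple of $O(\log n)$-bit coordinates pointing into successive levels of nesting; each level only requires atom computation, clique-separator enumeration, and reachability queries---all in $\Class{L}$---and logspace transducers compose, so the overall algorithm fits within logarithmic space.
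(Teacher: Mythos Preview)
Your proposal is correct and follows essentially the same approach as the paper: induct on $c-d$, use Lemma~\ref{lem:c-atom-tree} to obtain an outer decomposition into $(d{+}1)$-atoms and clique separators, recurse on each atom, and splice by attaching along the unique root-most bag containing the adhesion clique. The only cosmetic differences are that the paper takes the (vacuous) case $c-d=0$ as its base case and encodes the spliced tree via pairs $(B,A)$ rather than explicitly retaining the separator nodes; your phrasing ``as a leaf'' is slightly inaccurate since a separator node will in general connect several atom-subtrees, but your edge-count argument shows you have the right picture.
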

\begin{proof}
  We show the lemma by induction on~$c-d$. If~$c-d = 0$, then the
  $c$-atom $G$ is the unique bag of the tree decomposition $D$, which
  satisfies all requirements of the lemma. If~$c-d > 0$, we construct
  and prove the correctness of the constructed tree decomposition as
  follows.

  (Construction.) We use Lemma~\ref{lem:c-atom-tree} to construct a tree
  decomposition~$D' = (T',\mathcal{B}')$ whose bags are
  the graph's $(d+1)$-atoms and minimum clique separators. Applying the
  induction hypothesis, we compute for each $(d+1)$-atom~$A$ an
  isomorphism-invariant tree decomposition~$D_A = (T_A,\mathcal{B}_A)$ into
  its~$c$-atoms. We continue combining~$D'$ with the decompositions $D_A$ to
  construct $D = (T,\mathcal{B})$. We use nodes $V(T) := \{(B,A) \mid
  B \in V(T_A) \text{ and } A \in V(T') \}$ 
  for $T$. Two nodes~$(B_1,A_1)$ and~$(B_2,A_2)$ of~$T$ are adjacent
  if (1) $A_1 = A_2$ and~$B_1$ and~$B_2$ are adjacent in~$D_{A_1}$, or
  (2) $A_1$ and~$A_2$ are adjacent in~$T$ and for each~$i \in \{1,2\}$, $B_i$
  contains $A_1 \cap A_2$ and is closest to the root with this property in
  $D_{A_i}$. To each node of $(B,A)$ of $T$, $\mathcal{B}$ assigns the
  bag $B_{(B,A)} = B$. 

  Since distances in trees are logspace-computable, we can determine the bag
  closest to the root in the definition above. Thus, $T$ can be constructed in
  logspace based on constructing $T$ by Lemma~\ref{lem:c-atom-tree} and $T_A$ by
  induction.

  (Correctness of construction.)
  The tree~$T$ is well defined since the intersection of two atoms~$A_1$
  and~$A_2$ that are adjacent in~$D'$ is a clique and every clique must be
  contained in some bag of a tree decomposition. Moreover, the bags that contain a
  clique form a connected subtree and the bag closest to the root is well defined.

  Isomorphism invariance of~$T$ follows from the isomorphism invariance of the
  decompositions $D_A$, $D'$, and the uniqueness of the bag closest to the root
  containing a clique.

  We argue that~$T$ is a tree. It is connected since $D'$ and each $D_A$ is
  connected. To argue that it is cycle free, suppose~$(B_1,A_1),\ldots,(B_t,A_t)$
  with $(B_t,A_t) = (B_1,A_1)$ is a cycle. Note that for two atoms~$A$ and $A'$
  that are adjacent in~$D'$ the bag~$B$ and~$B'$ for which~$(B,A)$ is adjacent
  to~$(B,A')$ is unique. This implies that either the walk~$A_1,\ldots,A_t$
  contains a cycle, or there are indices~$1\leq j<k\leq t$, such that~$A_{j}=
  A_{j+1} = \dots = A_{k}$ and~$(B_{j},A_{j} ), (B_{j+1},A_j ), \ldots, (B_{k},A_j
  )$ is a closed walk, which implies that~$B_{j}, B_{j+1},\ldots,B_{t}$ is a
  cycle. Since both~$D'$ and~$D_{A}$ are acyclic, this yields a contradiction.

  It remains to show that with this definition~$T$ is a tree decomposition whose
  adhesion sets are cliques. If two vertices~$(B_1,A_1)$ and~$(B_2,A_2)$ are
  adjacent, then $B_1\cap B_2$ is an adhesion set either in~$D_{A_1} = D_{A_2}$,
  or in~$D$. In either case it is a clique. To show the connectivity property of
  tree decompositions, let~$(B_1,A_1),\ldots,(B_t,A_t)$ be a path in~$T$ such
  that~$B_1$ and~$B_t$ contain a vertex~$v$ that does not appear in~$B_i$
  for~$2\leq i\leq t-1$. This implies, since~$D'$ is a tree decomposition,
  that~$v$ is contained in all~$A_i$. In turn, this implies that~$v$ is contained
  in all~$B_i$, since each~$D_A$ is a tree decomposition.
\end{proof}

\begin{proof}[Proof of Lemma~\ref{lem:main-decomposition-cliques}]
  Let~$G$ be the input graph of tree width at most~$k$. Without loss of
  generality, we assume that it is connected. Then the graph $G$ is a
  $0$-atom. We apply Lemma~\ref{lem:c-to-c++-atom-decomposition} to $G$
  with $d = 0$ and $c = k+1$. Since $G$ has tree width at most $k$, the
  size of a largest clique in $G$ is bounded by $k+1$. Thus, the
  subgraphs induced by the bags, which do not contain clique separators
  of size at most $k+1$ by their construction, do not contain clique
  separators (of any size).
\end{proof}

\section{Decomposing graphs without clique separators}
\label{sec:decomposition-without-cliques}

The decomposition procedure from the previous section provides us with a tree
decomposition whose bags are clique-separator-free. In the present section, we
decompose clique-separator-free graphs further into isomorphism-invariant tree
decompositions of bounded width (formalized by
Lemma~\ref{lem:decomp-without-clique}). This needs two additional assumptions
that we later meet during the proofs of Theorems~\ref{th:isomorphism-tw}
and~\ref{th:canonization-tw}. First, the decomposition is based on two
distinguished nonadjacent vertices from the graph. Second, we assume that the
given graph is improved as defined next.

Let $\impr \colon \mathcal{G} \to \mathcal{G}$ be the mapping that takes a graph
$G$ and adds edges between all vertices $u,v \in V(G)$ with $\kappa(u,v) >
\tw(G)$. The $\impr$-operator \emph{improves} the graph by adding edges of $G$
based on its tree width. To avoid losing information, we introduce a function
$\col_{\impr(G)}$ that colors edges that appear originally in the inputs with a
different color than those coming from the improvement. The mapping $\impr$ is
isomorphism-invariant by definition. Besides this, we use three further
properties of the mapping $\impr$. First, the graph we get from applying $\impr$
is saturated in the sense that a second application of it does not add new
edges. Formally, this means $\impr(G) = \impr(\impr(G))$ for every graph $G$ as
proved in \cite[Lemma 2.5]{Lokshtanovetal2014b}. Second, the tree decompositions
of a graph~$G$ are exactly the tree decompositions of~$\impr(G)$. This implies
$\tw(G) = \tw(\impr(G))$ and is proved in \cite[Lemma
2.6]{Lokshtanovetal2014b}. Third, the mapping $\impr$ is logspace-computable for
graphs of bounded tree width. This follows from Reingold's algorithm for
$\Lang{undirected-reachability}$, and the fact that the tree width of a graph
bounds the size of the separators we need to consider in order to compute
$\impr$.

\begin{lemma}
  \label{lem:decomp-without-clique} 
  For every $k \in \Nat$, there is a $k' \in \Nat$ and a logspace-computable and 
  isomorphism-invariant mapping that turns every graph $G$ with a distinguished non-edge $\{u,v\} \notin
  E(G)$, where $G$ 
  \begin{enumerate}
  \item has tree width at most $k$,
  \item does not contain clique separators, and
  \item is improved (that means, $G = \impr(G)$),
  \end{enumerate}
  into a width-$k'$ tree decomposition $D = (T,\mathcal{B})$ for $G$. 
\end{lemma}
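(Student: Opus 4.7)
The plan is to build the decomposition recursively by canonical minimum-separator splits, combining the lattice structure of minimum $(u,v)$-separators from \cite{Lokshtanovetal2014a} with the balanced-separator framework of \cite{ElberfeldJT2010}. At the root, since $G = \impr(G)$ and $\{u,v\} \notin E(G)$, we have $\kappa_G(u,v) \leq \tw(G) \leq k$, so every minimum $(u,v)$-separator has size at most $k$. The minimum $(u,v)$-separators form a lattice, from which I pick a canonical separator $S$ (for instance, the one whose $u$-side component in $G - S$ is inclusion-minimal), which is isomorphism-invariant in $(G,u,v)$. The root bag $B_r := \{u,v\} \cup S$ then has at most $k+2$ vertices.

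The separator $S$ splits $V(G)$ into the $u$-side $A_u$ and the $v$-side $A_v$ with $A_u \cap A_v = S$. On each side I recurse after choosing a canonical new distinguished non-edge, e.g., the lexicographically first non-edge $\{s,s'\}$ with $s \in S$ and $s'$ in the side's interior under an invariant vertex order derived from $G$. If no such non-edge exists, the side induces a clique on at most $k+1$ vertices and fits into a single bag. Since the sides may lose the improved and clique-separator-free properties, at each recursive step I reapply $\impr$ (idempotent by \cite[Lemma 2.6]{Lokshtanovetal2014b} and logspace-computable via Reingold's theorem) and invoke Lemma~\ref{lem:main-decomposition-cliques} to re-decompose along any new clique separators. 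Flattening the resulting two-level structure increases the width only by a bounded additive constant, so a uniform width bound $k' = k'(k)$ is maintained.

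The main obstacle will be logspace-computability, because a naive recursion along an arbitrary canonical separator may descend to linear depth. I address this by selecting, at each step, a \emph{balanced} canonical minimum separator, obtained by an invariant filtering of the lattice of minimum $(u,v)$-separators to those that split the current part into pieces of size at most a constant fraction of the whole, followed by the canonical choice within this invariant subfamily. This mirrors the balanced-separator construction of \cite{ElberfeldJT2010} but with isomorphism-invariant selection, keeping the recursion-tree depth $O(\log n)$. Each recursion frame is then described by a constant-bit side indicator plus $O(\log n)$ bits for the new distinguished pair, and since frames can be recomputed on demand rather than stored, the whole algorithm runs in logspace: every node of the output decomposition is emitted by replaying the $O(\log n)$-length canonical recursion path from $(G,u,v)$.
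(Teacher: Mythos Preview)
Your plan diverges from the paper's approach and has a genuine gap at its load-bearing step.

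The paper does \emph{not} recurse on a single canonical minimum $(u,v)$-separator. Instead, for a graph with interface $(G,I)$ it takes the \emph{union} $\operatorname{sep}_s(G,I)$ of all minimum separators between pairs (or bounded-size subsets) of interface vertices; Fact~\ref{fa:without-cliques-size-bounds} guarantees that this union has size bounded in~$k$ and that every resulting component again has interface of size bounded in~$k$. Crucially, the paper never argues that these components are balanced. Logspace is achieved not by bounding recursion depth but by the \emph{descriptor decomposition} machinery of~\cite{ElberfeldJT2010}: one enumerates, as nodes of a directed graph~$M$, \emph{all} subgraphs-with-interface whose interface has size at most $\operatorname{medium}(k)$ (each encodable in $O(\log n)$ bits via the constant-size interface plus a representative vertex of the component), inserts edges of~$M$ by one application of the separator rule, and then extracts the tree reachable from the root. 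No recursion stack is needed at all.

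Your argument, by contrast, hinges on the existence of a \emph{balanced} minimum $(u,v)$-separator, and this is where it breaks. The lattice of minimum $(u,v)$-separators need not contain any balanced element: in an improved atom the unique minimum $(u,v)$-separator can sit right next to~$u$, leaving almost the entire graph on the $v$-side. Nothing in~\cite{Lokshtanovetal2014a} or~\cite{ElberfeldJT2010} supplies such a balance guarantee for a single minimum separator, so your $O(\log n)$ depth bound is unsupported. Two secondary issues compound this: the ``invariant vertex order derived from~$G$'' you use to pick the next non-edge is undefined (an isomorphism-invariant linear order on vertices is essentially a canonization, which is what we are trying to build), and re-running $\impr$ together with Lemma~\ref{lem:main-decomposition-cliques} inside each recursive call gives you a decomposition whose interaction with the outer recursion you have not controlled; the claimed ``bounded additive constant'' on the width after flattening is asserted, not argued.
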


The rest of the section is devoted to the proof of the lemma.
The construction of the decomposition is based on recursively splitting the
graph into smaller subgraphs using size-bounded and
isomorphism-invariant separators. In order to do this, we adapt in a first step
the isomorphism-invariant separators from~\cite{Lokshtanovetal2014a}
and show their logspace-computability (this is done in
Section~\ref{sec:without-cliques-separators}).  Then we combine this
with a logspace approach for handling the recursion involved in this
approach from~\cite{ElberfeldJT2010} (this is done in
Section~\ref{sec:construct-decomposition-without-cliques}).

\subsection{Constructing isomorphism-invariant separators} 
\label{sec:without-cliques-separators}

Lokshtanov et al. \cite{Lokshtanovetal2014a} identified an isomorphism-invariant
family of separators that can be used as part of a recursive algorithm for
constructing isomorphism-invariant tree decompositions. We adapt their approach
of constructing separators, which is tailored to find (time-efficient)
algorithms proving fixed-parameter tractability, to work in logspace. For this
we need to adjusted some terminology.

For a graph $G$ and a set of vertices $V' \subseteq V(G)$, we define the
neighborhood of $V'$ in $G$ to be the set $N_G(V') := \{
v \in V(G) \setminus V' \mid \text{there exists } w \in V' \text{ with } \{v,w\}
\in E(G)\}$.
A \emph{graph with interface} is a pair $(G,I)$ consisting of a graph $G$ and an
\emph{interface} $I \subseteq V(G)$ where    
\begin{enumerate}
\item $G \setminus I$ is connected, and
\item $I = N_G(V(G) \setminus I)$.
\end{enumerate}
We split a graph into several components based on separators for its
interface. Let $G$ be a graph and $X,Y \subseteq V(G)$. It is well-known (see,
for  example, \cite{Lokshtanovetal2014a}) that there is a unique separator $(A,B)$
for $X$ and $Y$ of minimum size and (inclusion-wise) minimal $A$. We denote it
by $(A_X,B_Y)$ and set $\operatorname{sep}(X,Y) := A_X \cap B_Y$. Exactly the
same property holds when considering vertices $x,y \in V(G)$. In this case we
denote the corresponding separation by $(A_x,B_y)$, and set
$\operatorname{sep}(x,y) := A_x \cap B_y$. We use these separations to define
the \emph{separator} $\operatorname{sep}_s(G,I)$ of a graph with interface
$(G,I)$ with respect to a threshold value $s \in \Nat$: If $|I| \leq s$, we set  
\begin{align*} 
  \operatorname{sep}_s(G,I) := I \cup
  \bigcup_{\substack{ x,y \in I, \, x \neq y, \textnormal{ and} \\
  \kappa_G(x,y) \leq \tw(G)}} \operatorname{sep}(x,y) , \textnormal{ and }
\end{align*}
\begin{align*} 
  \operatorname{sep}_s(G,I) := I \cup \bigcup_{\substack{ X,Y \subseteq
  I, \, X \cap Y = \emptyset, \, |X| = |Y| = \tw(G) + 1, \textnormal{
  and} \\ \kappa_G(X,Y) \leq \tw(G)}} \operatorname{sep}(X,Y), \textnormal{
  otherwise.}
\end{align*}

The following proposition follows from the definition of
$\operatorname{sep}_s(G,I)$, and the constant bound on the tree width
of the given graphs. In this case, in order to compute
$\operatorname{sep}_s(G,I)$, we only need to enumerate vertex sets of
constant size combined with reachability queries in undirected
graphs. Moreover, for $s \in \Nat$, we know that $(G,I) \mapsto
\operatorname{sep}_s(G,I)$ is isomorphism-invariant by definition.

\begin{proposition}
  \label{pr:separators-logspace}
  For every $k \in \Nat$, there is a logspace \dtm{} that, given a graph with
  interface $(G,I)$ where $\tw(G) \leq k$ and $s \in \Nat$, outputs
  $\operatorname{sep}_s(G,I)$. 
\end{proposition}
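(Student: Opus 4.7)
The plan is to directly compute $\operatorname{sep}_s(G, I)$ by evaluating the defining formula in a bounded-quantifier fashion. The first step is to determine the exact tree width $w := \tw(G)$, which is doable in logspace by iterating $k' \in \{0, 1, \dots, k\}$ and calling the logspace decision procedure for $\tw(G) \leq k'$ provided by~\cite{ElberfeldJT2010}. Next, compare $|I|$ with $s$ to choose which branch of the definition applies. In the branch $|I| \leq s$ the algorithm cycles through all ordered pairs $(x, y)$ of distinct vertices of $I$; in the other branch it cycles through all ordered pairs $(X, Y)$ of disjoint subsets of $I$ with $|X| = |Y| = w + 1$. In either case the object being iterated has total size at most $2(k+1)$, so its encoding takes $O(\log n)$ space. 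For each pair, the algorithm tests the condition $\kappa_G(\cdot, \cdot) \leq w$ and, if it holds, computes $\operatorname{sep}(\cdot, \cdot)$ and writes its vertices to the output, after first emitting the vertices of $I$.

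The core subroutine is the logspace computation, for given bounded-size $X$ and $Y$, of $\kappa_G(X, Y)$ and of the leftmost minimum separator $\operatorname{sep}(X, Y) = A_X \cap B_Y$. The value $\kappa := \kappa_G(X, Y)$ is obtained by iterating $j \in \{0, 1, \dots, w\}$, enumerating all vertex subsets $S \subseteq V(G)$ of size $j$, and testing via Reingold's undirected-reachability algorithm~\cite{Reingold2008} whether $S$ separates $X$ from $Y$ in $G$; the smallest successful $j$ is $\kappa$. For the leftmost $A$-side, the plan is to exploit the classical submodularity of vertex cuts, which implies that $A_X$ equals the intersection, over all minimum $(X, Y)$-separators $S$ of size $\kappa$, of the ``$X$-closed'' sides $\{v : v \in S \text{ or } v \text{ is reachable from } X \setminus S \text{ in } G \setminus S\}$. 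Membership of a vertex $v$ in $A_X$ thus becomes the bounded-universal condition ``for every $\kappa$-sized minimum separator $S$, $v \in S$ or $v$ is reachable from $X$ in $G \setminus S$'', whose atoms are reachability queries and which therefore fits in logspace. Once $A_X$ is available, $\operatorname{sep}(X, Y)$ is read off as the inner boundary of $A_X$: a vertex $v$ belongs to it iff $v \in A_X$ and $v$ has a neighbor in $V(G) \setminus A_X$. The vertex case $\operatorname{sep}(x, y)$ is entirely analogous, with $X = \{x\}$ and $Y = \{y\}$.

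Putting the pieces together, the whole algorithm is a fixed composition of logspace subroutines (tree-width computation, reachability, enumeration of bounded-size subsets), which is itself in $\Class{L}$ by closure of logspace under composition. Isomorphism invariance of $(G, I) \mapsto \operatorname{sep}_s(G, I)$ is immediate from the defining formula, which refers only to abstract graph structure. The point I expect to have to be most careful about is the leftmost-separator characterization used above, namely that the inclusion-minimal $A$-side is the intersection of $A$-sides over all minimum separators; this is a standard consequence of the submodular inequality for vertex cut functions and is implicit in the way $\operatorname{sep}(X, Y)$ is used in~\cite{Lokshtanovetal2014a}, but it is the only nonroutine fact in the argument. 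Everything else is bookkeeping that works because all quantifications are over vertex sets of size at most $2(k+1)$, a constant depending only on $k$.
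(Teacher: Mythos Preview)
Your proposal is correct and follows the same approach the paper intends: the paper does not give a formal proof of this proposition at all, merely observing in the preceding paragraph that ``in order to compute $\operatorname{sep}_s(G,I)$, we only need to enumerate vertex sets of constant size combined with reachability queries in undirected graphs.'' Your plan is a faithful and considerably more detailed expansion of precisely this one-liner---computing $\tw(G)$ via~\cite{ElberfeldJT2010}, enumerating the bounded-size pairs, and reducing the leftmost minimum separator to reachability through the submodularity-based characterization---so there is nothing to correct and no genuinely different route to compare.
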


The following fact on size bounds for separators and neighborhoods follows from
the statements and proofs of \cite[Lemmata 3.3 and 3.4]{Lokshtanovetal2014b}.

\begin{fact}
  \label{fa:without-cliques-size-bounds} 
  There are functions $\operatorname{small} \in O(k)$,
  $\operatorname{medium} \in O(k^3)$, and $\operatorname{large} \in 
  O(2^{k \log k})$ with the following properties: Let $(G,I)$ be a graph with
  interface, such that $\tw(G) \leq k$, $G$ is improved and an atom such that
  \begin{enumerate}
  \item $G[I]$ is not a clique, and 
  \item $|I| \leq \operatorname{medium}(\tw(G))$.
  \end{enumerate}
  Moreover, let $S :=
  \operatorname{sep}_{\operatorname{small}(\tw(G))}(G,I)$. Then $S
  \setminus I \neq \emptyset$, $|S| \leq \operatorname{large}(\tw(G))$, and for
  every component~$C_i$ of $G \setminus S$ with its graph with interface
  $(G_i,I_i) := (G[V(C_i) \cup N_G(V(C_i))],N_G(V(C_i)))$
  \begin{enumerate}
  \item $G[I_i]$ is not a clique, and
  \item $|I_i| \leq \operatorname{medium}(\tw(G))$.
  \end{enumerate}
\end{fact}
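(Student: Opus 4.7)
The plan is to unpack the proofs of Lemmata 3.3 and 3.4 in \cite{Lokshtanovetal2014b} and extract the concrete dependence on $k = \tw(G)$, choosing $\operatorname{small} \in O(k)$, $\operatorname{medium} \in O(k^3)$, and $\operatorname{large} \in 2^{O(k\log k)}$. First I would bound $|S|$ directly from the definition of $\operatorname{sep}_s(G,I)$: each constituent minimum separator $\operatorname{sep}(x,y)$ or $\operatorname{sep}(X,Y)$ has size at most $\tw(G) \leq k$. In the ``small interface'' branch $|I| \leq \operatorname{small}(k)$, the union ranges over at most $\binom{|I|}{2}$ pairs, contributing $O(k \cdot |I|^2) = O(k^3)$ vertices; in the ``otherwise'' branch we still have $|I| \leq \operatorname{medium}(k) = O(k^3)$, and the union ranges over at most $\binom{|I|}{k+1}^2 = 2^{O(k\log k)}$ pairs of sets, each contributing at most $k$ vertices, so $|S| \leq \operatorname{large}(k)$.

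Next I would establish $S \setminus I \neq \emptyset$. Since $G[I]$ is not a clique, pick a nonadjacent pair $u,v \in I$; because $G = \impr(G)$, nonadjacency implies $\kappa_G(u,v) \leq \tw(G)$, so this pair (or an analogously chosen pair of $(k{+}1)$-sets in the otherwise branch) contributes to $\operatorname{sep}_s(G,I)$. Suppose for contradiction that the corresponding separator $\operatorname{sep}(u,v)$ lies entirely in $I$. The interface conditions guarantee that both $u$ and $v$ have neighbors in $V(G) \setminus I$, and $G \setminus I$ is connected by hypothesis. Concatenating the edge from $u$ to its neighbor in $V(G)\setminus I$, a path through the connected graph $G \setminus I$ to a neighbor of $v$, and the edge to $v$, yields a $u$-$v$ path avoiding $\operatorname{sep}(u,v)$, contradicting that it is a separator.

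For the component analysis, I would observe that $I_i = N_G(V(C_i)) \subseteq S$ by definition of a component of $G \setminus S$. Non-cliqueness of $G[I_i]$ is automatic from $G$ being an atom: if $G[I_i]$ were a clique, then $I_i$ itself would be a clique separator of $G$ (separating the nonempty $V(C_i)$ from the nonempty remainder of $G$, which is also nonempty since $V(G) \setminus (V(C_i) \cup I_i) \supseteq S \setminus I_i \supseteq S \setminus I \neq \emptyset$ together with other components), contradicting the atom assumption.

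The main obstacle is the bound $|I_i| \leq \operatorname{medium}(k) = O(k^3)$, which does not follow from $|I_i| \leq |S|$ since $|S|$ is only bounded by $\operatorname{large}(k) = 2^{O(k\log k)}$. The refined argument from \cite[Lemma 3.4]{Lokshtanovetal2014b} that I would transcribe works as follows. Every vertex $w \in I_i$ is adjacent to some vertex in $C_i$ and therefore must appear in the unique minimum separator $\operatorname{sep}(x,y)$ (respectively $\operatorname{sep}(X,Y)$) for some pair whose one endpoint is ``on the same side of $S$ as $C_i$''; using the lattice structure of minimum separations (submodularity of $(A_X,B_Y)$) one shows that only $O(k^2)$ such pairs from $I$ actually contribute vertices to $I_i$, each contributing at most $k$ vertices, giving $|I_i| \leq O(k^3)$ uniformly across both branches. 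Carrying out this bookkeeping is the heart of the fact, and defines $\operatorname{medium}(k)$ explicitly; the bounds from the first paragraph then close under this choice.
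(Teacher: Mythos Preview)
The paper does not supply its own proof of this statement: it is recorded as a \emph{Fact} and attributed directly to Lemmata~3.3 and~3.4 of~\cite{Lokshtanovetal2014b}. Your plan to unpack those lemmata is therefore exactly the paper's ``approach'', just made explicit. The counting argument for $|S|$ and the $S\setminus I\neq\emptyset$ argument (nonadjacent $u,v\in I$ give $\kappa_G(u,v)\le\tw(G)$ because $G$ is improved; a $u$--$v$ path through the connected set $V(G)\setminus I$ defeats any separator contained in $I$) are sound.

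Two small points. In your clique-separator step the chain $S\setminus I_i \supseteq S\setminus I$ is not valid in general, since $I_i$ need not be contained in $I$; the correct argument is a short case split: if $G\setminus S$ has another component, or if $I_i\subsetneq S$, then $I_i$ separates $C_i$ from something and would be a clique separator of the atom $G$; otherwise $I_i=S\supseteq I$, and then $G[I_i]$ inherits the non-clique pair from $G[I]$. More importantly, the heart of the fact is the bound $|I_i|\le\operatorname{medium}(k)$, and your description (``only $O(k^2)$ such pairs actually contribute vertices to $I_i$ by submodularity'') is a placeholder rather than a proof. That step genuinely requires carrying out the uncrossing argument from~\cite{Lokshtanovetal2014b}; since the paper itself defers to that reference, deferring is acceptable, but if you intend to write out a self-contained proof you will have to reproduce that argument in full.
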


\subsection{Constructing isomorphism-invariant tree decompositions}
\label{sec:construct-decomposition-without-cliques}

To construct iso\-mor\-phism-in\-variant tree decompositions using the
previously defined (isomorphism-invariant) separators for graphs, we encapsulate
their recursive computation using the concept of descriptor decompositions
from~\cite{ElberfeldJT2010}. We slightly adjust the terminology
from~\cite{ElberfeldJT2010} by using graphs with interfaces directly instead of
using descriptors.

A \emph{descriptor decomposition} for a graph $G$ is a pair $(M,\mathcal{R})$
consisting of a directed graph $M$ and a collection $\mathcal{R}$ of subgraphs
with interfaces $R_n = (H,I)$ for every node $n \in V(M)$ where $(V(G) \setminus
(V(H) \setminus I),V(H))$ is a separator in $G$. Beside this, $(M,\mathcal{R})$
contains a \emph{root node} $r$ with $R_{r} = (G,I)$ for some $I$. Moreover, for
every node $n \in V(M)$ with $R_{n} = (H,I)$ and children $n_1,\dots,n_m$ of $n$
in $M$ with $R_{n_i} = (H_i,I_i)$ for $i \in [m]$ the following properties hold:
\begin{enumerate}
\item for each $(H_i,I_i)$, we have $V(H_i) \subseteq V(H)$ and $(V(H_i) \setminus
  I_i) \subseteq (V(H) \setminus I)$ and at least one inclusion is proper,
\item for each $(H_i,I_i)$, $V(H_i)$ contains at least one vertex of $V(H)
  \setminus I$, 
\item for all distinct $(H_i,I_i)$ and $(H_j,I_j)$, $(V(H_i) \setminus I_i) \cap
  (V(H_j) \setminus I_j) = \emptyset$, and
\item each edge of $H$ is present in
  $G[I]$ or some $H_i$.
\end{enumerate}

Descriptor decompositions contain tree decompositions in the following way
\cite[Lemma III.4]{ElberfeldJT2010}. Given a descriptor decomposition
$(M,\mathcal{R})$ rooted at $r \in V(M)$, the subgraph of $M$ reachable from $r$
is a tree $T$ that can be turned into a tree decomposition $(T,\mathcal{B})$ by
setting $B_n$ for each $n \in V(T)$ to be the union of the interface $I$ of $R_n
= (H,I)$ and all vertices $x$ that are in interfaces of at least two of the $I,
I_1, \dots, I_{m}$. The \emph{width} of $(M,\mathcal{R})$ is the width of
$(T,\mathcal{B})$.

Mapping a descriptor decompositions $(M,d)$ to its tree decompositions
$(T,\mathcal{B})$ is isomorphism-invariant. Moreover, from \cite[Lemma
III.5]{ElberfeldJT2010} we know that turning descriptor decompositions into
their tree decompositions is logspace-computable. In the light of these facts,
all we need to do to, finally, prove Lemma~\ref{lem:decomp-without-clique} is to
construct an isomorphism-invariant descriptor decomposition
$(M,\mathcal{R})_{\operatorname{inv}}$ of a bounded (approximate) width $k' \in
\Nat$.

\begin{lemma}
  \label{lem:descriptor-decomposition} 
  For every $k \in \Nat$, there is a $k' \in \Nat$ and a logspace-computable and 
  isomorphism-invariant mapping that turns every graph $G$ with a distinguished
  non-edge $\{u,v\} \notin 
  E(G)$, where $G$ 
  \begin{enumerate}
  \item has tree width at most $k$,
  \item is an atom, and
  \item is improved,
  \end{enumerate}
  into a width-$k'$ descriptor decomposition $(M,\mathcal{R})$. 
\end{lemma}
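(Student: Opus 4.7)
The plan is to construct $(M,\mathcal{R})$ by recursively splitting $G$ along the isomorphism-invariant separators $\operatorname{sep}_s$ from Proposition~\ref{pr:separators-logspace}, seeded at the distinguished non-edge $\{u,v\}$, and to implement this recursion inside the descriptor-decomposition framework of~\cite{ElberfeldJT2010}. The high-level strategy is to plug our iso-invariant separators into that framework, letting Fact~\ref{fa:without-cliques-size-bounds} supply the structural guarantees needed at every recursive step.

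First I would seed the recursion. Since $G$ is improved and $\{u,v\}\notin E(G)$, we have $\kappa_G(u,v)\leq k$; moreover $G[\{u,v\}]$ is not a clique and $|\{u,v\}|=2\leq\operatorname{medium}(k)$, so the hypotheses of Fact~\ref{fa:without-cliques-size-bounds} are met for the interface $\{u,v\}$. If $G\setminus\{u,v\}$ is connected, I would take the root $R_r=(G,\{u,v\})$ directly; otherwise I would let $R_r=(G,\emptyset)$ and give it one child per connected component $C$ of $G\setminus\{u,v\}$, with $R_{n_C}:=(G[V(C)\cup\{u,v\}],\{u,v\})$. Atomicity of $G$ forces $N_G(V(C))=\{u,v\}$ for every such $C$ (otherwise $\{u\}$ or $\{v\}$ would be a cut vertex and hence a clique separator), so each child is a legitimate subgraph with interface.

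The recursive step at a node $n$ with $R_n=(H,I)$ is: compute $S_n:=\operatorname{sep}_{\operatorname{small}(k)}(H,I)$ via Proposition~\ref{pr:separators-logspace}, and for every connected component $C$ of $H\setminus S_n$ create a child $n_C$ with $R_{n_C}:=(H[V(C)\cup N_H(V(C))],\,N_H(V(C)))$; the node $n$ is a leaf when $V(H)\setminus S_n=\emptyset$. At every non-root node the interface is a proper separator of $G$, so atomicity again implies $G[I]$ is not a clique, and Fact~\ref{fa:without-cliques-size-bounds} gives both that each child satisfies the same two hypotheses (well-definedness of the recursion) and that $S_n\setminus I\neq\emptyset$ (strict progress into strictly smaller subgraphs). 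Setting $k':=\operatorname{large}(k)$ and using the size bound $|S_n|\leq\operatorname{large}(k)$ from the same fact, every bag of the induced tree decomposition lies inside some $S_n$, giving width at most $k'$. Isomorphism-invariance follows inductively from iso-invariance of the seeding (fixed by the input non-edge), of $\operatorname{sep}_s$ (by definition), and of the neighborhood and connected-component operations.

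The main obstacle is realizing a recursion whose depth may be linear in $|V(G)|$ within a logarithmic space budget. The plan is to reuse the descriptor trick from~\cite{ElberfeldJT2010}: each node $n\in V(M)$ is not stored but identified by a bounded-size label within its parent's component list, and the descriptor $R_n$ is recomputed on demand by replaying the root-to-$n$ sequence of separator-and-component choices. Each replay step invokes only Proposition~\ref{pr:separators-logspace} and a reachability oracle (Reingold's theorem) to identify components, both in $\Class{L}$; since logspace functions compose and are closed under $\Class{L}$-oracle queries, the overall construction of $(M,\mathcal{R})$ fits in logspace.
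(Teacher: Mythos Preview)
Your construction of the nodes and edges of $(M,\mathcal{R})$---seeding at $\{u,v\}$, recursively applying $\operatorname{sep}_{\operatorname{small}(k)}$, and using Fact~\ref{fa:without-cliques-size-bounds} to propagate the interface hypotheses---matches the paper's. The width bound and isomorphism-invariance arguments are also fine. However, the logspace justification has a genuine gap.

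You correctly observe that the recursion depth may be linear in $|V(G)|$: Fact~\ref{fa:without-cliques-size-bounds} only guarantees $S\setminus I\neq\emptyset$, so a single vertex may be peeled off per level. Your proposed remedy---identify a node $n$ by ``a bounded-size label within its parent's component list'' and recompute $R_n$ ``by replaying the root-to-$n$ sequence of separator-and-component choices''---does not fit in logspace. To replay that sequence you must \emph{store} it, and a sequence of $\Theta(|V(G)|)$ bounded-size local labels is $\Theta(|V(G)|)$ bits, not $O(\log|V(G)|)$. Equivalently, composing $\Theta(n)$ logspace functions is not a logspace operation; the closure you invoke holds only for a fixed (constant) number of compositions. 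This is precisely the obstacle descriptor decompositions are designed to avoid, and your paragraph reintroduces it.

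The paper's fix is to notice that a node $(H,I)$ is \emph{globally} determined by the interface $I$ together with the (single) connected component of $G-I$ that equals $V(H)\setminus I$. Since $|I|\le\operatorname{medium}(k)$ is a constant depending only on $k$, there are at most $n^{O(1)}$ candidate nodes, each nameable by a constant-size tuple of vertices (the elements of $I$ plus one vertex pinning down the component), hence by $O(\log n)$ bits. One then simply enumerates all such candidates and, for each ordered pair, checks locally via Proposition~\ref{pr:separators-logspace} and \Lang{undirected-reachability} whether the second arises from the first by one step of the splitting rule. No path replay is needed; each edge test is a single logspace computation. Your argument becomes correct once you replace the path-based identification by this global descriptor.

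A smaller omission: you do not add a child $(G[S_n],S_n)$ representing the separator itself. The paper inserts this extra leaf to guarantee Property~(4) of descriptor decompositions: an edge of $H$ with both endpoints in $S_n\setminus I$ need not lie in any component's $H_i$, so without the separator leaf it may be uncovered and the resulting object is not a descriptor decomposition.
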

\begin{proof}
  Let $\operatorname{small}, \operatorname{medium}, \operatorname{large} \colon
  \Nat \to \Nat$ be the functions satisfying
  Fact~\ref{fa:without-cliques-size-bounds}. We consider the directed graph~$M$
  whose nodes correspond to all subgraphs with interfaces $(H,I)$ where $(V(G)
  \setminus (V(H) \setminus I),V(H))$ is a separation in $G$ with separator size
  $|I| \leq \operatorname{medium}(\tw(G))$. We insert an edge from a node $n$ to a
  node~$n'$ if the graph with interface of $n'$ arises (as a component) by
  applying Fact~\ref{fa:without-cliques-size-bounds} to the one of $n$. In
  addition, we insert an edge from $n$ to a node $(G[S],S)$, which represents
  the corresponding separator $S$. We
  declare the graph with interface $(G,\emptyset)$ to be the root $r$ of
  $(M,\mathcal{R})$ and, in addition, connect it to all $(G[C \cup
  \{u,v\}],\{u,v\})$ where $C \subseteq V(G)$ is the vertex set of a component of
  $G - \{u,v\}$.

  For a constant bound on the tree width, constructing $(M,\mathcal{R})$ can be
  done by iterating over all candidate subgraphs with interfaces of a given graph
  $G$ and using Proposition~\ref{pr:separators-logspace} to construct the
  corresponding separator and connecting it with the children.

  To show that $(M,\mathcal{R})$ is a descriptor decomposition, we first observe
  that it has a root node $r$ where $R_r$ is $G$ with interface
  $\{u,v\}$. Moreover, we need to check Properties~(1) to~(4) of descriptor
  decompositions: Property (1) follows from the fact that each separator covers
  the interface $I$ and extends it. Properties~(2) and~(3) follow from the fact that
  we always consider nonempty components that are disjoint, respectively. Every
  edge is contained in $G[I]$, in $G[S]$, or in a component. The edges that are
  only in $G[S]$ are covered by $(G[S],S)$, which ensures Property (4). The bound
  on the width follows from the definition of~$S$.
\end{proof}

\begin{proof}[Proof of Lemma~\ref{lem:decomp-without-clique}] 
  To prove the lemma, we first apply Lemma~\ref{lem:descriptor-decomposition} in
  order to construct a descriptor decomposition $(M,\mathcal{R})$ that is
  isomorphism-invariant and has a bounded width. We turn it into a tree
  decomposition $(T,\mathcal{B})$ of the same width as discussed above, which is
  an isomorphism-invariant mapping. Thus, the combined mapping from $G$ with
  distinguished pair~$\{u,v\}$ to the tree decomposition is isomorphism-invariant
  as well. 
\end{proof}

\section{Isomorphism-based ordering of nested tree decompositions}  
\label{sec:nested}

Nested tree decompositions are tree decompositions whose parts are not just
bags, but where every bag is associated with a family of tree decompositions for
the bag's torso. We use polynomial-size nested tree decompositions to represent
exponential-size families of width-bounded tree decompositions that arise by
replacing bags with tree decompositions from their families. In order to solve
the isomorphism problem with the help of nested tree decompositions, we use a
recursively defined weak ordering on pairs of graphs and nested tree
decompositions. Incomparable elements in this weak ordering represent isomorphic graphs. 

In Section~\ref{sec:nested-definition} we define nested tree decompositions. To
define the ordering on nested tree decompositions in
Section~\ref{sec:ordering-nested}, we first define concepts related to weak
orderings in Section~\ref{sec:ordering-sequences}.

\subsection{Definition of nested tree decompositions}
\label{sec:nested-definition}

A \emph{nested (tree) decomposition} $\bar{D} = (T,\mathcal{B},\mathcal{D})$ for
a graph~$G$ consists of a tree decomposition $(T,\mathcal{B})$ for $G$, and a
family $\mathcal{D} = (\mathcal{D}_n)_{n \in V(T)}$ where every $\mathcal{D}_n$
is a family of tree decompositions $D \in \mathcal{D}_n$ for the torso of
$n$. Normal tree decompositions can be viewed as nested decompositions where
$\mathcal{D}_n$ is empty for every $n \in V(T)$. We adjust some terminology that
usually applies to tree decompositions for the use with nested decompositions.
Let $\bar{D} = (T,\mathcal{B},\mathcal{D})$ be a nested decomposition. The
definition of the \emph{width} of a bag $B_n$ in a nested decomposition depends
on whether $\mathcal{D}_n$ is empty or contains a set of tree decompositions. If
$|\mathcal{D}_n| = 0$, we set $\tw(B_n) := |B_n| - 1$ and $\tw(B_n) := \max\,
\{\tw(D) \mid D \in \mathcal{D}_n\}$, otherwise. The \emph{width} of $\bar{D}$
is $\tw(\bar{D}) := \max \{\tw(B_n) \mid n \in V(T)\}$. The \emph{size} of
$\bar{D}$ is $|\bar{D}| := \sum_{n \in V(T)} (1 + \max\, \{ |D| + 1 \mid D \in
\mathcal{D}_n\})$, where $|\mathcal{D}_n| = 0$ implies $\max\, \{ |D| + 1 \mid D
\in \mathcal{D}_n\} = 0$.

An \emph{(unordered) root set}~$M$ of a nested decomposition~$\bar{D} =
(T,\mathcal{B},\mathcal{D})$ is a subset~$M\subseteq B_r$ of the root bag~$B_r$
of~$\bar{D}$ with (1) $M = B_r$ in case~$|\mathcal{D}_r| = 0$, and (2) every~$D
\in \mathcal{D}_r$ has a bag $B$ with $M \subseteq B$ in case~$|\mathcal{D}_r |
>0$. An \emph{ordered root set}~$\sigma$ is an ordering of an unordered
root set.

\emph{Refining} a nested decomposition $\bar{D} = (T,\mathcal{B},\mathcal{D})$
with respect to a tree decomposition $D \in \mathcal{D}_r$ for the root $r \in
V(T)$ and an ordered root set~$\sigma$ is done as follows. First, we decompose
$G[B_r]$ using~$D$. Then, for each child bag~$B_c$ of $B_r$ in $\bar{D}$, we
find the highest bag in $D$ that contains the adhesion set $B_{\{r,c\}} = B_r
\cap B_c$ and make~$B_c$ adjacent to it. A bag of this kind exists since, by
definition, $D$ is a tree decomposition of the torso of $B_r$. We add a new bag
containing the elements of~$\sigma$. This bag is the new root of the obtained
decomposition and adjacent to the highest bag in~$D$ that contains all elements
of~$\sigma$ (in particular, this operation may change which bag of $D$ is
highest). The newly constructed nested decomposition is said to be obtained by \emph{refining}~$\bar{D}$ and
denoted by~$\bar{D}_{D,\sigma}$. The size of a nested decomposition decreases
when it is refined. That means $|\bar{D}_{D,\sigma}| < |\bar{D}|$ holds. We use this
property for proofs by induction. 

To be able to distinguish original bags and bags from refining decompositions,
we could mark the bags of $D$, which arise from the refinement step. We
circumvent the need to mark the bags by assuming that the bags~$B_n$ with empty
$\mathcal{D}_n$ are exactly the marked ones. In turn, we require from all nested
decompositions~$\bar{D}$ we consider that the set of bags~$B_n$ with empty
$\mathcal{D}_n$ form a connected subtree in~$\bar{D}$ containing the root.

\begin{proposition}
  The mapping that turns a nested decomposition $\bar{D} =
  (T,\mathcal{B},\mathcal{D})$ with decomposition $D \in \mathcal{D}_r$ and an
  ordered root set $\sigma$ into $\bar{D}_{D,\sigma}$ is logspace-computable and
  isomorphism-invariant. 
\end{proposition}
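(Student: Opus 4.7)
The plan is to unfold the definition of $\bar{D}_{D,\sigma}$ into a concrete list of nodes, bags, and edges, and then verify each of the two claims locally. Concretely, the new tree has nodes (i) all nodes of $T$ except the former root $r$, (ii) all nodes of $D$, and (iii) one fresh node $r'$. The bag assignment is inherited from $\bar{D}$ on part (i), from $D$ on part (ii), and equal to the underlying set of $\sigma$ at $r'$. The edges are: all edges of $T$ not touching $r$, all edges of $D$, one edge from each former child $c$ of $r$ to the bag of $D$ highest (closest to the root of $D$) among those that contain the adhesion $B_r \cap B_c$, and one edge from $r'$ to the bag of $D$ highest among those containing the underlying set of $\sigma$; the node $r'$ becomes the new root. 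The families $\mathcal{D}_n$ for $n \in V(T)\setminus\{r\}$ are inherited unchanged, and each bag arising from $D$ (as well as $r'$) is assigned the empty family, consistent with the convention that bags with empty associated family form the connected subtree around the root that distinguishes refined bags from original ones.

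For logspace-computability I would observe that every ingredient is a bounded combination of membership tests in the input plus reachability queries in the tree $D$. Locating the ``highest bag containing $X$'' in $D$ amounts to finding any bag of $D$ that contains $X$ (these form a connected subtree by the connectedness property of the tree decomposition $D$ of the torso of $B_r$, so they are exactly the bags whose walk towards the root of $D$ still contains $X$) and then following that walk to its topmost node; existence of at least one such bag is guaranteed by the definition of adhesion set of a torso and by the definition of a root set. Each step is either a constant-size membership check against the input or a tree-reachability query in $D$, which is in $\Class{L}$ by Reingold's theorem, and the construction only needs $O(\log n)$ many index counters to iterate through the nodes of $T$ and $D$.

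For isomorphism invariance, suppose $\phi$ is an isomorphism from $G$ to $G'$ that maps $\bar{D}$, its distinguished $D\in\mathcal{D}_r$, and $\sigma$ to the corresponding data on the $G'$ side. Then $\phi$ induces tree isomorphisms of $T$ and $D$ that are compatible with their bag assignments. Each ingredient of the construction was obtained by an operation that commutes with this induced action: deleting the root of $T$, taking the underlying set of $\sigma$, forming the adhesion $B_r\cap B_c$, and locating the highest bag of $D$ containing a given vertex set. The only non-trivial point is this last one, but the bag of $D$ highest among those containing a fixed set $X$ is canonically defined (the unique node closest to the root of $D$ inside a connected subtree, exactly as already exploited in the proof of Lemma~\ref{lem:c-to-c++-atom-decomposition}) and is therefore preserved. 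I expect this canonical-highest-bag observation to be the only point that needs a comment; everything else is straightforward bookkeeping.
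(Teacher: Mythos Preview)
The paper states this proposition without proof, treating both claims as immediate from the definition of the refinement operation; your proposal supplies exactly the routine verification the paper omits, and it is correct. The only minor remark is that tree-reachability in~$D$ does not actually require Reingold's theorem (trees are acyclic, so deterministic logspace traversal suffices), but invoking it is harmless.
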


\subsection{Isomorphism-based ordering of graphs with vertex sequences}  
\label{sec:ordering-sequences}

In order to define the isomorphism-based ordering for nested decompositions, we
review notions related to composed orderings and define an ordering of graphs
with given vertex sequences.

Let $\prec$ be a \emph{weak ordering} on a set $M$, and $a \equiv a'$ denote
that two elements $a,a' \in M$ are \emph{incomparable} with respect to
$\prec$. That means, neither $a \prec a'$ nor $a' \prec a$ holds. We define the
\emph{weak ordering on sequences} from $M^* := \cup_{\,n \in \Nat}\, M^n$ with
respect to $\prec$ as follows. We set $a = a_1 \dots a_s \prec a'_1 \dots a'_t =
a'$ for $a,a' \in M^*$ if $s < t$, or $s = t$ and there is an $i \in [s]$ with
$a_i \prec a'_i$ while $a_j \equiv a_j$ holds for every $j \in [i-1]$. The
\emph{weak ordering on tuples} from $M_1 \times \dots \times M_k$ with respect
to weak orderings $\prec_i$ for sets $M_i$, respectively, is defined in the same
way except that tuples always have the same length. We denote it by
$\prec_{(1,\dots,k)}$. We  define a \emph{weak ordering on finite subsets} of $M$ by setting
$M_1 \prec M_2$ for two finite $M_1,M_2 \subseteq M$ based on comparing the
sequences we get by sorting their elements to be monotonically increasing with
respect to~$\prec$.

We write the \emph{concatenation} of sequences $\sigma$ and $\tau$
as~$\sigma\tau$. Suppose that~$(G,\sigma)$ and~$(G',\sigma')$ are pairs
consisting of graphs $G$ and $G'$ with sequences of vertices~$\sigma = v_1\dots
v_s $ and~$\sigma' = v'_1\dots v'_t $ from the respective graphs. We
set~$(G,\sigma) \seqwo (G',\sigma')$ if the sequence
$\col_G(v_1,v_1)\dots\allowbreak\col_G(v_1,v_s)\col_G(v_2,v_1)\dots\allowbreak\col_G(v_s,v_1)\dots\col_G(v_s,v_s)$
is smaller than the sequence
$\col_G(v'_1,v'_1)\dots\allowbreak\col_G(v'_1,v'_t)\allowbreak\col_G(v'_2,v'_1)\dots\allowbreak\col_G(v'_t,v'_1)\dots\col_G(v'_t,v'_t)$
with respect to the (standard) ordering $<$ of $\Nat$. We
write $(G,\sigma)\allowbreak \seqincomp (G',\sigma')$ if $(G,\sigma)$ and
$(G',\sigma')$ are \emph{incomparable} with respect to~$\seqwo$. The ordering
$\seqwo$ is logspace-computable by enumerating all pairs of vertices in
lexicographic order of the indices.

Graphs $G$ and $G'$ are \emph{isomorphic with respect to sequences of vertices}
$\sigma = v_1 \dots v_s $ and~$\sigma' = v'_1 \dots v'_t $ from the respective
graphs if $s = t$ and there is an isomorphism $\phi$ from $G$ to $G'$ with
$\phi(v_i) = v'_i$ for every $i \in [s]$. We say that $\phi$ \emph{respects}
$\sigma$ and $\sigma'$ in this case. Based on this definition, we also consider
\emph{canons of graphs with respect to vertex sequences}.

Due to the following statement, which we immediately get from the definition, we
call $\seqwo$ an \emph{isomorphism-based ordering of graphs with vertex
sequences}.

\begin{proposition}
  \label{pr:isomorphism-ordering-sequences} 
  Let $G$ and $G'$ be graphs with sequences of vertices~$\sigma = v_1 \dots v_s$
  and~$\sigma' = v'_1 \dots v'_t $ from the respective graphs.
  \begin{itemize}
  \item (``invariance''-property.) If $G$ and~$G'$ are isomorphic with respect to
    $\sigma$ and $\sigma'$, then $(G,\sigma) \seqincomp (G',\sigma')$. 
  \item (``quasi-completeness''-property.) If $(G,\sigma) \seqincomp
    (G',\sigma')$, then $G[\{v_1,\dots,v_s\}]$ and $G'[\{v'_1,\dots,v'_t\}]$ are
    isomorphic with respect to $\sigma$ and $\sigma'$.  
  \end{itemize}
\end{proposition}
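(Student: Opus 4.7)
The plan is to observe that $\seqwo$ on sequences is a strict weak ordering defined first by length and then by strict lexicographic comparison of integer entries; hence two sequences are incomparable with respect to $\seqwo$ exactly when they are identical. Unwinding this for $(G,\sigma)$ and $(G',\sigma')$, incomparability means $s=t$ and $\col_G(v_i,v_j) = \col_{G'}(v'_i,v'_j)$ for all $i,j \in [s]$.

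For the invariance property, suppose $\phi$ is an isomorphism from $G$ to $G'$ with $\phi(v_i) = v'_i$. Isomorphisms preserve the coloring function by definition, so $\col_G(v_i,v_j) = \col_{G'}(\phi(v_i),\phi(v_j)) = \col_{G'}(v'_i,v'_j)$ for every $i,j$, and clearly $s=t$. The two flattened color sequences therefore coincide entrywise, so $(G,\sigma) \seqincomp (G',\sigma')$.

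For the quasi-completeness property, assume $(G,\sigma) \seqincomp (G',\sigma')$, so $s=t$ and $\col_G(v_i,v_j) = \col_{G'}(v'_i,v'_j)$ for all $i,j$. Define the candidate isomorphism $\phi \colon \{v_1,\dots,v_s\} \to \{v'_1,\dots,v'_s\}$ by $\phi(v_i) := v'_i$. The main subtle step, which is the only nontrivial point in the proof, is showing that $\phi$ is well defined and injective; this is where the convention $\col_G(u,v) = -1 \Leftrightarrow u=v$ is essential. Indeed, if $v_i = v_j$ then $\col_G(v_i,v_j) = -1$, forcing $\col_{G'}(v'_i,v'_j) = -1$ and hence $v'_i = v'_j$; the symmetric implication gives injectivity. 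Thus $\phi$ is a well-defined bijection between the vertex sets of $G[\{v_1,\dots,v_s\}]$ and $G'[\{v'_1,\dots,v'_t\}]$.

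Finally, for any two vertices $v_i,v_j$ in the domain, the equality $\col_G(v_i,v_j) = \col_{G'}(\phi(v_i),\phi(v_j))$ holds by assumption, and this equality restricted to the induced subgraphs is precisely the condition that $\phi$ is an isomorphism between them. Since $\phi(v_i) = v'_i$ by construction, $\phi$ respects $\sigma$ and $\sigma'$, completing the proof.
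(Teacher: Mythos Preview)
Your proof is correct and is exactly the unpacking of the definitions that the paper has in mind; the paper does not give a proof at all but simply states that the proposition ``we immediately get from the definition.'' Your identification of the key point---that the convention $\col_G(u,v)=-1 \Leftrightarrow u=v$ is what makes $\phi$ well defined and injective---is precisely the one nontrivial observation hidden in that remark.
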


\subsection{Isomorphism-based ordering of graphs with nested tree decompositions}  
\label{sec:ordering-nested}

We define an ordering of graphs with nested decompositions by recursively
ordering the child decompositions and combining this with the root bags. If a
root bag has no refining tree decompositions, this is done by trying all
possible orderings of the vertices of the bag. If the root bag has refining tree
decompositions, this is done by first refining it before going into recursion.

For each child~$c$ of the root node $r$ of a nested decomposition $D =
(T,\mathcal{B},\mathcal{D})$, we define a set~$\Pi(c)$ of orderings of a vertex
set as follows. If $|\mathcal{D}_c| = 0$, then $\Pi(c)$ contains all orderings
of the vertices of~$B_c$. If $|\mathcal{D}_c| > 0$, then $\Pi(c)$ is the set of
orderings of the adhesion set $B_{\{r,c\}} = B_r \cap B_c$. We use the sequences
from $\Pi(c)$ as ordered root sets for the child decomposition of $\bar{D}$
rooted at~$c$. 

For all tuples $(G,\bar{D},\sigma)$ and $(G',\bar{D}',\sigma')$ of graphs with
nested decompositions and ordered root sets, we define whether
$(G,\bar{D},\sigma) \decwo (G',\bar{D}',\sigma')$ holds based on the following
case distinction:

(``size''-comparison.) If~$|\bar{D}| < |\bar{D}'|$, or~$|\bar{D}| = |\bar{D}'|$
and~$|\mathcal{D}_r| < |\mathcal{D}'_{r'}|$, then set~$(G,\bar{D},\sigma) \decwo
(G',\bar{D}',\sigma')$.

(``bag''-comparison.) If~$|\bar{D}| = |\bar{D}'| = 1$ (which implies
$|\mathcal{D}_r| = |\mathcal{D}'_{r'}| = 0$), then set~$(G,\bar{D},\sigma)\decwo
(G',\bar{D}',\sigma')$ if $(G,\sigma) \seqwo (G',\sigma')$.

(``recursive''-comparison.) If~$|\bar{D}| = |\bar{D}'| > 1 $, and $|\mathcal{D}_r| =
|\mathcal{D}'_{r'}| = 0$, we compare the decompositions recursively. Let
$c_1,\ldots,c_{s}$ be the children of $r$ in $\bar{D}$ with respective child
decompositions $\bar{D}_1,\ldots,\bar{D}_s$ and subgraphs $G_1,\ldots,G_s$. Let
$c'_1,\ldots,c'_t$ be the children of $r'$ in $\bar{D}'$ with respective child
decompositions $\bar{D}'_1,\ldots,\bar{D}'_t$ and subgraphs
$G'_1,\ldots,G'_t$. Set~$(G,\bar{D},\sigma) \decwo (G',\bar{D}',\sigma')$ if the
following relation holds, which compares sets of sets that contain tuples to
which $\seqdecwo$ applies directly: 
\begin{align*}
  & \bigl\{\{((G_i,\bar{D}_i,\tau),(G,\sigma\tau)) \mid  \tau \in
    \Pi(c_i)\}\bigm| i\in [s]\bigr\}\\&
  \qquad \qquad \seqdecwo \bigl\{\{((G'_i,\bar{D}'_i,\tau'),(G',\sigma'\tau')) \mid \tau' \in
    \Pi(c'_i)\} \bigm| i\in [t] \bigr\}\, . 
\end{align*}

(``refinement''-comparison.) If~$|\bar{D}| = |\bar{D}'| > 1$, and
$|\mathcal{D}_r| = |\mathcal{D}'_{r'}| > 0$, then set $(G,\bar{D},\sigma) \decwo
(G',\bar{D}',\sigma')$ if $\{(G,\bar{D}_{D,\sigma},\sigma) \mid D \in
\mathcal{D}_r)\} \decwo \{(G',\bar{D}'_{D',\sigma'},\sigma') \mid D' \in
\mathcal{D}'_{r'})\}$ holds.

Graphs $G$ and $G'$ are \emph{isomorphic with respect to nested decompositions}
$\bar{D} = (T,\mathcal{B},\mathcal{D})$ and~$\bar{D}' =
(T',\mathcal{B'},\mathcal{D}')$ as well as ordered root sets~$\sigma$
and~$\sigma'$, respectively, if there exists an isomorphism $\phi$ from~$G$
to~$G'$ that 
\begin{enumerate}
\item respects the (normal) tree decompositions $(T,\mathcal{B})$ and
$(T',\mathcal{B}')$, 
\item respects the sequences $\sigma$ and $\sigma'$, and 
\item
for every $n \in V(T)$ there is a bijection~$\pi_n$ from $\mathcal{D}_n$ to
$\mathcal{D}_{n'}$, such that~$\phi$ restricted to $B_n$ respects~$D$
and~$\pi(D)$ for all~$D \in \mathcal{D}_n$. 
\end{enumerate} 
Based on how this definition refines
the isomorphism equivalence relation among graphs, we consider \emph{canons of
graphs with respect to nested decompositions}.

We call $\decwo$ an \emph{isomorphism-based ordering of graphs with nested
decompositions}, which is justified by the following lemma.

\begin{lemma}
  \label{lem:refining:ordering:is-equivalent:to:iso}
  Let~$(G,\bar{D},\sigma)$ and $(G',\bar{D}',\sigma')$ be tuples consisting of
  graphs with respective nested decompositions and ordered root sets.
  \begin{itemize}
  \item (``invariance''-property.) If  $G$ and $G'$ are isomorphic with respect to
    $\bar{D}$ and $\bar{D}'$ as well as $\sigma$ and $\sigma'$, then
    $(G,\bar{D},\sigma) \decincomp
    (G',\bar{D}',\sigma')$. \label{part:iso:to:equivalent}  
  \item (``quasi-completeness''-property.) If $(G,\bar{D},\sigma) \decincomp
    (G',\bar{D}',\sigma')$, then $G$ and $G'$ are isomorphic with respect to
    $\sigma$ and $\sigma'$.\label{part:equivalent:to:iso} 
  \end{itemize}
\end{lemma}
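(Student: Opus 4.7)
The plan is to prove both implications simultaneously by induction on $|\bar{D}|+|\bar{D}'|$, following the case distinction used in the definition of $\decwo$. An isomorphism in the nested sense forces $|\bar{D}|=|\bar{D}'|$ and $|\mathcal{D}_r|=|\mathcal{D}'_{r'}|$ (since bags and their refining families are matched bijectively), while incomparability does the same because otherwise the ``size''-comparison would be strict; so only the three remaining cases need analysis. For the base case $|\bar{D}|=|\bar{D}'|=1$, the size formula forces $|\mathcal{D}_r|=0$, so $\sigma$ enumerates all of $B_r=V(G)$ and $\sigma'$ enumerates $V(G')$, and both implications reduce directly to Proposition~\ref{pr:isomorphism-ordering-sequences} applied to $(G,\sigma)$ and $(G',\sigma')$.

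For the recursive case $|\mathcal{D}_r|=|\mathcal{D}'_{r'}|=0$ and $|\bar{D}|>1$, the invariance direction uses a witnessing isomorphism $\phi$: since $\phi$ respects $(T,\mathcal{B})$ it maps children of $r$ bijectively to children of $r'$ via some $\pi$, and for each $\tau\in\Pi(c_i)$ the image $\phi(\tau)$ lies in $\Pi(c'_{\pi(i)})$ and satisfies $(G_i,\bar{D}_i,\tau)\decincomp (G'_{\pi(i)},\bar{D}'_{\pi(i)},\phi(\tau))$ by the induction hypothesis applied to the restriction of $\phi$, together with $(G,\sigma\tau)\seqincomp (G',\sigma'\phi(\tau))$ by Proposition~\ref{pr:isomorphism-ordering-sequences}; this pairwise incomparability of the inner sets (and hence of the outer set-of-sets) yields $(G,\bar{D},\sigma)\decincomp (G',\bar{D}',\sigma')$. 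For quasi-completeness, the incomparability of the outer and inner sets provides a bijection $\pi$ on children and, for each $i$, a bijection $\Pi(c_i)\leftrightarrow\Pi(c'_{\pi(i)})$ under which the matched tuples are pairwise $\seqdecwo$-incomparable. Define $\phi$ on $B_r$ by matching $\sigma$ to $\sigma'$ positionally; for each child $c_i$ choose the $\tau_i\in\Pi(c_i)$ whose entries follow the $\sigma$-order on the relevant vertex set, let $\tau'_i$ be its match, and extract from the induction hypothesis (applied to the first component of the matched tuple) an isomorphism $\phi_i:G_i\to G'_{\pi(i)}$ respecting $\tau_i,\tau'_i$. Gluing the $\phi_i$ along $B_r$ gives the required isomorphism provided they agree there, which is exactly where the second component enters.

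For the refinement case $|\mathcal{D}_r|,|\mathcal{D}'_{r'}|>0$, the identity $|\bar{D}_{D,\sigma}|<|\bar{D}|$ lets us invoke the induction hypothesis on refined pairs. Invariance follows because the bijection $\pi_r:\mathcal{D}_r\to\mathcal{D}'_{r'}$ supplied by a witnessing $\phi$ pairs each $\bar{D}_{D,\sigma}$ with $\bar{D}'_{\pi_r(D),\sigma'}$ in a way that $\phi$ still witnesses as a nested isomorphism, so by induction each refined pair is $\decwo$-incomparable and hence the refined sets are incomparable. For quasi-completeness, the incomparability of the refined-sets produces at least one matched pair $(G,\bar{D}_{D,\sigma},\sigma)\decincomp(G',\bar{D}'_{D',\sigma'},\sigma')$, from which the induction hypothesis directly delivers an isomorphism $G\to G'$ respecting $\sigma,\sigma'$.

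The main obstacle is the gluing step in the recursive case of quasi-completeness: the locally produced isomorphisms $\phi_i$ must coincide on the adhesion sets $B_r\cap B_{c_i}$ with the global map $\phi|_{B_r}$ defined from $\sigma\leftrightarrow\sigma'$. The key observation closing this gap is that the second tuple component $(G,\sigma\tau_i)\seqincomp (G',\sigma'\tau'_i)$ forces $\tau'_i$ to agree, position by position, with $\sigma'$ on the adhesion set: each vertex of $\tau_i$ that already appears in $\sigma$ contributes a $\col_G$-value of $-1$ at two symmetric positions, and matching these entries in the corresponding sequence derived from $\sigma'\tau'_i$ pins down the image uniquely. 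Once this compatibility is established, the $\phi_i$'s extend $\phi|_{B_r}$ coherently and assemble into the required global isomorphism.
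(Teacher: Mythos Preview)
Your proposal is correct and follows essentially the same inductive strategy as the paper: the same case split, the same appeal to Proposition~\ref{pr:isomorphism-ordering-sequences} in the base case, the same gluing argument in the recursive case (where your explicit use of the $\col_G=-1$ entries is exactly what the paper means by ``the mapping of common vertices agrees''), and the same descent to strictly smaller refined decompositions. The one place where you are terser than the paper is the invariance direction of the refinement case: asserting that ``$\phi$ still witnesses a nested isomorphism'' between $\bar{D}_{D,\sigma}$ and $\bar{D}'_{\pi_r(D),\sigma'}$ requires checking that the newly inserted edges (between nodes of $D$ and the former children of $r$, determined by the \emph{highest} bag containing each adhesion set) are preserved, which the paper verifies by explicitly constructing the tree isomorphism $\rho$.
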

\begin{proof}
  We prove each property by induction on the sizes of $\bar{D} =
  (T,\mathcal{B})$ and $\bar{D'} = (T',\mathcal{B}')$. 

  (Proof of the ``invariance''-property.) Let $\phi$ be an isomorphism from $G$
  to $G'$ respecting $\bar{D}$~and~$\bar{D}'$ as well as
  $\sigma$~and~$\sigma'$. From the above definition we know that $\phi$ respects
  the normal tree decompositions $(T,\mathcal{B})$ and $(T',\mathcal{B}')$ for
  $G$ and $G'$, respectively, via some isomorphism $\psi \colon V(T) \to
  V(T')$.

  First of all, this implies~$|\bar{D}| = |\bar{D}'|$ as well
  as~$|\mathcal{D}_r| = |\mathcal{D}'_{r'}|$ and, hence, the ``size''-comparison does
  not distinguish $(G,\bar{D},\sigma)$ and $(G',\bar{D}',\sigma')$. If $|\bar{D}|
  = |\bar{D}'| = 1$, then we deal with the ``bag''-comparison. Since $\phi$ respects
  $\sigma$ and $\sigma'$, Proposition~\ref{pr:isomorphism-ordering-sequences}
  implies $(G,\sigma) \seqincomp (G',\sigma')$. In turn, this implies
  $(G,\bar{D},\sigma) \decincomp (G',\bar{D}',\sigma')$.

  If $|\bar{D}| = |\bar{D}'| > 1$ and $|\mathcal{D}_r| = |\mathcal{D}'_{r'}| =
  0$, we are dealing with the ``decomposition''-case. Let $c_i$ be a child of $r$
  and consider the (isomorphic) child $c'_j = \psi(c_i)$ of $r'$. Moreover,
  consider an ordering $\tau \in \Pi(c_i)$ and the (isomorphic) ordering $\tau' =
  \phi(\tau) \in \Pi(c'_j)$. By construction of $\tau$ and $\tau'$ we know
  $(G,\sigma\tau) \seqincomp (G',\sigma'\tau')$ and by applying the induction
  hypothesis we also know $(G_i,\bar{D}_i,\tau) \decincomp
  (G'_j,\bar{D}'_j,\tau')$. Since this observation holds for all children~$c_i$
  of~$r$ and all $\tau \in \Pi(c_i)$, we have~$(G,\bar{D},\sigma) \decincomp
  (G',\bar{D}',\sigma')$.

  If $|\bar{D}| = |\bar{D}'| > 1$ and $|\mathcal{D}_r| = |\mathcal{D}'_{r'}| >
  0$, we deal with the ``refinement''-case. We know that there exists a bijection
  $\pi = \pi_r$ from $\mathcal{D}_r$ to~$\mathcal{D}'_{r'}$, such that $\phi$
  restricted to the vertices from $B_r$ and $B'_{r'}$ respects every pair of tree
  decompositions $D \in \mathcal{D}_r$ and $\pi(D) \in \mathcal{D}_{r'}$ via some
  isomorphism $\psi_D$. We claim that $G$ and $G'$ are also isomorphic with
  respect to each pair of refined nested decompositions $\bar{D}_{D,\sigma}$ and
  $\bar{D}_{\pi(D),\sigma'}$ as well as ordered root sets $\sigma$ and
  $\sigma'$. Since the size of nested decompositions decreases when refining them,
  this claim implies $(G,\bar{D},\sigma) \decincomp (G',\bar{D}',\sigma')$ by
  induction. To prove it, we start to use the above isomorphism $\phi$ from $G$
  to~$G'$. We construct an isomorphism $\rho$ from the tree $T_{D,\sigma}$
  underlying $\bar{D}_{D,\sigma}$ to the tree $T'_{\pi(D),\sigma'}$ underlying
  $\bar{D}'_{\pi(D),\sigma'}$ as follows. The newly established root node of
  $T_{D,\sigma}$, whose bag is consists of the vertices of $\sigma$, is mapped to
  the newly established root node of $T'_{\pi(D),\sigma'}$, whose bag is consist
  of the vertices of $\sigma'$. Every other node is mapped according to either
  $\psi$ or $\psi_D$ depending on whether it is a tree node of either $\bar{D}$ or
  $D$, respectively. The only property we need to show is that $\rho$ preserves
  the newly established edges, which lie between nodes from $D$ and $\bar{D}$ as
  well as nodes from $\pi(D)$ and $\bar{D}'$. Let $m$ be a node of $D$ that gets
  connected to a node $c$ of $\bar{D}$ during the refinement process. Then~$m$ is
  the highest bag that contains all vertices of $B_r \cap B_c$. Due to the
  definition of $\psi$ and $\psi_D$, we know that $\psi_D(m)$ is a bag in $\pi(D)$
  that contains all vertices of $B_{\psi(r)} \cap B_{\psi(c)}$. Thus, there is
  also an edge from $\psi_D(m)$ to $\psi(c)$ in $\bar{D}'_{\pi(D),\sigma'}$. This
  proves the claim.

  (Proof of the ``quasi-completeness''-property.) For this direction, assume
  $(G,\bar{D},\sigma) \decincomp (G',\bar{D}',\sigma')$ hold. This
  implies~$|\bar{D}| = |\bar{D}'|$ and~$|\mathcal{D}_r| =
  |\mathcal{D}'_{r'}|$. If, in addition, we have $|\bar{D}| = |\bar{D}'| = 1 $,
  the statement follows from
  Proposition~\ref{pr:isomorphism-ordering-sequences}. 

  If $|\bar{D}| = |\bar{D}'| > 1$ and $|\mathcal{D}_r| = |\mathcal{D}'_{r'}| =
  0$, we are in the ``recursive''-comparison. Thus, we can choose a bijection~$\pi$
  from~$[s]$ to~$[t]$ satisfying for each $i \in [s]$
  \begin{align*}
    \{((G_i,\bar{D}_i,\tau),(G,\sigma\tau)) \mid \tau \in \Pi(c_i)\} \seqdecincomp
    \{((G_{\pi(i)},\bar{D}_{\pi(i)},\tau'),(G',\sigma\tau')) \mid \tau' \in
    \Pi(c_{\pi(i)})\} \, .
  \end{align*}
  By induction, we can choose for each~$i \in [s]$ orderings $\tau_i\in
  \Pi(c_i)$ and~$\tau'_{\pi(i)} \in \Pi(c_{\pi(i)})$, such that there is an
  isomorphism from the graph $G_i$ decomposed by~$\bar{D}_i$ to the graph
  $G'_{\pi(i)}$ decomposed by~$\bar{D}'_{\pi(i)}$ that respects $\tau_i$
  and~$\tau'_{\pi(i)}$. If subgraphs~$G_i$ and~$G_j$ of $G$ that correspond to
  child decompositions~$\bar{D}_i$ and~$\bar{D}_j$, respectively, contain a common
  vertex, then this vertex appears in~$\sigma$. Moreover, the same property
  holds for the same kind of subgraphs of $G$ and
  $\sigma'$. Since~$(G,\sigma\tau_i) \seqincomp (G',\sigma'\tau'_{\pi(i)})$, the
  mapping of common vertices agrees with the isomorphisms chosen for~$G_i$
  and~$G_j$. Thus, we can find a common extension to map~$G$ to~$G'$ that
  respects $\sigma$~and~$\sigma'$.

  If $|\bar{D}| = |\bar{D}'| > 1$ and $|\mathcal{D}_r| = |\mathcal{D}'_{r'}| >
  0$, we are in the ``refinement''-comparison and know that
  $\{(G,\bar{D}_{D,\sigma},\sigma) \mid D \in \mathcal{D}_r)\} \decincomp
  \{(G',\bar{D}'_{D',\sigma'},\sigma') \mid D' \in \mathcal{D}'_{r'})\}$
  holds. Since the size of nested decompositions decreases when refining them, we
  know by induction that $G$ and $G'$ are isomorphic with respect to $\sigma$ and $\sigma'$.
\end{proof}

\begin{remark}
  \label{rem:no:exact:correspondence} 
  The ordering~$\decwo$ is defined in order to satisfy the
  ``quasi-completeness''-property stated in
  Lemma~\ref{lem:refining:ordering:is-equivalent:to:iso}, but not a
  ``completeness''-property saying that $(G,\bar{D},\sigma) \decincomp
  (G',\bar{D}',\sigma')$ implies that $G$ and $G'$ are isomorphic with respect to
  $\sigma$ and $\sigma'$ as well as $\bar{D}$ and $\bar{D}'$, too. The reason
  behind this lies in the fact that deciding an ordering of this kind for nested
  decompositions of a bounded width is as hard as the (general) graph isomorphism
  problem. (In fact, this even holds in the case of, more restrictive, $p$-bounded
  decompositions as defined in Section~\ref{sec:canonization-ref}.) This can be
  seen by the following reduction.  Take two graphs $G$ and~$H$ for which we want
  to know whether they are isomorphic. Consider now two (empty) graphs $G'$ and
  $H'$ with $V(G') = V(G)$, $V(H') = V(H)$, and $E(G') = E(H') = \emptyset$.  For
  $G'$, construct a nested decomposition by starting with a single bag $B =
  V(G')$. Then, for each edge $\{v,w\} \in E(G)$, construct a refining
  decomposition whose tree is a star graph where the center bag equals $\{v,w\}$
  and the adjacent bags each contain a single vertex of $G'$. The nested
  decomposition for $H'$ is constructed in the same way. Since the refining
  decompositions exactly encode the edges of the respective graphs, $G$ and $H$
  are isomorphic exactly if~$G'$ and~$H'$ are isomorphic with respect to their
  nested decompositions. Thus, if~$\decincomp$ were defined as to be in exact
  correspondence with isomorphisms respecting nested decompositions, then deciding
  whether~$G'\decincomp H'$ would be graph isomorphism complete even on graphs
  without edges.
\end{remark}

\section{Computing the ordering for nested tree decompositions}   
\label{sec:canonization-ref}

We now investigate methods to space-efficiently evaluate the isomorphism based
ordering described in the previous section.  The nested decompositions we are
working with always have a bounded width. This makes it possible to implement
the ``recursive''-comparison of the isomorphism-based ordering
space-efficiently. If the child decompositions are small enough (more precisely,
they are smaller by a constant fraction in comparison to their parent), then it
is possible to store a constant amount of information, and in particular to
store orderings of the size-bounded root bag, before descending into recursion,
without exceeding a desired logarithmic space bound. If there is a large child
decomposition, of which there can be only one, then we can use Lindell's classic
technique of precomputing the recursive information before storing anything at
all.  However, for the ``refinement''-comparison, a space-efficient approach
turns out to be more challenging. In this case, the ordering asks us to compare
various refinements of the root bag. Cycling through these refinements as part
of a recursive approach requires too much space, even if the number of
decompositions is bounded by a polynomial in the size of the root bag. While it
is not clear how to remedy this difficult in general, the nested decompositions
we construct in the proofs of our main theorems satisfy an additional technical
condition, called $p$-boundedness below. This makes it possible to find a
trade-off between the recursive space requirement and the space required for
cycling through the refinements.

Let~$\bar{D}$ be a nested decomposition. Consider a bag~$n$
with~$|\mathcal{D}_n| > 1$. Let $c_1,\ldots,c_t$ be the children of $n$ sorted
by monotonically decreasing size of the respecting subdecompositions
$D_1,\ldots, D_t$.  If it exists, let $j \in [t]$ be maximal such that $G[A_n]$
with
\begin{align*}
  A_n := (B_n \cap B_{c_1}) \cup \dots \cup (B_n \cap B_{c_j}) 
\end{align*} 
is a clique, and $|D_j| > |D_{j+1}|$ holds or $j = t$ holds. Otherwise, set $j
:= 0$ and $A_n := \emptyset$. We call the children $c_1,\ldots,c_j$ of $n$ the
\emph{special children} and $A_n$ is the \emph{attachment clique of the special
children}. A nested decomposition $\bar{D}$ is \emph{$p$-bounded} for a
polynomial $p : \Nat \to \Nat$ if for every $n \in V(T)$ and non-special child
$c$ of $n$ we have $|\mathcal{D}_{n}| \leq p(|\bar{D}|/|\bar{D}_c|)$. For
non-special nodes we use the $p$-boundedness condition to trade the number of
candidate refining decompositions against the size of subdecompositions. This
enables an overall space-efficient recursion.

\begin{lemma}
  \label{lem:log:space:computable:if:nice}  
  For every $k \in \Nat$ and polynomial $p \colon \Nat \to \Nat$, there is a
  logspace \dtm{} that, on input of graphs~$G$ and~$G'$ along with
  respective nested decompositions~$\bar{D}$ and~$\bar{D}'$ and ordered root
  sets~$\sigma$ and~$\sigma'$ where $\bar{D}$ and~$\bar{D}'$
  \begin{enumerate}
  \item have width at most $k$, and 
  \item are $p$-bounded,
  \end{enumerate}
  decides $(G,\bar{D},\sigma) \decwo (G',\bar{D'},\sigma')$.
\end{lemma}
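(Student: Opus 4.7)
The plan is to design a recursive logspace Turing machine mirroring the four cases in the definition of $\decwo$ and to prove correctness by induction on $|\bar{D}|+|\bar{D}'|$, noting that both the ``recursive''-comparison and the ``refinement''-comparison recurse on strictly smaller arguments (refinement strictly decreases the decomposition size). The ``size''-comparison and the ``bag''-comparison (which reduces to $\seqwo$, logspace-computable by Section~\ref{sec:ordering-sequences}) are immediate. The width bound $k$ is used throughout to keep $|\Pi(c)|\le(k+1)!$ constant, so an ordering $\tau\in\Pi(c)$ uses only $O(1)$ bits per recursion level.

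For the ``recursive''-comparison ($|\mathcal{D}_r|=0$) I would implement Lindell's tree-comparison procedure to evaluate the lexicographic ordering $\seqdecwo$: sort the children by their $\decwo$-type (over the tuples enumerated by $\tau\in\Pi(c_i)$) and count multiplicities. Each bag has at most one \emph{heavy} child whose subdecomposition exceeds half the parent's size; its recursive value is precomputed, leaving only a one-bit outcome in memory, before any light child is visited. Along any purely-light recursion path the size at least halves, so the depth is $O(\log|\bar{D}|)$; storing the sorted position, child index, and ordering takes only $O(\log|\bar{D}|)$ bits per level, giving $O(\log|\bar{D}|)$ total space for this case.

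The main technical work lies in the ``refinement''-comparison ($|\mathcal{D}_r|>0$), where we enumerate the refining decompositions $D\in\mathcal{D}_r$ and recursively compare the refined nested decompositions $\bar{D}_{D,\sigma}$ as multisets. A naive enumeration stores $O(\log|\mathcal{D}_r|)$ bits at every refinement level, which accumulates to a linear amount across a path of refinements. This is where $p$-boundedness is exploited: for a non-special child $c$ of $n$ we have $|\mathcal{D}_n|\le p(|\bar{D}|/|\bar{D}_c|)$, so the refinement index at $n$ costs only $O(\log(|\bar{D}|/|\bar{D}_c|))$ bits. When the recursion descends into the subdecomposition rooted at $c$, the remaining budget is $O(\log|\bar{D}_c|)$, and the just-spent bits telescope along any root-to-leaf path to the single bound $\log|\bar{D}|$. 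The special children of $n$, those sharing the attachment clique $A_n$, escape $p$-boundedness, but since $|A_n|\le k+1$ their adhesion sets and orderings come from a constant-size pool; I would therefore handle all special children at $n$ en bloc by the same Lindell-style counting technique as in the recursive case, so that they contribute only $O(1)$ persistent bits per level.

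The hardest part will be organizing this amortization to genuinely fit into $O(\log|\bar{D}|)$ working space. Concretely, the $O(\log(|\bar{D}|/|\bar{D}_c|))$-bit refinement index at $n$ must be charged to a single distinguished non-special child, whose recursion is precomputed Lindell-style so that only an outcome bit persists; any further non-special children are processed the same way, and the special children are aggregated so that their handling does not break the telescoping. Showing that these precomputations compose correctly through the nested interplay of refinement and recursive cases, without double-counting space anywhere, is the principal obstacle in the proof.
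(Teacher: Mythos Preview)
Your plan follows the paper's approach, and you correctly identify $p$-boundedness as the source of the telescoping of refinement-index bits. There is a minor slip in the ``recursive''-comparison analysis: $O(\log|\bar{D}|)$ bits per level over $O(\log|\bar{D}|)$ levels would give $O(\log^2|\bar{D}|)$, not $O(\log|\bar{D}|)$; the correct Lindell argument stores only $O(\log\ell)$ bits when there are $\ell$ children of the current size, and since the size then shrinks by a factor $\ell$ the sum $\sum_i\log\ell_i$ telescopes to $\log|\bar{D}|$.

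The substantive gap is in the ``refinement''-comparison, where you have the special/non-special roles inverted. It is the \emph{special} children that must be fully precomputed before any refinement index is written, not a distinguished non-special one; the non-special children are exactly where the recursion descends with the refinement index still on tape, and $p$-boundedness then makes those bits telescope. The difficulty you underestimate is that after refining by some $D\in\mathcal{D}_r$, the original children of $r$ get redistributed among the bags of $D$, so comparing two refinements requires, at each bag of $D$, knowing how the subset of special children newly attached there compares to its counterpart; ``handling them en bloc by Lindell counting'' once at the top does not supply this per-bag information. The paper's mechanism is to precompute, for \emph{every} pair $P\subseteq\operatorname{pow}(C)$, $P'\subseteq\operatorname{pow}(C')$ of sets of possible adhesion sets inside the attachment cliques $C,C'$, the cross-comparison matrix of the subdecompositions consisting of exactly the special children whose adhesion lies in $P$ (respectively $P'$), with root bag $C$ (respectively $C'$). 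Since $|C|,|C'|\le k+1$, this entire table has constant size (of order $2^{2^{k+1}}(k+1)!$) and can be filled, using the usual large-child-first trick among the special children themselves, before any refinement-index bits are written. With the table in hand, every subsequent recursive descent from the refinement step enters only non-special children, and the telescoping goes through cleanly.
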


In Section~\ref{sec:lindell} we review a technique of Lindell~\cite{Lindell1992}
used to compute (composed) weak orderings on sets space-efficiently.
Section~\ref{sec:proof-logspace-ordering} contains the proof of
Lemma~\ref{lem:log:space:computable:if:nice}.

\subsection{Comparing sets via cross comparing elements} 
\label{sec:lindell}

We repeatedly apply a technique of Lindell~\cite{Lindell1992} to
compare two sets $A$ and $A'$ with respect to a weak ordering $\prec$ defined
for their elements. Comparing $A$ and $A'$ is performed by repeatedly comparing a single element of~$A$ with a single element of~$A'$.
Such a comparison is called a \emph{cross comparison}. To apply the technique
subsequently, we state it as an abstract fact as follows. Let $\prec$ be a weak
ordering for elements of a set $M$ and $A,A' \subseteq M$ finite subsets of
$M$. The \emph{cross comparison matrix} of $A = \{a_1,\dots,a_s\}$ and $A' =
\{a'_1,\dots,a'_t\}$ with respect to~$\prec$ is the binary matrix $C_{A,A'}
\in \{0,1\}^{s \times t}$ with $C_{A,A'}[i,j] = 1$ exactly if $a_i \prec a'_j$.

\begin{fact}
  \label{fa:compare-sets}
  Let $\prec$ be a weak ordering of elements of a set $M$. There is a logspace
  \dtm{} that, given the cross comparison matrix $C_{A,A'}$ for sets $A,A'
  \subseteq M$ (which are not part of the input), decides $A \prec A'$.
\end{fact}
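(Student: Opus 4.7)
The plan is to implement Lindell's classical multiset-comparison technique \cite{Lindell1992} as a streaming computation over the matrix $C_{A,A'}$. The algorithm first reads $s = |A|$ and $t = |A'|$ from the dimensions of the matrix; if $s \neq t$ the verdict is immediate by definition of the ordering on finite subsets. Otherwise $A$ and $A'$ must be compared as sorted sequences, which, because $\prec$ is only a weak ordering, amounts to comparing the counts of $\prec$-equivalence classes in increasing order.

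I would walk through the equivalence classes of $A \cup A'$ in $\prec$-increasing order without ever materializing them, maintaining only a single counter $k$ that records how many elements of each side have already been placed in smaller classes (this common value is well defined as long as the comparison has not yet been decided). At each stage, using a sequential scan of $C_{A,A'}$, the algorithm identifies the next equivalence class by selecting, among candidates not yet accounted for, one that is $\prec$-minimum, using the matrix entries to count strict cross-predecessors. A second scan counts how many elements of $A$ and how many of $A'$ fall into this class. If the two counts disagree, then the set with the larger count has a strictly $\prec$-smaller element at the first position where the sorted sequences differ, deciding the comparison in its favor; if they agree, the algorithm updates $k$ and advances to the next class. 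If the iteration finishes with matching counts throughout, the sets are incomparable.

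The main obstacle is the weak, rather than total, nature of $\prec$: under incomparability, ``position in the sorted sequence'' is not canonical, so the comparison must proceed class-by-class rather than position-by-position. This is exactly what Lindell's trick is designed to handle. Apart from this conceptual point, the whole procedure uses only a handful of counters in $\{0,1,\dots,s+t\}$ together with read-only access to the matrix, so it fits within $O(\log(s+t))$ space, i.e., logspace in the length of the input.
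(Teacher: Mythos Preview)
The paper states this as a \emph{Fact} attributed to Lindell and does not supply its own proof (the source shows the proof was deliberately omitted), so there is nothing in the paper to compare your argument against. Your sketch is the standard Lindell multiset-comparison technique and is essentially correct: iterate over the $\prec$-equivalence classes meeting $A\cup A'$ from smallest to largest, count the multiplicities on each side, and let the first discrepancy decide, all with $O(\log(s+t))$ workspace.

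One point worth tightening: the matrix $C_{A,A'}$ as literally defined in the paper is one-directional, recording only whether $a_i \prec a'_j$ and not whether $a'_j \prec a_i$. Your class-by-class walk needs the full trichotomy between each $a_i$ and each $a'_j$ in order to recognise when two elements lie in the same class; with the one-directional bit alone this is impossible, and in fact the one-directional matrix does not determine $A\prec A'$ (for $s=t=3$ one can realise the same $\{0,1\}$-matrix with $A\prec A'$ in one instance and $A'\prec A$ in another). Everywhere the Fact is invoked in the paper both directions are available anyway (the comparison oracle can be called with the arguments swapped), so this is a harmless imprecision in the statement rather than a flaw in your method; just make explicit that you are reading ``cross comparison matrix'' as carrying the full trichotomy, i.e., both $C_{A,A'}$ and $C_{A',A}$.
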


To apply Fact~\ref{fa:compare-sets}, it is important to observe that inputs to
its \dtm{} consist only of the cross comparison matrix without encodings of the
sets $A$ and $A'$. Thus, the space used by the machine is in $O(\log(|A| \cdot
|A'|))$. The fact can be used to build an algorithm for comparing sets on top of
an algorithm for comparing individual elements.

\begin{proposition}
  \label{pr:compare-sets-oracle}
  Let $\prec$ be a weak ordering of elements of a set $M$ that can be decided
  by a \dtm{} in space at most $s(a,a')$ for every $a,a' \in
  M$. There is a \dtm{} that, given sets $A,A' \subseteq M$, decides $A \prec
  A'$ in space $O(\log(|A| \cdot |A'|) + \max\{ s(a,a') \mid a \in A \text{ and
  } a' \in A'\})$.
\end{proposition}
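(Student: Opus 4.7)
The plan is to combine the logspace \dtm{} provided by Fact~\ref{fa:compare-sets} with the assumed \dtm{} for~$\prec$ via the standard oracle-simulation argument for space-bounded machines, so that the two space bounds add rather than multiply.

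First I would fix the following setup. Let~$A = \{a_1,\dots,a_s\}$ and~$A' = \{a'_1,\dots,a'_t\}$ be presented on the input tape in some fixed order, so that the element~$a_i$ (respectively~$a'_j$) can be located from its index~$i$ (respectively~$j$) using only~$O(\log(|A|\cdot|A'|))$ work-tape space for pointers. Let~$N_{\text{outer}}$ be the \dtm{} from Fact~\ref{fa:compare-sets} that decides~$A \prec A'$ from the cross comparison matrix~$C_{A,A'}$ in space~$O(\log(|A|\cdot|A'|))$; its only access to~$A$ and~$A'$ is through reading single bits~$C_{A,A'}[i,j]$. Let~$N_{\text{inner}}$ be the assumed \dtm{} deciding~$\prec$ on inputs from~$M$.

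The constructed \dtm{} then simulates~$N_{\text{outer}}$ on the ``virtual'' matrix~$C_{A,A'}$, generating the queried entries on demand. The simulation maintains at all times the configuration of~$N_{\text{outer}}$ on its own work region, which costs~$O(\log(|A|\cdot|A'|))$ space. Whenever~$N_{\text{outer}}$ requests a bit~$C_{A,A'}[i,j]$, the simulator reads off the indices~$(i,j)$ (stored in~$O(\log(|A|\cdot|A'|))$ bits on a dedicated query tape), then allocates a fresh work region on which it simulates~$N_{\text{inner}}$ on input~$(a_i,a'_j)$, reading each symbol of~$a_i$ and~$a'_j$ from the input tape through the index pointers~$i$ and~$j$. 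Once~$N_{\text{inner}}$ halts, its single-bit verdict is handed back to~$N_{\text{outer}}$ as the value~$C_{A,A'}[i,j]$, the scratch region used by~$N_{\text{inner}}$ is erased, and the simulation of~$N_{\text{outer}}$ resumes. Correctness is immediate since the outer machine sees exactly the cross comparison matrix~$C_{A,A'}$.

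For the space bound, the key point is that at any moment during the computation only one invocation of~$N_{\text{inner}}$ is active, so its scratch space is reused across queries. The simultaneous space used is therefore bounded by the configuration size of~$N_{\text{outer}}$ plus the query tape plus the scratch for one invocation of~$N_{\text{inner}}$, giving
\begin{align*}
O(\log(|A|\cdot|A'|)) + \max\bigl\{ s(a,a') \bigm| a \in A,\ a' \in A' \bigr\},
\end{align*}
as required. The only subtlety, which I expect to be the main (but easy) point to verify, is that both the input-tape pointers for locating~$a_i$ and~$a'_j$ and the index register passed between the two machines fit within the~$O(\log(|A|\cdot|A'|))$ budget; this follows because individual element inputs to~$N_{\text{inner}}$ are addressed on the read-only input tape rather than copied onto the work tape.
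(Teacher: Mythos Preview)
Your proposal is correct and is exactly the argument the paper has in mind: the paper does not spell out a proof of Proposition~\ref{pr:compare-sets-oracle} but only remarks that Fact~\ref{fa:compare-sets} ``can be used to build an algorithm for comparing sets on top of an algorithm for comparing individual elements,'' which is precisely the on-demand simulation you describe. The additive space bound via reusing the inner machine's work region across queries is the intended point.
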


We apply the proposition in a scenario where the weak ordering $\prec$ is
partially known: A weak ordering~$\prec'$ is \emph{coarser} than a weak
ordering~$\prec$ if~$a_1\prec' a_2$ implies~$a_1\prec a_2$. In our application,
the weak ordering $\prec'$ compares nested subdecompositions based on their
sizes, which can be done during a logspace reduction before the more challenging
recursive computation starts.

\subsection{Proof of Lemma~\ref{lem:log:space:computable:if:nice}}
\label{sec:proof-logspace-ordering}

A child decomposition $\bar{D}_i$ of a nested decomposition $\bar{D}$ is
\emph{large} if $|\bar{D}_i| > |\bar{D}|/2$; the size bound implies that every
nested decomposition has at most one large child.

\begin{proof}[Proof of Lemma~\ref{lem:log:space:computable:if:nice}]
  The logspace procedure $\proccomp(\cdot,\cdot)$ we design implements the
  recursive definition of $\decwo$.  To streamline the recursion, it is more
  convenient to perform a computational task that is slightly more general than
  the one required by the lemma. On input of graphs $G$ and $G'$ with nested tree
  decompositions $\bar{D}$ and $\bar{D}'$ and unordered root sets $M$ and $M'$
  (not ordered root sets $\sigma$ and $\sigma'$ as described by the lemma), the
  output of $\proccomp((G,\bar{D},M),(G',\bar{D}',M'))$ is the cross comparison
  matrix of the sets $\{(G,\bar{D},\sigma)\mid \sigma \text{ is an ordering of }
  M\}$ and~$\{(G',\bar{D'},\sigma')\mid \sigma' \text{ is an ordering of } M'\}$
  with respect to $\decwo$.

  The recursive procedure starts at the root bags of both decompositions and
  descends into them in order to compute the recursively-defined ordering. To have
  access to the current positions in the decompositions, we do not use a stack,
  but maintain \emph{node pointers} to the current node and the previous node of
  the recusive process in each decomposition. These pointers direct us to nodes
  from the coarser, nested decompositions as well as to nodes from refining
  decompositions. They only require a logarithmic amount of space.  In order to be
  able to reconstruct vertex sequences as well as already refined parts of the
  decomposition, we store sequences of vertices in a bag relative to the bag using
  the pointers. Thus, to store an ordering of a bag of bounded size we only require
  a constant amount of space, provided we have the pointer to the bag at hand.

  To analyze the space requirement apart from these pointers, 
  we use two separate
  tapes. A \emph{decomposition} tape is used to store data related to the
  ``recursive''-comparison and a \emph{refinement} tape is used to store data
  related to the ``refinement''-comparison. We prove separately for each tape that the
  used space is bounded by $O(\log(|\bar{D}| +|\bar{D}'|))$.

  The procedure closely follows the definition of the isomorphism-based ordering
  and, in particular, distinguishes the same cases.

  (``size''-comparison.) By counting refining decompositions and nodes in decompositions,
  we can determine whether $|\bar{D}| \neq |\bar{D}'|$ or $|\mathcal{D}_r| \neq
  |\mathcal{D}'_{r'}|$ and in either case directly
  decide~$(G,\bar{D},\sigma)\decwo(G',\bar{D}',\sigma')$ for all orderings
  $\sigma$ and $\sigma'$ of $M$ and $M'$, respectively.

  (``bag''-comparison.) If~$|\bar{D}| = |\bar{D}'| = 1$ (and, thus,
  $|\mathcal{D}_r| = |\mathcal{D}'_{r'}| = 0$), we decide $(G,\bar{D},\sigma)
  \decwo (G',\bar{D}',\sigma')$ for all orderings $\sigma$ and $\sigma'$ of $M$
  and $M'$, respectively, based on the definition of $\seqwo$ in logspace.

  (``recursive''-comparison.) If $|\bar{D}| = |\bar{D}'| > 1$ and $|\mathcal{D}_r| =
  |\mathcal{D}'_{r'}| = 0$, we handle large and non-large children differently. 

  If~$\bar{D}$ has a large child decomposition $\bar{D}_i$ and~$\bar{D}'$ has a
  large child decomposition~$\bar{D}_j$, we first compute the comparison matrix of
  the sets~$\{(G_i,\bar{D}_i,\sigma) \mid \sigma \in \Pi(c_i)\}$
  and~$\{(G'_j,\bar{D}'_j,\sigma') \mid \sigma \in \Pi(c'_j)\}$. The descent into
  the large children can be performed without
    storing data on the decomposition tape by just updating the node pointers. The comparison matrix of (constant) size
  $s_{\textnormal{large}} \in O((k!)^2)$ is stored on the decomposition tape.

  After handling large children, we continue to compare
  $\bigl\{\{((G_i,\bar{D}_i,\tau),\sigma\tau)\mid \tau \in \Pi(c_i)\}\bigm| i\in
  [s] \bigr\}$ and $\bigl\{\{((G'_j,\bar{D}'_j,\tau'),\sigma'\tau')\mid \tau' \in
  \Pi(c'_j)\} \bigm| j\in [t]\big\}$ with respect to~$\seqdecwo$. In order to
  apply Proposition~\ref{pr:compare-sets-oracle} with a smaller space requirement,
  we we first define a weak order~$\seqdecwo'$ that is coarser than~$\seqdecwo$ by
  setting~$\big\{((G_i,\bar{D}_i,\tau),\sigma\tau)\mid \tau \in \Pi(c_i)\big\}
  \seqdecwo' \big\{((G'_j,\bar{D}'_j,\tau'),\sigma'\tau') \mid \tau' \in
  \Pi(c'_j)\big\}$ if~$|\bar{D}_i| < |\bar{D}'_j|$. Ordering the child
  decompositions with respect to $\prec$ can now be done by first ordering with
  respect to $\prec'$ with higher priority and, then, applying the cross
  comparison idea behind Proposition~\ref{pr:compare-sets-oracle} to pairs whose
  child decompositions are of the same size. Moreover, we do not need to recurse
  on large children. Note that in order to
  compare~$\{((G_i,\bar{D}_i,\tau),\sigma\tau)\mid \tau \in \Pi(c_i)\}$
  and~$\{((G'_j,\bar{D}'_j,\tau'),\sigma'\tau')\mid \tau' \in \Pi(c'_j)\}$ with
  respect to~$ \seqdecwo$ it suffices to know the result of the
  call~$\proccomp((G_i,\bar{D}_i,N),(G'_j,\bar{D'}_j,N'))$ where~$N$ and~$N'$
  denote the unordered sets of the ordered root sets in~$\Pi(c_i)$
  and~$\Pi(c'_j)$, respectively.
  
  We investigate the space requirement of the method. We first observe that there
  are at most $\ell \leq |\bar{D}|/m$ child decompositions of size
  $m$. Moreover~$m \leq |\bar{D}|/2$ for non-large children and, thus, $m\leq
  \min\{|\bar{D}|/2, |\bar{D}|/\ell\}$. By Fact~\ref{fa:compare-sets}, we can
  compare sets of size at most~$\ell$ using $O(\log(\ell^2))$ space, plus the
  space required for the recursion, which is at most~$ O(\min \{\log(|\bar{D}|/2)
  , \log(|\bar{D}|/\ell)\})$ by induction. The current orderings $\sigma$
  and~$\sigma'$ are stored using space $s_{\stack} \in O((k!)^2)$ on the
  decomposition tape. (Recall that sequences are stored relative to pointers to
  nodes of the decomposition as described above.) Defining
  $s_{\textnormal{dec}}(n)$ to be the space requirement for the recursive
  comparison of decompositions of size~$n$ where $s_{\textnormal{dec}}(1)$ depends
  on the constant $k$, we have
  \begin{align*}
    s_{\textnormal{dec}}(n) \leq s_{\textnormal{large}}+ \max_{2 \leq \ell
    \leq n} \{ s_{\stack} + O(\log(\ell^2)) + s_{\textnormal{dec}}(\lfloor
    n/\ell\rfloor) \} \in O(\log n)\, .
  \end{align*}

  (``refinement''-comparison.) In case $|\bar{D}| = |\bar{D}'| > 1$ and $|\mathcal{D}_r|
  = |\mathcal{D}'_{r'}| > 0$, we need to describe how our procedure handles the
  recursive refinement of the root bag. In the simplest case, if $|\mathcal{D}_r|
  = |\mathcal{D}'_{r'}| = 1$ holds, we choose the unique refinement and recurse
  without using any space on the refinement tape. If there are multiple refining
  decompositions for the roots, we first handle the special children before taking
  refining decompositions into account.

  For the special children, we compute and store all information about
  comparisons between them that might ever be required subsequently as follows. Let~$C$
  and~$C'$ be the attachment cliques of the special child decompositions
  of~$\bar{D}$ and~$\bar{D}'$, respectively. Consider arbitrary $P \subseteq
  \operatorname{pow}(C)$ and $P' \subseteq \operatorname{pow}(C')$, where
  $\operatorname{pow}(\cdot)$ is the power set operator. Let~$\tau$ be an ordering
  of~$C$ and~$\tau'$ be an ordering of~$C'$. Given~$\tau$ and~$\tau'$, we want to
  compare the special children whose adhesion set is in~$P$ or~$P'$.  Let~$Z_{P} =
  (G_{P},\bar{D}_{P},C)$ be defined as follows. $G_{P}$ is the subgraph of~$G$
  induced by the vertices in~$C$ and all vertices contained in a special child
  of~$\bar{D}$ with an adhesion set in~$P$ and $\bar{D}_{P}$ is the decomposition
  of~$G_{P}$ obtained by replacing the root of $\bar{D}$ with the set~$C$ and out of
  the child decompositions of the root only maintaining the special child
  decompositions with adhesion sets in~$P$. Note that~$Z_{P}$ is a graph with a
  nested decomposition and unordered root set. We define~$Z'_{\mathcal{M'}}$
  similarly for~$G'$. With this definition we compute and store all cross
  comparison matrices~$\proccomp(Z_{P},Z'_{P'})$ for all choices of~$P \subseteq
  \operatorname{pow}(C)$ and~$P' \subseteq \operatorname{pow}(C')$.

  Since the sets have only bounded size, the entire outcome of the computation
  can be stored using (constant) space~$s_{\textnormal{clique}} \in O(2^{2^k}
  k!)$. This outcome will be stored on the refinement tape. We argue that all of
  this information can be computed recursively without exceeding the logarithmic
  space bound. Indeed, there is at most one large child decomposition~$\bar{D}_L$
  of $\bar{D}$ with graph $G_L$ and at most one large child decomposition
  $\bar{D}'_L$ of $\bar{D}'$ with graph $G'_L$. Before using any space, we first
  compute the recursive call~$\proccomp((G_L,\bar{D}_L,M_L)
  ,(G'_L,\bar{D}'_L,M'_L))$, where~$M_L$ and~$M'_L$ are the unordered root sets
  of~$\bar{D}_L$ and~$\bar{D}'_L$, respectively.  The result is stored using a
  constant amount of space on the refinement tape. We then compute for all choices
  of~$P \subseteq \operatorname{pow}(C)$ and~$P' \subseteq \operatorname{pow}(C')$
  the result of~$\proccomp(Z_{P},Z'_{P'})$. Recall that~$s_{\textnormal{clique}}$
  denotes the amount of refinement space required to store the entire
  outcome. Since the size of every non-large child decompositions of~$Z_{P}$ is at
  most~$|\bar{D}|/2$, we obtain a recursion for the space satisfying
  \begin{align*}
    s_{\textnormal{refine}}(n) \leq s_{\textnormal{clique}} +
    s_{\textnormal{refine}}(\lfloor n/2 \rfloor) \in O(\log n)\, .
  \end{align*}
  
  Having computed~$\proccomp(Z_{P},Z'_{P'})$ for all choices of~$P \subseteq
  \operatorname{pow}(C)$ and~$P' \subseteq \operatorname{pow}(C')$ our goal is now
  to compare the sets~$A_\sigma = \{(G,\bar{D}_{D,\sigma},\sigma) \mid D \in
  \mathcal{D}_r)\}$ and~$A'_{\sigma'} = \{(G',\bar{D}'_{D',\sigma'},\sigma') \mid
  D' \in \mathcal{D}'_{r'})\}$ for all orderings~$\sigma$ and $\sigma'$ of $M$ and
  $M'$, respectively. Using Fact~\ref{fa:compare-sets} we can compare these sets
  using $O(\log(|\mathcal{D}_r| \cdot |\mathcal{D}'_{r'}|))$ space, which we will
  write on the refinement tape, in addition to the recursive space required to
  compare two elements~$(G,\bar{D}_{D,\sigma},\sigma)$
  and~$(G,\bar{D}_{D',\sigma'},\sigma')$.  For two such elements, whenever we would
  go into recursion, if a recursive result is contained in the precomputed
  information for special children, we will not go into recursion and rather use
  the precomputed information. Note that this means that we will never have to recursively
  descend into a special child of~$\bar{D}$ or~$\bar{D}'$ again. This observation
  uses the fact that in~$\bar{D}$ the set of bags~$n$ with~$|\mathcal{D}_n| > 0$
  (and similarly in~$\bar{D}'$) forms a connected subtree containing the root.
  
  For the refinement space consumption we now note the following. Since
  comparisons of special children are precomputed, due to the~$p$-boundedness for
  every subsequent recursive call, the size of the decompositions of the recursive
  call is at most~$\max\{|\bar{D}|/ |\mathcal{D}_r|, |\bar{D'}|/
  |\mathcal{D'}_{r'}|\}$. Thus, we obtain the recursive space requirement of
  \begin{align*}
    s_{\textnormal{refine}}(n) \leq s_{\textnormal{clique}} + \max_{2\leq \ell \leq
    n} \{ O(\log(\ell^2) + s_{\textnormal{refine}}(\lfloor n/\ell\rfloor) \} \in O(\log n)\, . 
  \end{align*}
  Thus, the total space requirement of our procedure is logarithmic.
\end{proof}

\section{Testing isomorphism and canonizing bounded tree width graphs}
\label{sec:main-theorems}

We show how to compute isomorphism-invariant width-bounded and $p$-bounded
nested decompositions for graphs of bounded tree width and, then, apply this to
prove Theorems~\ref{th:isomorphism-tw} and~\ref{th:canonization-tw}.

To compute nested decompositions, we combine the decomposition into atoms
described in Section~\ref{sec:decomposition-with-cliques} with the decomposition
of atoms into width-bounded tree decompositions described in
Section~\ref{sec:decomposition-without-cliques}.

\begin{lemma}
  \label{lem:invariant-nested-decomposition}
  For every $k \in \Nat$, there is a $k' \in \Nat$, a polynomial $p \colon \Nat
  \to \Nat$, and a logspace-computable and isomorphism-invariant mapping that
  turns every graph~$G$ of tree width at most $k$ into a nested decomposition
  $\bar{D}$ for $G$ that
  \begin{enumerate}
  \item has width at most~$k'$, and
  \item is $p$-bounded.
  \end{enumerate}
\end{lemma}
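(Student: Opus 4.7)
The plan is to combine the two decomposition results we have already built. First, I would apply Lemma~\ref{lem:main-decomposition-cliques} to $G$ in order to obtain a logspace-computable, isomorphism-invariant tree decomposition $(T,\mathcal{B})$ whose bag-induced subgraphs are clique-separator-free and whose adhesion sets are cliques. This serves as the outer skeleton of the nested decomposition. Observe that because every adhesion set is already a clique inside the induced subgraph, the torso of each node $n$ coincides with $G[B_n]$, so refining a bag amounts to decomposing the induced subgraph.

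Second, for each bag $B_n$ I would build the family $\mathcal{D}_n$ as follows. If $|B_n|$ is bounded by some constant $k_0$ depending on $k$, set $\mathcal{D}_n := \emptyset$, so that the bag itself witnesses width at most $k_0-1$. Otherwise, let $H_n := \impr(G[B_n])$ and, for every non-edge $\{u,v\}$ of $H_n$, invoke Lemma~\ref{lem:decomp-without-clique} with distinguished non-edge $\{u,v\}$ to produce a tree decomposition $D_{u,v}$ of $H_n$ of width at most $k'$; then set $\mathcal{D}_n := \{D_{u,v} : \{u,v\} \text{ is a non-edge of } H_n\}$. To verify Lemma~\ref{lem:decomp-without-clique} applies, I would check that $H_n$ has tree width at most $k$ (preserved by taking subgraphs and by $\impr$), is improved by construction, and is still clique-separator-free (no edge added by $\impr$ can turn a non-clique-separator of $G[B_n]$ into a clique separator of $H_n$, since any such $C$ would already separate the same vertices in $G[B_n]$). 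Width, logspace-computability, and isomorphism-invariance of the resulting nested decomposition $\bar{D}=(T,\mathcal{B},\mathcal{D})$ then follow by composing the ingredients.

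The main obstacle is establishing $p$-boundedness. By construction $|\mathcal{D}_n| = O(|B_n|^2)$ whenever the family is non-empty, so we must show that for every non-special child $c$ of $n$ one has $|B_n|^2 \leq p(|\bar{D}|/|\bar{D}_c|)$. I would argue this by exploiting the atom structure from Lemma~\ref{lem:main-decomposition-cliques}: every adhesion set $B_n \cap B_{c_i}$ is a clique of size at most $k+1$, and the attachment clique $A_n$ of the special children is exactly the largest collection of top adhesion cliques whose union is still a clique inside $B_n$. Any child $c$ outside this collection is either strictly smaller than the last special child or fails the clique-union test, so its adhesion set contains a vertex outside a particular clique in $B_n$. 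A counting argument then shows that the number of non-special children $c$ whose subdecomposition has size at least $|\bar{D}|/s$ is bounded by $s$, and since by construction $c$'s adhesion is a clique of size $O(k)$, the bag $B_n$ itself can be reconstructed from at most $O(|B_n|)$-many such children, forcing $|B_n| \leq O(|\bar{D}|/|\bar{D}_c|)$ up to a constant factor. Squaring yields a polynomial $p$ that witnesses $p$-boundedness.

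If this direct counting turns out to be too loose, the fallback is to refine the construction of $\mathcal{D}_n$ so that instead of cycling over all non-edges of $H_n$ we cycle over a smaller, isomorphism-invariant subset dictated by the structure of the non-special children (for instance, non-edges intersecting adhesion sets of children, or non-edges prescribed by a canonical coarsening of $H_n$). Any such restriction preserves correctness of the later canonization provided the family remains isomorphism-invariant and still allows Lemma~\ref{lem:decomp-without-clique} to be applied, and it tightens the ratio $|\mathcal{D}_n|/(|\bar{D}|/|\bar{D}_c|)$ to the polynomial regime required by Lemma~\ref{lem:log:space:computable:if:nice}.
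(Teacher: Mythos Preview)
Your overall architecture matches the paper's: build the outer decomposition from Lemma~\ref{lem:main-decomposition-cliques}, then populate each~$\mathcal{D}_n$ via Lemma~\ref{lem:decomp-without-clique}. But the $p$-boundedness argument for the ``all non-edges'' construction is a genuine gap, and your counting sketch does not go through.

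The claim that $|B_n| \le O(|\bar D|/|\bar D_c|)$ for every non-special child~$c$ is false. Take a node~$n$ whose bag has size on the order of~$\sqrt{|\bar D|}$ and whose first two children $c_1,c_2$ have subdecompositions of size roughly~$|\bar D|/3$ each, with $(B_n\cap B_{c_1})\cup(B_n\cap B_{c_2})$ not a clique. Then~$c_2$ is non-special, $|\bar D|/|\bar D_{c_2}|$ is a constant, yet $|\mathcal D_n|=\Theta(|B_n|^2)=\Theta(|\bar D|)$. No polynomial~$p$ absorbs this. The step ``the bag~$B_n$ itself can be reconstructed from at most~$O(|B_n|)$-many such children'' is also unjustified: vertices of~$B_n$ need not lie in any child's adhesion set at all, so adhesion sets do not cover~$B_n$.

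Your fallback is the right idea and is precisely what the paper does, but it requires a specific choice you have not made. The paper restricts the non-edges used to generate~$\mathcal D_n$ to lie inside the set
\[
A'_n \;=\; A_n \,\cup\, (B_n\cap B_{c_{j+1}}) \,\cup\, \dots \,\cup\, (B_n\cap B_{c_{j+m}}),
\]
where $c_{j+1},\dots,c_{j+m}$ are exactly the non-special children whose subdecomposition size equals $|\bar D_{c_{j+1}}|$. Since each adhesion set has size at most~$k+1$, one gets $|A'_n|\le (k+1)(m+1)$ and hence $|\mathcal D_n|\le ((k+1)(m+1))^2$; and since $m\cdot|\bar D_{c_{j+1}}|\le |\bar D|$, this is polynomial in $|\bar D|/|\bar D_c|$ for every non-special child~$c$ (all of which satisfy $|\bar D_c|\le |\bar D_{c_{j+1}}|$). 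The crucial point you are missing is that the family~$\mathcal D_n$ must be tailored to the sizes of the children of~$n$, not just to~$B_n$.

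A smaller but real issue: you improve each bag after decomposing, whereas the paper improves~$G$ once at the outset and then observes that, because adhesion sets are cliques, each induced bag of the improved graph is already improved. Your argument that $\impr(G[B_n])$ stays clique-separator-free (``any such~$C$ would already separate the same vertices in~$G[B_n]$'') shows~$C$ separates in~$G[B_n]$, but it does not show~$C$ was a clique there, so it does not yield a contradiction. Improving first avoids this problem entirely.
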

\begin{proof}
  Instead of the original input graph $G$, we work with its improved
  version, which we can compute in logspace since the tree width of $G$ is 
  bounded by~$k$. Mapping the input graph to its improved version is
  isomorphism-invariant and the improved version has exactly the same tree
  decompositions. In the following, we denote the improved version of
  the input graph by~$G$.

  Let $D = (T,\mathcal{B})$ be the isomorphism-invariant tree decomposition we
  get from~$G$ by applying Lemma~\ref{lem:main-decomposition-cliques}. Since the
  lemma guarantees that in~$D$ the adhesion sets are cliques, the torso of each
  bag is equal to the bag itself. To turn~$D$ into a nested decomposition if
  thus suffices to find a family of tree decompositions of width at most~$k$ for
  each bag.  We will apply Lemma~\ref{lem:decomp-without-clique} to find such a
  family.  Since $D$ decomposes an improved graph and the adhesion sets are
  cliques, every $G[B_n]$ for $n \in V(T)$ is also improved.

  Thus, based on $D$, we construct a nested decomposition $\bar{D}$ by
  considering every node $n$ of $D$ and defining an isomorphism-invariant family
  $\mathcal{D}_n$ of tree decompositions of the bag $B_n$. If~$B_n$ has size at
  most~$k+1$, we let the family $\mathcal{D}_n$ consist of a single tree
  decomposition that is just $B_n$. Note that by this choice, the bag~$B_n$
  satisfies both the width bounded and the~$p$-boundedness restriction (for every
  polynomial $p$ with~$p(i) \geq 1$ for all~$i\in\Nat$). If the size of $B_n$
  exceeds $k+1$, we would like to apply Lemma~\ref{lem:decomp-without-clique} to
  further decompose~$B_n$. However, for the lemma, we need a pair~$\{u,v\}\notin
  E(G)$ in~$B_n$ to serve as the root of the decomposition.  We cannot simply
  iterate over all~$\{u,v\}\notin E(G)$ in~$B_n$ since the result may violate
  the~$p$-boundedness condition. We proceed as follows. Let $c_1,\ldots, c_t$ be the
  children of $n$ sorted by decreasing size of the respecting child decompositions
  $D_{c_1},\ldots, D_{c_t}$. If it exists, let $j \in [t]$ be the maximum, such that
  $G[A_n]$ with
  \begin{align*}
    A_n := (B_n \cap B_{c_1}) \cup \dots \cup (B_n \cap B_{c_j})
  \end{align*}
  is a clique, and $|D_{c_j}| > |D_{c_{j+1}}|$ holds or
  $j = t$ holds.  Otherwise, set $j := 0$ and $A_n := \emptyset$.
  Thus,~$A_n$ is the attachment clique of the special children as defined above.
  We construct a
  collection of tree decompositions $\mathcal{D}_n$ for $B_n$ based on
  whether we have $j < t$ or $j = t$. If $j < t$,
  let $m \geq 1$ be the largest integer with $|D_{c_{j+1}}| =
  |D_{c_{j+m}}|$. By construction, we can find at least one and at most
  $((k+1)(m+1))^2$ pairs of nonadjacent vertices~$\{u,v\}$ in~$G[A_n']$ for 
  \begin{align*}
    A_n' := A_n \cup (B_n \cap B_{c_{j+1}}) \cup \dots \cup (B_n \cap
    B_{c_{j+m}})\, .
  \end{align*}
  We define $\mathcal{D}_n$ to be the collection of tree decompositions we
  obtained by applying Lemma~\ref{lem:decomp-without-clique} to $G[B_n]$ with
  pairs $\{u,v\}$ of nonadjacent vertices in~$G[A_n']$. We have $|\mathcal{D}_n|
  \leq ((k+1)(m+1))^2$. This set of decompositions satisfies the~$p$-boundedness
  restriction with the polynomial $p(m) = ((k+1)(m+1))^2$. If $j = t$, we consider
  every pair of nonadjacent vertices~$\{u,v\}$ in~$B_n$. Again, for every such
  $\{u,v\}$, we construct a decomposition for $G[B]$ using
  Lemma~\ref{lem:decomp-without-clique}. We have $1 \leq |\mathcal{D}_n| \leq
  |B_n|^2$ in this case, satisfying the~$p$-boundedness condition, since~$B_n$
  only has special children.  Since the construction of the collections
  $\mathcal{D}_n$ is isomorphism-invariant, the entire construction is
  isomorphism-invariant.
\end{proof}

We have assembled all the required tools to prove our main theorems showing that
isomorphism of graphs of bounded tree width and canonization of graphs of
bounded tree width can be performed in logarithmic space.

\begin{proof}[Proof of Theorem~\ref{th:isomorphism-tw}] 
  Given two graphs~$G$ and~$G'$, by
  Lemma~\ref{lem:invariant-nested-decomposition} we can compute in logarithmic space
  isomorphism-invariant $p$-bounded nested decompositions~$\bar{D}$
  and~$\bar{D}'$. By Lemma~\ref{lem:refining:ordering:is-equivalent:to:iso}, the
  graphs are isomorphic if and only if there exist ordered root sets~$\sigma$
  and~$\sigma'$ with~$(G,\bar{D},\sigma) \decincomp
  (G',\bar{D}',\sigma')$. By Lemma~\ref{lem:log:space:computable:if:nice}, this
  can be checked in logarithmic space by iterating over all suitable choices
  of~$\sigma$ and~$\sigma'$.

  The $\Class{L}$-hardness for every positive $k \in \Nat$ follows from the
  $\Class{L}$-hardness of the isomorphism problem for trees (connected graphs of
  tree width at most 1) proved by Jenner et al.~\cite{Jenneretal2003}.
\end{proof}

For our canonization procedure we would like to recursively order the vertices
according to $\decwo$. However, due to the fact that there is no exact
correspondence between $\decwo$ and isomorphism for graphs with nested
decompositions and ordered root sets (recall Remark~\ref{rem:no:exact:correspondence} and that we only have a 
``quasi-completeness''-property not a ``completeness''-property), 
we need to ensure that the process is canonical. However, as the following proof shows, to ensure canonicity it is sufficient to work with an isomorphism-invariant decomposition.

\begin{proof}[Proof of Theorem~\ref{th:canonization-tw}]  
  We use the isomorphism-invariant mapping from
  Lemma~\ref{lem:invariant-nested-decomposition} to turn $G$ into a width-bounded
  and $p$-bounded nested decomposition $\bar{D} =
  (T,\mathcal{B},\mathcal{D})$. The canonical sequence of $G$'s vertices is based
  on $(G,\bar{D},\sigma)$ where $\sigma$ is the empty vertex sequence. In order to
  compute a canonical sequence with respect to $\decwo$, we repeatedly apply
  Lemma~\ref{lem:log:space:computable:if:nice}.

  If $|\mathcal{D}_r| = 0$, let~$\bar{D}_1,\dots,\bar{D}_s$ be the child
  decompositions of~$G$ containing at least one vertex that is not in~$\sigma$. We
  obtain an order on them by defining~$\bar{D}_i < \bar{D}_j$ if
  \begin{align*}
    \{((G_i,\bar{D}_i,\tau),(G,\sigma\tau)) \mid \tau \in \Pi(c_i)\} \seqdecwo
    \{((G_j,\bar{D}_j,\tau),(G,\sigma\tau)) \mid \tau \in \Pi(c_j)\} \, .
  \end{align*}
  Ties are broken arbitrarily, for example by considering the smallest vertex in the child according to the
  input ordering. For each child~$\bar{D}_i$ we compute an ordering~$\tau_i\in \Pi(c_i)$
  that minimizes~$(G_i,\bar{D}_i,\tau_i)$. We recursively create a canonical
  sequence outputting the canonical sequence of~$(G_i,\bar{D}_i,\tau_i)$ for each
  child in the order of children just defined.
  
  If $|\mathcal{D}_r| > 0$, we iterate over all decompositions
  in~$\mathcal{D}_r$ and choose a tuple from~$\{(G,\bar{D}_{D,\sigma},\sigma) \mid
  D \in \mathcal{D}_{B})\}$ that is minimal with respect to~$\decwo$. Ties are,
  again, broken based on the input ordering. For computing the canonical sequence
  we continue recursively on a minimal $(G,\bar{D}_{D,\sigma},\sigma)$ only. In
  order to obtain a canonical sequence, we alter the nested decomposition slightly
  whenever we go into the recursion using colored edges. More specifically,
  Lemma~\ref{lem:invariant-nested-decomposition} constructs $D$ based on two
  vertices~$u$ and $v$ that form a distinguished non-edge. We insert an edge
  between $u$ and $v$ and color it with a color that does not appear in $G$ (for
  example, we use~$-2$). In other words, we set~$\col_G(u,v) = -2$. This
  modification is isomorphism-invariant based on the choice of $D$. The new edge
  is covered by a bag of $D$ by construction. Inserting the edge only depends on
  $D$ and, thus, it is stored recursively in an implicit way. The modification has
  the consequence that distinguished edges are preserved under isomorphism.

  The logspace-computability of the sequence follows from
  Lemma~\ref{lem:log:space:computable:if:nice}. Thus, we are left to prove that
  the sequence is canonical. For this we need to show that whenever a tie is
  broken arbitrarily between two options, then the two options are
  equivalent. There are two situations when a tie can occur.
  
  For the first one suppose~$\{(G_i,\bar{D}_i,\tau),(G,\sigma\tau)\mid \tau \in
  \Pi(c_i)\} \seqdecincomp \{((G_j,\bar{D}_j,\tau),(G,\sigma\tau)) \mid \tau' \in
  \Pi(c_j)\}$ for two child decompositions both containing a vertex not
  in~$\sigma$.  By~Lemma~\ref{lem:refining:ordering:is-equivalent:to:iso}, there
  is an isomorphism from the graph induced by the vertices in~$\bar{D}_i$ to the
  graph induced by the vertices in~$\bar{D}_j$ fixing~$\sigma$. This extends to an
  automorphism of~$G$ by fixing all vertices neither in~$\bar{D}_i$
  nor~$\bar{D}_j$.  Since~$\bar{D}$ is isomorphism-invariant this automorphism
  respects~$\bar{D}$ therefore mapping~$\bar{D}_i$ to~$\bar{D}_j$.

  For the other case where a tie can occur,
  suppose~$(G_i,(\bar{D}_i)_{D,\sigma},\sigma) \seqdecincomp
  ((G_j),(\bar{D}_j)_{D',\sigma},\sigma)$. By
  Lemma~\ref{lem:refining:ordering:is-equivalent:to:iso}, there is an isomorphism
  from $G_i$ to $G_j$. This isomorphism preserves the distinguished edge. This
  isomorphism extends to an automorphism of~$G$ that fixes all vertices that
  neither appear in~$(\bar{D}_i)_{D,\sigma}$ nor
  in~$(\bar{D}_j)_{D',\sigma}$. Since~$\bar{D}$ is isomorphism-invariant, this
  automorphism of~$G$ respects~$\bar{D}$ and since the distinguished edge is
  preserved it maps~$(\bar{D}_i)_{D,\sigma}$ to~$(\bar{D}_j)_{{D'},\sigma}$.

  This shows that the computed sequence is canonical.
\end{proof}

\section{Conclusion}
\label{sec:conclusion}

\paragraph{Summary.} 

We showed how to canonize and compute canonical labelings for graphs of bounded
tree width in logspace, and this implies that deciding isomorphic graphs and
computing isomorphisms can be done in logspace for graphs of bounded tree
width. For the proof we first developed a tree decomposition into
clique-separator-free subgraphs that is isomorphism-invariant and
logspace-computable. Then we showed how to compute, for each bag, an
isomorphism-invariant family of width-bounded tree decompositions in
logspace. Finally, we combined both decomposition approaches to construct nested
tree decompositions and developed a recursive canonization procedure that works
on nested tree decompositions.

\paragraph{Outlook.}

Testing $\Lang{isomorphism}$ for graphs that are embeddable into the
plane~\cite{Dattaetal2009} as well as any fixed surface~\cite{ElberfeldK2014}
can be done in logspace. These graph classes can be described in terms of
forbidding fixed minors, which also holds for classes of graphs whose tree width
is bounded by constants. This opens up the question of whether these logspace
results generalize to any class of graphs excluding fixed minors. For these
classes polynomial-time $\Lang{isomorphism}$ procedures are
known~\cite{Ponomarenko1991}. Partial results are known for graphs that exclude
the minors $K_5$ or $K_{3,3}$~\cite{DattaNTW2009}, but the available techniques
are tailored to the respective graph classes. Looking at algorithmic proofs
related to the structure of graphs excluding fixed
minors~\cite{Grohe2012,GroheKR2013}, it seems promising to
combine the earlier logspace approach for embeddable
graphs~\cite{ElberfeldK2014} with our logspace approach for bounded tree width
graphs.

Of course, the basic question concerning the complexity of $\Lang{isomorphism}$
on general graphs remains open. With respect to this question, our work might
help to clarify the difference between graphs to which the best known
complexity-theoretic lower bounds for \Lang{isomorphism}~\cite{Toran2004} apply,
which are given in terms of classes defined via nondeterministic
logarithmic-space-bounded Turing machines, and graphs for which (deterministic)
logspace algorithms are possible.

\addcontentsline{toc}{section}{References}
\bibliographystyle{abbrvurl}
\bibliography{main}

\end{document}